\newtheorem{theorem}{{Theorem}}[section]
\newtheorem{proposition}[theorem]{Proposition}
\newtheorem{lemma}[theorem]{Lemma}
\newtheorem{definition}[theorem]{Definition}
\newtheorem{example}[theorem]{{Example}}
\newtheorem{corollary}[theorem]{{Corollary}}
\DeclareMathOperator*{\argmax}{\arg\!\max}
\DeclareMathOperator{\E}{E}
\newcommand{\nature}{\lambda}
\newcommand{\measure}{\mu}
\renewcommand{\epsilon}{\varepsilon}
\newif\ifcomments
\title{Sequential Cursed Equilibrium\thanks{We thank Vince Crawford, Ignacio Esponda, Xavier Gabaix, Yannai Gonczarowski, Philippe Jehiel, Scott Kominers, David Laibson, Muriel Niederle, Matthew Rabin, and Ran Spiegler for helpful comments. We thank Meng-Jhang Fong, Po-Hsuan Lin, and Thomas Palfrey for correspondence that helped clarify the difference between their work and ours. We thank Arjun Nageswaran for research assistance. All errors remain our own.}}
\author{Shani Cohen\thanks{Harvard University. Email: \url{shani_cohen@g.harvard.edu}} \hspace{10mm} Shengwu Li\thanks{Harvard University. Email: \url{shengwu_li@fas.harvard.edu}}}
\date{First posted: December 13, 2022 \\ Current version: \today}
\begin{document}

\maketitle

\begin{abstract}
    We propose an extensive-form solution concept, with players that neglect information from hypothetical events, but make inferences from observed events. Our concept modifies cursed equilibrium \citep{eyster2005cursed}, and allows that players can be cursed about endogenous information.
\end{abstract}

\section{Introduction}\label{sec:introduction}

Standard game theory posits that players look before they leap. When voting, they condition on their vote being pivotal. When placing a limit order for an asset, they condition on the limit price being met. When bidding in an auction, they condition on their bid being the highest. Generally, if it would be informative to observe an event, then players account for that information when thinking hypothetically about that event. \cite{savage1972foundations} writes,
\begin{quotation}
    In view of the ``look before you leap" principle, acts and decisions, like events, are timeless. The person decides ``now" once for all; there is nothing for him to wait for, because his one decision provides for all contingencies.
\end{quotation}
However, Savage concedes that, carried to its logical extreme, ``the task implied by making such a decision is not even remotely resembled by human possibility."

In practice, people often do not look before they leap. Instead, they neglect the implications of hypothetical events, but make inferences based on observed events---they fail to condition on the contingency that their vote is pivotal, that their limit order is filled, or that their bid is the highest, but they respond appropriately if the relevant uncertainty is resolved \citep{esponda2014hypothetical, martinez2019failures, moser2019hypothetical, ngangoue2021learning,niederle2023}. Similarly, lab experiments find that, under common values, dynamic auctions lead to less cursed behavior than sealed-bid auctions, because observing the behavior of other bidders helps subjects to reason about what the other bidders know \citep{levin1996revenue,levin2016separating}.

We propose a solution concept with players who look after they leap. That is, they learn from observing other players' actions, but overlook this information when thinking hypothetically.

Our theory builds on cursed equilibrium, a canonical solution concept for Bayesian normal-form games \citep{eyster2005cursed}. Under cursed equilibrium, players neglect the link between what other players know and what other players do. Suppose that we have a Bayesian game; nature draws a payoff-relevant state, and each player observes a partition of the state space. Fix a strategy profile. For each type of each player, there is a conditional joint distribution on opponent actions and the state. Suppose that each player understands the marginal distribution of opponent actions and the marginal distribution of the state, conditional on their own type, but responds as if these distributions are independent of each other. A strategy profile is a \textbf{cursed equilibrium (CE)} if it is a fixed point of the resulting best-response correspondence.

To fix ideas, consider the following trading game from \cite{eyster2005cursed}.  Two players decide simultaneously whether to trade; each can accept or decline. Trade occurs if and only if both accept. There are three states of the world $\{\omega_1, \omega_2, \omega_3\}$, each with prior probability $\frac{1}{3}$. \Cref{tab:trading_payoffs} specifies the payoffs and each player's information.
\begin{table}[h]
\captionsetup{width=.75\textwidth}
\caption{Payoffs and types for the trading game.}\label{tab:trading_payoffs}
\begin{center}
\begin{tabular}{l| ccc}
         & $\omega_1$ & $\omega_2$            & $\omega_3$ \\
\hline
trade    & $3, -3$    & $1, -1$ & $-3, 3$    \\
no trade & $0,0$      & $0,0$                 & $0,0$     \\
\hline
player $1$ types  &            \multicolumn{2}{c|}{$t_1$} & \multicolumn{1}{c|}{$t'_1$} \\
\cline{2-4}
player $2$ types  &            \multicolumn{1}{c|}{$t_2$} & \multicolumn{2}{c|}{$t'_2$} \\
\cline{2-4}
\end{tabular}
\end{center}
\end{table}

This trading game is dominance-solvable. Player $1$ wants to trade in states $\omega_1$ and $\omega_2$, but not in $\omega_3$. Thus, it is a dominant strategy for type $t_1$ to accept and for type $t'_1$ to decline. Player $2$ only wants to trade in state $\omega_3$, so it is a dominant strategy for type $t_2$ to decline. Now the last step requires type $t'_2$ to look before they leap: Conditional on player $1$ accepting, the state is in $\{\omega_1,\omega_2\}$ for sure, so type $t'_2$ should decline. In every Bayesian Nash equilibrium (BNE) of this game, there is no trade.

By contrast, trade can arise in cursed equilibrium. In one CE, type $t_1$ accepts, type $t'_1$ declines, and type $t_2$ declines, just as before. Suppose player $2$'s type is $t'_2$. Conditional on $t'_2$, player $1$ accepts with probability $\frac{1}{2}$ and declines with probability $\frac{1}{2}$, and the state is $\omega_2$ with probability $\frac{1}{2}$ and $\omega_3$ with probability $\frac{1}{2}$. Treating these as independent, player $2$'s payoff from accepting is $-1$ with probability $\frac{1}{4}$, $3$ with probability $\frac{1}{4}$, and $0$ with probability $\frac{1}{2}$. Thus, type $t'_2$ accepts, resulting in trade in state $\omega_2$. Moreover, this is the unique CE in undominated strategies. Lab experiments find high rates of trade in similar dominance-solvable games \citep{sonsino2002likelihood,sovik2009strength,rogers2009heterogeneous,brocas2014imperfect}.

However, cursed equilibrium players do not learn from observed events---they do not look even after they leap. 
To illustrate, we modify the trading game so that the players decide in sequence. Player $1$ moves first. If player $1$ declines, then the game ends, and if player $1$ accepts, then player $2$ observes $1$'s action and chooses to accept or decline. \Cref{fig:trading_seq} depicts the extensive game.

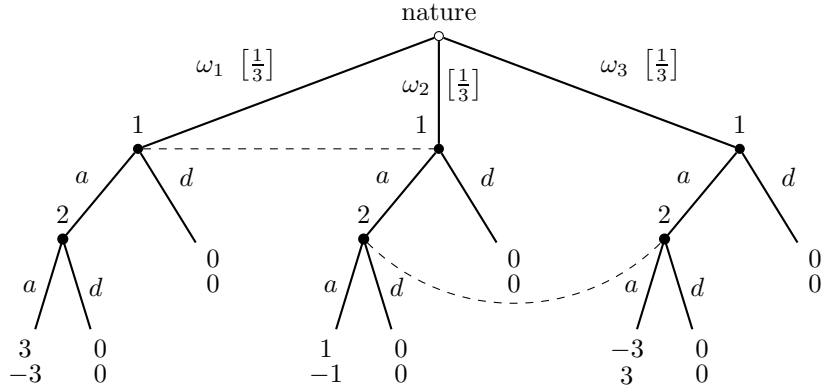
\begin{figure}[h]
    \centering
\begin{tikzpicture}[font=\footnotesize,edge from parent/.style={draw,thick}]
    \tikzstyle{solid}=[circle,draw,inner sep=1.2,fill=black];
    \tikzstyle{hollow}=[circle,draw,inner sep=1.2]
    \tikzstyle{level 1}=[level distance=15mm,sibling distance=40mm]
    \tikzstyle{level 2}=[level distance=12mm,sibling distance=20mm]
    \tikzstyle{level 3}=[level distance=12mm,sibling distance=10mm]
     \node(0)[hollow]{}
     child{node[solid]{}
       child{node[solid]{}
        child{node[below]{\stackon{$-3$}{$3$}} edge from parent node[left]{$a$}}
        child{node[below]{\stackon{$0$}{$0$}} edge from parent node[right]{$d$}}
        edge from parent node[above left]{$a$}
        }
       child{node[below]{\stackon{$0$}{$0$}} 
        edge from parent node[above right]{$d$}
        }
    edge from parent node[above left]{$\omega_1 \ \left[\frac{1}{3}\right]$}
     }
     child{node[solid]{}
       child{node[solid]{}
        child{node[below]{\stackon{$-1$}{$1$}} edge from parent node[left]{$a$}}
        child{node[below]{\stackon{$0$}{$0$}} edge from parent node[right]{$d$}}
        edge from parent node[above left]{$a$}
        }
       child{node[below]{\stackon{$0$}{$0$}} 
        edge from parent node[above right]{$d$}
        }
    edge from parent node[xshift=.5mm,yshift=1mm]{$\omega_2 \ \left[\frac{1}{3}\right]$}
     }
     child{node[solid]{}
       child{node[solid]{}
        child{node[below]{\stackon{$3$}{$-3$}} edge from parent node[left]{$a$}}
        child{node[below]{\stackon{$0$}{$0$}} edge from parent node[right]{$d$}}
        edge from parent node[above left]{$a$}
        }
       child{node[below]{\stackon{$0$}{$0$}} 
        edge from parent node[above right]{$d$}
        } 
    edge from parent node[above right]{$\omega_3 \ \left[\frac{1}{3}\right]$}
     }
     ;
     \draw[dashed](0-1)to(0-2);
     \draw[dashed,bend right=45](0-2-1)to(0-3-1);
     \node[above,yshift=2]at(0){nature}; \node[above,yshift=2]at(0-1){1};
     \node[above left,yshift=2]at(0-2){1};
     \node[above,yshift=2]at(0-3){1};
     \foreach \x in {1,2,3} \node[above,yshift=2]at(0-\x-1){2};
\end{tikzpicture}
    \caption{The sequential trading game}
    \label{fig:trading_seq}
\end{figure}

Suppose that player $1$ plays their dominant strategy, accepting at type $t_1 = \{\omega_1,\omega_2\}$ and declining at type $t_2 = \{\omega_3\}$. Suppose that player $2$ has type $t'_2 = \{\omega_2, \omega_3\}$ and sees that player $1$ accepts. Observing player $1$'s action reduces the uncertainty for player 2. Player $2$ can condition on that action, and it is intuitive that this helps them to realize that the state is not $\omega_3$---after all, if the state was $\omega_3$, then player $1$ would have declined for sure. Plausibly, therefore, player $2$ will infer that the state is $\omega_2$, and decline to trade. However, this is not the cursed equilibrium prediction, because the sequential trading game has the same Bayesian normal form as the simultaneous game, and hence the same CE with trade in state $\omega_2$.

\cite{eyster2005cursed} argue that the cursed equilibrium prediction is implausible in such cases, writing:
\begin{quotation}
[\ldots] players are probably more likely to ignore the informational content of other players’ actions when they have not actually observed these actions than when they have; observing actions seems likely to induce more strategic sophistication. Hence, players in certain sequential games may be less cursed than they would be in corresponding simultaneous-move games.
\end{quotation}

Another limitation of cursed equilibrium is that players are cursed about types, but not about endogenous information. Suppose that we modify the simultaneous trading game by replacing nature with player $0$, who plays first with action set $A_0 = \{{\omega}_1, {\omega}_2, {\omega}_3\}$. Player $0$'s action determines the payoffs from trade. Player $0$ has constant payoffs, and is thus willing to play each action with probability $\frac{1}{3}$. Players $1$ and $2$ learn about player $0$'s action, with the same partition as in \Cref{tab:trading_payoffs}. Intuitively, replacing nature with a fictitious player should not change behavior---but under cursed equilibrium, it does. In the Bayesian normal form of the game with the fictitious player, every player's type set is singleton, so CE coincides with BNE.\footnote{In the Bayesian normal form, player $1$ has four actions $A_1 = \left\{a_{\omega_1,\omega_2}, d_{\omega_1,\omega_2}\right\} \times \left\{a_{\omega_3}, d_{\omega_3}\right\}$ and similarly $A_2 = \left\{a_{\omega_1}, d_{\omega_1}\right\} \times \left\{a_{\omega_2,\omega_3}, d_{\omega_2,\omega_3}\right\}$.} Thus, in every CE there is no trade. \cite{eyster2005cursed} point out this limitation, writing:
\begin{quotation}
Treating “exogenous” and “endogenous” private information differently not only seems to us intuitively and psychologically wrong, but also creates some highly artificial differences in predictions based on the way a game is formally written down.
\end{quotation}

We adapt cursed equilibrium to extensive games, so that players neglect the implications of hypothetical events, but make inferences based on observed events. In essence, we modify cursed equilibrium as follows:
\begin{quotation}
    At each \sout{type} \textit{information set}, the player understands the marginal distributions of the other players' actions and of \sout{the state} \textit{nature moves}, conditional on that \sout{type} \textit{information set}, but responds as if the other players’ actions are independent of their \sout{types} \textit{information sets}.
\end{quotation}
This approach draws no distinction between exogenous and endogenous information.

What does it mean for actions to be `independent of information sets'? Given a game tree, only some partitions of non-terminal histories can be information sets for an extensive game of perfect recall. There exists a coarsest such partition, that we denote $\mathcal{F}$. A set of histories is a \textbf{coarse set} if it is a cell of $\mathcal{F}$. For instance, in \Cref{fig:trading_seq}, the union of player $1$'s information sets is a coarse set and the union of player $2$'s information sets is a coarse set. (Coarse sets containing histories assigned to nature are singleton.) By construction, all histories in a coarse set have the same player and the same available actions. A strategy is \textbf{coarse} if it plays the same way at all histories in the same coarse set. We require that each player responds as if their opponents are playing coarse strategies.

What does it mean to `condition on' an information set? Fix some strategy profile $\sigma_N$, where $N$ denotes the set of players. At information set $I$, the active player forms a conjecture about another player's (or nature's) behavior at coarse set $F \in \mathcal{F}$. We require that the conjecture at $I$ about $F$ matches the empirical frequency of actions at $F$ under $\sigma_N$, conditional on the event that $I$ is reached and $F$ is reached. We call the tuple consisting of a strategy and a system of conjectures an \textbf{assessment}, and say that an assessment is \textbf{cursed-plausible} if this requirement holds `whenever possible', that is, whenever the relevant event has positive probability.

Cursed equilibrium takes the marginal distribution over the state and over opponent actions, conditional on the type. By contrast, cursed plausibility updates these marginal distributions to match the empirical frequencies at each information set. Thus, when one player observes another player's actions, they infer the information signalled by those actions.

To illustrate, suppose that player $1$ accepts at type $t_1 = \{\omega_1, \omega_2\}$ and declines at type $t'_1 = \{ \omega_3 \}$. In the simultaneous trading game (\Cref{tab:trading_payoffs}), the cursed-plausible conjecture at the information set corresponding to type $t'_2 = \{\omega_2,\omega_3\}$ is that nature plays $\left(\omega_2 \ \frac{1}{2},\omega_3 \ \frac{1}{2}\right)$ and that player $1$ plays the coarse strategy $\left(a \ \frac{1}{2},d \ \frac{1}{2}\right)$. By contrast, in the sequential trading game (\Cref{fig:trading_seq}), the cursed-plausible conjecture at the information set $\{\omega_2  a, \omega_3  a \}$ is that nature plays $\omega_2$ for sure and that player $1$ plays $a$ for sure. In this way, cursed plausibility implies that players neglect information from unobserved (or future) events, but make rational inferences from observed events. 

However, cursed-plausibility sometimes does not pin down all the conjectures. Returning to \Cref{fig:trading_seq}, suppose instead that player $1$ plays $d$ at information set $I = \{\omega_1,\omega_2\}$ and player $2$ plays $d$ at $\{\omega_1 a\}$ and $a$ at $\{\omega_2 a, \omega_3 a\}$. Then the event that $I$ is reached and player $2$'s coarse set $F=\{\omega_1 a, \omega_2 a, \omega_3 a\}$ is reached has zero probability. Thus, cursed-plausibility does not restrict what player $1$ believes at $I$ about the probability that $2$ plays $a$ at $F$. The problem is that player $1$'s conjecture about player $2$'s behavior should be tied to actual behavior.

We can use trembles to pin down conditional probabilities, as in sequential equilibrium \citep{kreps1982sequential}. Specifically, an assessment is \textbf{cursed-consistent} with strategy profile $\sigma_N$ if it is the limit of the cursed-plausible assessments induced by a sequence of fully mixed strategy profiles $(\sigma_N^k)_{k = 1}^\infty$ that converges to $\sigma_N$. Cursed-consistency implies cursed-plausibility.\footnote{In the preceding example, at information set $I = \{\omega_1,\omega_2\}$, any cursed-consistent conjecture for player $1$ specifies that $2$ plays $a$ with probability $\frac{1}{2}$ at coarse set $F=\{\omega_1 a, \omega_2 a, \omega_3 a\}$.}

An assessment is \textbf{locally rational} if for each player $n$ and each information set $I$ at which $n$ is called to play, the player's strategy and self-conjecture coincide on $I$, and the player's self-conjecture maximizes their continuation payoff given the conjectures about others.

A \textbf{sequential cursed equilibrium (SCE)} is an assessment that is locally rational and cursed-consistent.

For the sake of comparison, let us define \textbf{independently cursed equilibrium (ICE)} to be as in CE, but with players responding as if the opponents' actions are independent of \textit{each other}.\footnote{Recall that CE takes the marginal over opponent action profiles, so players understand the correlations between different opponents' actions.} ICE and SCE are equivalent on simultaneous Bayesian games. As a corollary, CE and SCE are equivalent on two-player simultaneous Bayesian games. Thus, the simultaneous trading game has a unique SCE in undominated strategies, with trade in state $\omega_2$.

SCE players neglect the implications of hypothetical events, but make inferences from observed events. For instance, SCE predicts no trade in the sequential trading game (\Cref{fig:trading_seq}). If player $1$ accepts at $\{\omega_1, \omega_2\}$ and declines at $\{\omega_3\}$, then at player $2$'s information set $\{\omega_2 a, \omega_3 a\}$, cursed-consistency implies that player $2$ believes that nature has played $\omega_2$ for sure. Thus, upon observing that player $1$ accepts, player $2$ declines to trade.

By construction, SCE treats exogenous and endogenous information the same way; strategies are coarse over information sets, not just over types. For instance, the SCE prediction for the trading game is unchanged when nature is replaced by a fictitious player playing according to the same distribution.\footnote{The simultaneous trading game with a fictitious player has a SCE in which the player $0$ plays each action with probability $1/3$, and players $1$ and $2$ trade as before.} SCE permits cursedness about new kinds of information, such as information that one player has about other players, and information about the state that players chose to acquire.\footnote{For instance, SCE allows for cursedness in auctions with costly entry. In such models, bidders choose whether to enter, and the bidders who enter get estimates of the object's value \citep{levin1994equilibrium}.}

We compare the predictions of BNE, CE, and SCE in \Cref{tab:example_comparison}, rows $1$ to $3$.

\begin{table}[h]
\captionsetup{width=.75\textwidth}
\caption{Predictions for the trading games. $\checkmark$ indicates that there exists an equilibrium with a positive probability of trade. $\lozenge$ indicates that the solution concept is not defined for that game.}\label{tab:example_comparison}
\begin{center}
\begin{tabular}{l|ccc}
    & Simultaneous & Sequential & Fictitious player \\
    \hline
Bayesian Nash equilibrium &              &                &            \\
Cursed equilibrium &       $\checkmark$       &     $\checkmark$    &            \\
Sequential cursed equilibrium &      $\checkmark$        & &       $\checkmark$    \\
ABEE with analogy partition $\mathcal{F}$ &      $\checkmark$        & $\checkmark$ &       $\checkmark$    \\
$\chi$-CSE with $\chi \geq 1/2 $ &      $\checkmark$        &  $\checkmark$  &       $\lozenge$   \\
$\chi$-CSE with $\chi < 1/2 $ &             &      &       $\lozenge$   
\end{tabular}
\end{center}
\end{table}

SCE incorporates ideas from analogy-based expectation equilibrium (ABEE) \citep{jehiel2005analogy}. ABEE posits that players bundle histories into analogy classes, and best-respond to the weighted average of behavior in each analogy class. In principle, the novel partition $\mathcal{F}$ could define analogy classes for ABEE.\footnote{\cite{jehiel2005analogy}, \cite{jehiel2008revisiting}, and \cite{jehiel2022analogy} suggest various ways to pin down the analogy classes for ABEE, none of which correspond to the partition $\mathcal{F}$ proposed here.} However, under ABEE, the weighted average is proportional to the \textit{ex ante} probability, under the true strategy profile, that each history in the analogy class is reached. By contrast, SCE posits that players best-respond to the \textit{interim} distribution of behavior conditional on each information set. Thus, when SCE players receive new information about play, they change their beliefs to reflect the conditional averages. This feature is crucial for distinguishing hypothetical and observed events. For instance, ABEE predicts trade in all three versions of the trading game, as \Cref{tab:example_comparison} reports.

In contemporaneous work, \cite{FPP2023} proposed a different generalization of cursed equilibrium to dynamic games. Their solution concept, $\chi$-cursed sequential equilibrium ($\chi$-CSE), is defined for multi-stage games with observed actions.\footnote{This class rules out the trading game with the fictitious player, as well as auctions with entry.} $\chi$-CSE starts from the following premise: Cursed players in a Bayesian game neglect how other players' actions depend on other players' types. We can generalize this to dynamic games by having cursed players be slow to update about opponent types, after observing the public history.

Formally, under $\chi$-CSE, players respond as though the other players' actions are independent of the other players' types with probability $\chi \in [0,1]$, independently across stages. Thus, players treat the public history as less informative than it truly is.

$\chi$-CSE posits that players learn fully from exogenous information (their own type), but under-infer from endogenous information (the public history). By contrast, SCE posits that players learn from exogenous and endogenous information in the same way.

In some games, $\chi$-CSE and SCE make starkly different predictions. Under $\chi$-CSE, players do not learn differently from hypothetical versus observed events. For example, $\chi$-CSE predicts that the simultaneous and sequential trading games result in the same behavior, as \Cref{tab:example_comparison} reports. $\chi$-CSE is not a refinement of SCE, nor \textit{vice versa}. For instance, $\chi$-CSE diverges from SCE on the games studied in \Cref{sec:pivotal_voting} and \Cref{sec:learning_from_prices}.\footnote{$\chi$-CSE is not defined for the game in \Cref{sec:auctions_with_learning}.} It predicts that the experimental manipulations do not affect behavior. We contrast SCE and $\chi$-CSE in \Cref{app:SCE_CSE_diff}.\footnote{\cite{fong2023note} also compare $\chi$-CSE and SCE in detail.}


The paper proceeds as follows: We formally define sequential cursed equilibrium and state an existence result (\Cref{sec:def_existence}). We show that SCE reduces to ICE in simultaneous Bayesian games (\Cref{sec:simultaneous_bayesian}). We generalize SCE to allow partial cursedness (\Cref{sec:chi_psi_SCE}). We then study three experiments about hypothetical thinking that falsify both BNE and CE, and find that the treatment effects are consistent with SCE (\Cref{sec:applications}). We adapt the main ideas to dynamic auctions (\Cref{sec:dynamic_auctions}). We discuss limitations in \Cref{sec:discussion}. Proofs omitted from the main text are in \Cref{app:proofs}.

\section{Definitions and general results}

\subsection{Definition and existence of sequential cursed equilibrium}\label{sec:def_existence}

We use the standard formulation for extensive games \cite[p.\ 226-227]{mas1995microeconomic}. Throughout, we restrict attention to games of perfect recall. The set of players is denoted $N$ and the set of histories is denoted $H$, a tree ordered by binary precedence relation $\prec$. We write $h \preceq h'$ if $h \prec h'$ or $h$ and $h'$ are identical. The collection of information sets is denoted $\mathcal{I}$; these are a partition of the non-terminal histories. $\mathcal{I}_n$ denotes the collection of information sets assigned to player $n$. We represent objective uncertainty using a nature player $\nature$. As is standard, we require that nature's information sets, $I \in \mathcal{I}_\nature$, are singleton. \Cref{tab:notation} introduces and summarizes notation.

\begin{table}[h]
\caption{Notation for extensive games}
\label{tab:notation}
\begin{center}
\begin{tabular}{lcc}
\hline
\hline
Name                                    & Notation                                  & Representative Element \\
\hline
Set of players (excluding nature)       & $N$                                       & $n$                    \\
Nature                                  & $\nature$                                       &                        \\
Set of histories                        & $H$                                       & $h$                    \\
Order on histories                       & $\prec$                                       &                    \\
Set of terminal histories                        & $Z$                                       & $z$                    \\
Collection of information sets          & $\mathcal{I}$                             & $I$                    \\
Active player function                  & $P: \mathcal{I} \rightarrow N \cup \{\nature\}$ &                        \\
Information sets assigned to player $n$ & $\mathcal{I}_n$                           & $I$                    \\
Histories assigned to player $n$ & $H_n$                           & $h$                    \\
Set of actions                          & $A$                                       & $a$                    \\
Actions available at $I$                & $A(I)$                                    & $a$        \\       
\hline
\end{tabular}
\end{center}
\end{table}

A \textbf{strategy} $\sigma_n$ is a function from the collection of information sets $\mathcal{I}_n$ to distributions over actions, satisfying $\sigma_n(I) \in \Delta (A(I))$. We use $\sigma_n(a,I)$ to denote the probability that action $a$ is played at $I$ under $\sigma_n$. Nature, denoted $\lambda$, plays according to some fixed fully mixed strategy $\sigma_\nature$. A strategy profile for $N \cup \{\nature\}$ (the set of players and nature) is denoted $\sigma$, a strategy profile for the set of players is denoted $\sigma_N$, and a strategy profile for $(N \setminus \{n\}) \cup \{\nature\}$ is denoted $\sigma_{-n}$. The expected utility to player $n$ from strategy profile $\sigma$ is denoted $u_n(\sigma)$. The expected utility to player $n$ when we start play at history $h$ and play continues according to $\sigma$ is denoted $u_n(h,\sigma)$.

We extend the relation $\prec$ to arbitrary sets $Q, Q' \subseteq H$, writing $Q \prec Q'$ if there exists $h \in Q$ and $h' \in Q'$ such that $h \prec h'$. Similarly, we extend the relation $\preceq$, writing $Q \preceq Q'$ if there exists $h \in Q$ and $h' \in Q'$ such that $h \preceq h'$.

We say that information sets $I$ and $I'$ are \textbf{compatible} if $I \preceq I'$ or $I' \preceq I$. \Cref{fig:running_1} depicts a game that we use as a running example. Player $1$'s information set and player $2$'s leftmost information set are not compatible. We use $\sigma_n^I$ to denote a \textbf{partial strategy for player $n$ at $I$}; this is a strategy with its domain restricted to information sets compatible with $I$, and is defined for any player $n$ (not only the active player at $I$).

\begin{figure}[h]
\centering
\begin{subfigure}[T]{.48\textwidth}
  \centering
\begin{tikzpicture}[font=\footnotesize,edge from parent/.style={draw,thick}]
    \tikzstyle{solid}=[circle,draw,inner sep=1.2,fill=black];
    \tikzstyle{hollow}=[circle,draw,inner sep=1.2]
    \tikzstyle{solid_red}=[circle,draw=red,inner sep=1.7,fill=red];
    \tikzstyle{solid_mag}=[circle,draw=magenta,inner sep=1.7,fill=red];
    \tikzstyle{level 1}=[level distance=18mm,sibling distance=24mm]
    \tikzstyle{level 2}=[level distance=12mm,sibling distance=12mm]
    \tikzstyle{level 3}=[level distance=12mm,sibling distance=12mm]
     \node(0)[hollow]{}
     child[level distance=30mm,sibling distance=20mm]{node[solid]{}
       child{ node[below]{}
        edge from parent node[left]{$l$}
        }
       child{node[below]{} 
        edge from parent node[right]{$r$}
        }
     }
     child{node[solid]{}
       child{node[below]{} 
        edge from parent node[left]{$x$}
        }
        child{node[solid]{}
        child{node[below]{} edge from parent node[left]{$l$}}
        child{node[below]{} edge from parent node[right]{$r$}}
        edge from parent node[right]{$y$}
        }
    edge from parent node[xshift=.5mm, name=omega2]{$\omega_2 \ {\color{blue} [.2]}$}
     }
     child{node[solid]{}
       child{node[below]{} 
        edge from parent node[left]{$x$}
        }
       child{node[solid]{} 
        child{node[below]{} edge from parent node[left]{$l$}}
        child{node[below]{} edge from parent node[right]{$r$}}
        edge from parent node[right]{$y$}
        }
    edge from parent node[right, xshift=2mm] 
    {$\omega_3{\color{blue} [.4]}$}
     }
     ;
     \node[left=.5mm of omega2] {$\omega_1 {\color{blue} [.4]}$};
      \draw[dashed](0-2)to(0-3);
     \node[above,yshift=2]at(0){nature}; \node[above,yshift=2]at(0-1){2};
     \node[above right,yshift=2]at(0-2){1};
     \node[above,yshift=2]at(0-3){1};
     \foreach \x in {2,3} \node[above right,yshift=2]at(0-\x-2){2};
    \node[below]at(0-1-1){\stackon{$1$}{$0$}};
    \node[below]at(0-1-2){\stackon{$0$}{$0$}};
    \node[below]at(0-2-1){\stackon{$0$}{$1$}};
    \node[below]at(0-3-1){\stackon{$0$}{$1$}};
    \node[below]at(0-2-2-1){\stackon{$1$}{$0$}};
    \node[below]at(0-2-2-2){\stackon{$0$}{$6$}};
    \node[below]at(0-3-2-1){\stackon{$0$}{$0$}};
    \node[below]at(0-3-2-2){\stackon{$1$}{$0$}};
    \draw[blue, ->, line width=1.5pt, -{Triangle[length=8pt]},] (0-1) -- ($(0-1)!.97!(0-1-1)$);
    \draw[blue, ->, line width=1.5pt, -{Triangle[length=8pt]},] (0-2) -- ($(0-2)!.97!(0-2-2)$);
    \draw[blue, ->, line width=1.5pt, -{Triangle[length=8pt]},] (0-3) -- ($(0-3)!.97!(0-3-2)$);
    \draw[blue, ->, line width=1.5pt, -{Triangle[length=8pt]},] (0-2-2) -- ($(0-2-2)!.97!(0-2-2-1)$);
    \draw[blue, ->, line width=1.5pt, -{Triangle[length=8pt]},] (0-3-2) -- ($(0-3-2)!.97!(0-3-2-2)$);
\end{tikzpicture}
  \caption{Extensive game, with strategy profile in blue.}
  \label{fig:running_1}
\end{subfigure}
\begin{subfigure}[T]{.48\textwidth}
  \centering
\begin{tikzpicture}[font=\footnotesize,edge from parent/.style={draw,thick}]
    \tikzstyle{solid}=[circle,draw,inner sep=1.2,fill=black];
    \tikzstyle{hollow}=[circle,draw,inner sep=1.2]
    \tikzstyle{solid_red}=[circle,draw=red,inner sep=1.7,fill=red];
    \tikzstyle{solid_mag}=[circle,draw=magenta,inner sep=1.7,fill=red];
    \tikzstyle{level 1}=[level distance=18mm,sibling distance=24mm]
    \tikzstyle{level 2}=[level distance=12mm,sibling distance=12mm]
    \tikzstyle{level 3}=[level distance=12mm,sibling distance=12mm]
     \node(0)[hollow]{}
     child[level distance=30mm,sibling distance=20mm]{node[solid]{}
       child{ node[below]{}
        edge from parent node[left]{$l$}
        }
       child{node[below]{} 
        edge from parent node[right]{$r$}
        }
     }
     child{node[solid_red]{}
       child{node[below]{} 
        edge from parent node[left]{$x$}
        }
        child{node[solid]{}
        child{node[below]{} edge from parent node[left,xshift=.5mm]{$l{\color{blue} \left[\frac{1}{3}\right]}$}}
        child{node[below]{} edge from parent node[right,xshift=-.5mm]{$r{\color{blue} \left[\frac{2}{3}\right]}$}}
        edge from parent node[right]{$y$}
        }
    edge from parent node[xshift=.7mm, name=omega2]{$\omega_2 \ {\color{blue} \left[\frac{1}{3}\right]}$}
     }
     child{node[solid_red]{}
       child{node[below]{} 
        edge from parent node[left]{$x$}
        }
       child{node[solid]{} 
        child{node[below]{} edge from parent node[left,xshift=.5mm]{$l{\color{blue} \left[\frac{1}{3}\right]}$}}
        child{node[below]{} edge from parent node[right,xshift=-.5mm]{$r{\color{blue} \left[\frac{2}{3}\right]}$}}
        edge from parent node[right]{$y$}
        }
    edge from parent node[right, xshift=2mm]{$\omega_3 {\color{blue} \left[\frac{2}{3}\right]}$}
     }
     ;
     \node[left=.5mm of omega2] {$\omega_1 {\color{blue} [0]}$};
      \draw[dashed, red, line width=1.5](0-2)to(0-3);
     \node[above,yshift=2]at(0){nature}; \node[above,yshift=2]at(0-1){2};
     \node[above right,yshift=2]at(0-2){1};
     \node[above,yshift=2]at(0-3){1};
     \foreach \x in {2,3} \node[above right,yshift=2]at(0-\x-2){2};
     \node[below]at(0-1-1){\stackon{${\color{white}1}$}{${\color{white}0}$}};
     \node[below]at(0-1-2){\stackon{${\color{white}0}$}{${\color{white}0}$}};
     \node[below]at(0-2-1){\stackon{${\color{white}0}$}{${\color{white}1}$}};
     \node[below]at(0-3-1){\stackon{${\color{white}0}$}{${\color{white}1}$}};
     \node[below]at(0-2-2-1){\stackon{${\color{white}1}$}{${\color{white}0}$}};
     \node[below]at(0-2-2-2){\stackon{${\color{white}0}$}{${\color{white}6}$}};
     \node[below]at(0-3-2-1){\stackon{${\color{white}0}$}{${\color{white}0}$}};
     \node[below]at(0-3-2-2){\stackon{${\color{white}1}$}{${\color{white}0}$}};
    \draw[blue, ->, line width=1.5pt, -{Triangle[length=8pt]},] (0-2) -- ($(0-2)!.97!(0-2-2)$);
    \draw[blue, ->, line width=1.5pt, -{Triangle[length=8pt]},] (0-3) -- ($(0-3)!.97!(0-3-2)$);
\end{tikzpicture}
  \caption{Cursed-plausible conjecture at player $1$'s information set.}
  \label{fig:running_2}
\end{subfigure}%
\caption{A running example.}\label{fig:running_join_1_2}
\end{figure}
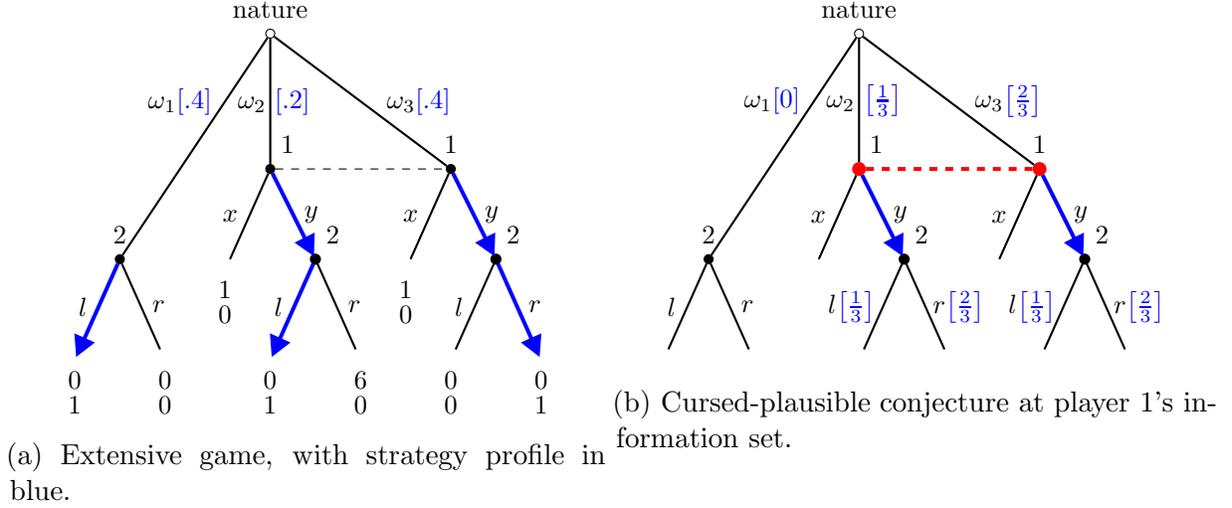

An extensive game is \textbf{well-behaved} if for all $h \in H$, the set $\{h' \in H: h' \prec h\}$ is finite. A partition of the non-terminal histories $H \setminus Z$ is \textbf{valid} if it defines information sets for some extensive game of perfect recall. If histories $h$ and $h'$ are in the same cell of a valid partition, then they have the same active player and the same available actions.

\begin{theorem}\label{thm:coarsest_valid}
For any well-behaved extensive game, there exists a valid partition $\mathcal{F}$ of $H \setminus Z$ that is coarser\footnote{Partition $\mathcal{J}$ is {coarser} than partition $\mathcal{J}'$ if for all $J'\in\mathcal{J}'$, there exists $J\in\mathcal{J}$ such that $J' \subseteq J$.} than any other valid partition.
\end{theorem}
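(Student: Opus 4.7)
The plan is to exhibit an explicit candidate $\mathcal{F}$ and verify both that it is valid and that every other valid partition refines it. Well-behavedness implies each history in $H\setminus Z$ has a finite set of predecessors, so it lies at a finite depth, and I can define $\mathcal{F}$ by strong induction on depth. The root is its own $\mathcal{F}$-cell. At depth $d>0$, assuming $\mathcal{F}$-cells on all histories of depth $<d$ have been defined, declare $h \sim_\mathcal{F} h'$ iff either (i) $P(h)=P(h')=\nature$ and $h=h'$, or (ii) $P(h)=P(h')=n\neq \nature$, $A(h)=A(h')$, and the sequences of pairs (already-defined $\mathcal{F}$-cell of $\tilde h$, action $n$ takes at $\tilde h$), taken over $n$-ancestors $\tilde h \prec h$ (respectively $\tilde h \prec h'$) ordered by $\prec$, agree.

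Validity of $\mathcal{F}$ is then direct: same-cell histories share player and actions by (ii); nature's cells are singletons by (i); and two $\prec$-comparable histories of the same player $n$ cannot lie in a common cell, since the own-action sequence of $n$ at the later history strictly extends that at the earlier one. For perfect recall, suppose $h\sim_\mathcal{F} h'$ with common active player $n$ and that $\tilde h \prec h$ is the $k$-th $n$-ancestor of $h$. Since the two past (cell, action) sequences agree, they have a common length $K\geq k$, so the $k$-th $n$-ancestor $\tilde h'$ of $h'$ exists, belongs to the same $\mathcal{F}$-cell as $\tilde h$, and is followed on the path to $h'$ by the same action $n$ plays at $\tilde h$ on the path to $h$ — which is exactly perfect recall.

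For the coarsening claim, let $\mathcal{J}$ be any valid partition; I show by induction on depth that $h \sim_\mathcal{J} h'$ implies $h \sim_\mathcal{F} h'$. The shared player and action set are forced by validity of $\mathcal{J}$. If the common player is $\nature$, singletons force $h=h'$ and we are done. Otherwise, perfect recall of $\mathcal{J}$ provides, for each $n$-ancestor $\tilde h \prec h$, a corresponding $\tilde h' \prec h'$ with $\tilde h \sim_\mathcal{J} \tilde h'$ and identical action taken; the inductive hypothesis then gives $\tilde h \sim_\mathcal{F} \tilde h'$, so the past $(\mathcal{F}\text{-cell}, \text{action})$ sequences coincide at $h$ and $h'$, yielding $h \sim_\mathcal{F} h'$. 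The main conceptual step is compressing perfect recall into the single recursive condition on past (cell, action) pairs for the active player; the main technical care lies in the depth induction, which is well-founded precisely because well-behavedness guarantees each history has finitely many predecessors.
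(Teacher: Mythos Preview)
Your proposal is correct and follows essentially the same approach as the paper: both build $\mathcal{F}$ recursively by declaring two histories equivalent exactly when the active player's record of past own-cells-and-actions coincides, and both prove maximality by inducting along that record using perfect recall of $\mathcal{J}$. The only cosmetic difference is the induction variable---the paper inducts on the number of times the active player has already moved (defining $\mathcal{F}_n^l$ on $H_n^l$ player by player), whereas you induct on total depth; since well-behavedness makes both orderings well-founded and the defining condition is identical, the two constructions yield the same partition.
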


The coarsest valid partition allows us to formalize the idea that each player acts independently of their own private information.\footnote{We prove \Cref{thm:coarsest_valid} by construction in \Cref{app:proof_coarsest_valid}. In contemporaneous work, \cite{nagel2024asif} provide an elegant proof by Zorn's lemma.} This is analogous to how, in cursed equilibrium, each player neglects how their opponent's behavior depends on their opponent's type.

A set of histories is a \textbf{coarse set} if it is a cell of $\mathcal{F}$, as defined implicitly in \Cref{thm:coarsest_valid}. We use $F(I)$ to denote $F \in \mathcal{F}$ such that $F \supseteq I$. We assume that coarse sets involving nature are singleton, \textit{i.e.} for all $I \in \mathcal{I}_\lambda$, we have $|F(I)| = 1$. For instance, the extensive game in \Cref{fig:running_1} has three coarse sets, one containing the root, another containing player $1$'s information set, and another containing all of player $2$'s information sets.

A strategy $\sigma_n$ is \textbf{coarse} if it is measurable with respect to $\mathcal{F}$, that is, if for any $I$ and $I'$ that are subsets of the same coarse set, $\sigma_n(I) = \sigma_n(I')$. Coarse strategies formalize what it means to neglect the dependence of a player's action on that player's information.

For the rest of this section, we restrict attention to finite games. Each strategy profile $\sigma$ induces a probability measure $\measure[\sigma]$ on the set of terminal histories $Z$. Given some $Q \subseteq Z$, we use $\measure[\sigma](Q)$ to denote the probability of $Q$. We extend the $\measure$ notation to sets $Q \subseteq H$ that might contain non-terminal histories, by regarding each set $Q$ as equivalent to the set of terminal histories that succeed it, $\{z \in Z: Q \preceq \{z\}\}$. For instance, $\measure[\sigma](Q \cap Q') = \measure[\sigma](\{z \in Z: Q \preceq \{z\}\} \cap \{z \in Z: Q' \preceq \{z\}\})$. Given sets $Q,Q' \subseteq H$, we use $\measure[\sigma](Q \mid Q')$ to denote the relevant conditional probability. We abuse notation, using $\measure[\overline{\sigma}^I](Q)$ to denote the probability of event $Q$ under partial strategy profile $\overline{\sigma}^I$, taking care to do this only when the relevant probability is pinned down by $\overline{\sigma}^I$.\footnote{That is, only when $\measure[\sigma](Q) = \measure[\sigma'](Q)$ for any $\sigma$ and $\sigma'$ that extend $\overline{\sigma}^I$.}

A \textbf{conjecture}\footnote{In calling these `conjectures', we are not claiming that each player explicitly believes that other players' actions are independent of their information. People are likely, if asked directly, to realize that actions are linked to information. Our theory asserts that people neglect this link in complex strategic situations, and instead adopt the heuristic of treating actions and information as independent.} for information set $I$ is a profile of partial strategies $\overline{\sigma}^I \equiv (\overline{\sigma}_n^I)_{n \in N \cup \{\nature\}}$ such that $\measure[\overline{\sigma}^I](I) > 0$. A \textbf{system of conjectures} $(\overline{\sigma}^I)_{I \in \mathcal{I}_N}$ specifies one conjecture at each information set. An \textbf{assessment} is a tuple $(\sigma_N, (\overline{\sigma}^I)_{I \in \mathcal{I}_N})$ consisting of a strategy profile $\sigma_N$ and a system of conjectures $(\overline{\sigma}^I)_{I \in \mathcal{I}_N}$.

Given information set $I$ and history $h \in I$, the conjecture $\overline{\sigma}^I$ suffices to pin down each player's continuation utility, which we denote $u_n(h,\overline{\sigma}^I)$. Given an information set $I$ and conjecture $\overline{\sigma}^I$, we define $\rho(\overline{\sigma}^I, I) \in \Delta (I)$ as the distribution over $I$ implied by Bayes' rule; that is, for each history $h \in I$, we have $\rho(h,\overline{\sigma}^I, I) = \measure[\overline{\sigma}^I](\{h\} \mid I)$.

An assessment $(\sigma_N, (\overline{\sigma}^I)_{I \in \mathcal{I}_N})$ is \textbf{locally rational} if for every player $n \in N$ and every information set $I \in \mathcal{I}_n$, player $n$'s strategy $\sigma_n$ and their self-conjecture $\overline{\sigma}^I_n$ yield the same action distribution at $I$, that is  $\sigma_n(I) = \overline{\sigma}^I_n(I)$, and furthermore $\overline{\sigma}^I_n$ maximizes $n$'s expected utility when play is initialized by $\rho(\overline{\sigma}^I,I)$ and proceeds according to $\overline{\sigma}^I_{-n}$, that is 
\begin{equation}
    \overline{\sigma}^I_n \in \argmax_{\hat{\sigma}^I_n} \left\{ \sum_{h \in I}  \rho(h, \overline{\sigma}^I,I) u_n(h, \hat{\sigma}^I_n, \overline{\sigma}_{-n}^I) \right\}.
\end{equation}
This definition of local rationality allows that player $n$ incorrectly foresees their own future actions---we have not imposed the requirement that the self-conjecture $\overline{\sigma}^I_n$ agrees with $\sigma_n$ at later information sets.\footnote{It does not matter whether $\overline{\sigma}^I_n$ and $\sigma_n$ agree at information sets that precede $I$, because we have restricted attention to games of perfect recall and initialized play at information set $I$.}  One motivation for the present study is that people reason differently about hypothetical and observed events; thus, the definition allows that players may make plans and then change them.

We now specify how player $n$'s conjectures depend on the strategies of other players and of nature. Consider the game and strategy profile in \Cref{fig:running_1}. Conditional on nature's coarse set and player $1$'s information set both being reached, nature plays $\omega_2$ with probability $\frac{1}{3}$ and $\omega_3$ with probability $\frac{2}{3}$. Conditional on player $1$'s information set and player $2$'s coarse set both being reached, $2$ plays $l$ with probability $\frac{1}{3}$ and $r$ with probability $\frac{2}{3}$. In essence, we will require that at player $1$'s information set, player $1$ understands these probabilities, but neglects how player $2$'s actions depend on what player $2$ knows, as captured by the conjecture in \Cref{fig:running_2}. We generalize this idea with the following definition. Let $(h,a)$ denote the immediate successor of history $h$ at which action $a$ was just played.

\begin{definition}\label{def:cursed_plausible}
For player $n \in N$ and information set $I \in \mathcal{I}_n$, the conjecture $\overline{\sigma}^I$ is \textbf{cursed-plausible} with strategy profile $\sigma_N$ if
\begin{enumerate}
    \item For every $I' \in \mathcal{I}_n$ that precedes $I$, the distribution $\overline{\sigma}^I_n(I')$ places probability $1$ on the action\footnote{Since $I' \prec I$, this action exists. Since we have assumed perfect recall, this action is unique.} that does not rule out reaching $I$.
    \item For every player $m \neq n$, every information set $I' \in \mathcal{I}_m$, and every action $a \in A(I')$, for $\sigma \equiv (\sigma_N,\sigma_\nature)$, we have
    \begin{equation}\label{eq:CPS_update}
        \overline{\sigma}_m^I(a,I') = \measure[\sigma](\{(h,a): h \in F(I')\} \mid I \cap F(I'))
    \end{equation}
    if $\measure[\sigma](I \cap F(I')) > 0$.
    \item For all $m \neq n$, the partial strategy  $\overline{\sigma}_m^I$ is coarse.
\end{enumerate}
We say that an assessment is cursed-plausible if the above requirements hold for every player $n \in N$ and every information set $I \in \mathcal{I}_n$.
\end{definition}

Clause 2 of \Cref{def:cursed_plausible} applies only when $\measure[\sigma](I \cap F(I')) > 0$, so it does not imply Clause 3. If the strategy profile $\sigma_N$ is fully mixed, then Clause 2 implies Clause 3.

Equation \eqref{eq:CPS_update} assesses the probability that action $a$ is played conditional on the event $I \cap F(I')$, that is, the event that $I$ is reached and $F(I')$ is reached.\footnote{Recall that under the $\measure$ notation, we treat each set as equivalent to the terminal histories that succeed it. Hence, while $I$ and $F(I')$ are disjoint sets, the event $I \cap F(I')$ denotes the set of terminal histories that succeed both $I$ and $F(I')$, that is $\{z \in Z: I \preceq \{z\}\} \cap \{z \in Z: F(I') \preceq \{z\}\}$. This set is non-empty if $I'$ is compatible with $I$.} By contrast, the Bayesian posterior probability that $a$ is played at $I'$ conditions on the event $I \cap I'$, resulting in the expression $\measure[\sigma](\{(h,a): h \in {\color{blue} I'}\} \mid I \cap {\color{blue} I'})$. For example, in \Cref{fig:running_1}, if we let $I = \{\omega_2, \omega_3\}$ and $I' = \{\omega_2 y\}$, then  Equation \eqref{eq:CPS_update} conditions on the event $\{\omega_2 y, \omega_3 y\}$. Generally, Equation \eqref{eq:CPS_update} results in a coarse partial strategy for player $m$, that captures the idea that player $n$ understands the conditional probability of each player $m$ action, but neglects how $m$'s actions depends on $m$'s information.   An important subtlety is that \eqref{eq:CPS_update} also constrains nature's partial strategy $\overline{\sigma}_\nature$, to account for the fact that information set $I$ is reached.

Given the strategy profile in \Cref{fig:running_1}, we depict a cursed-plausible conjecture at player $1$'s information set in \Cref{fig:running_2}. According to this conjecture, player $2$ has a coarse strategy $\left(l \ \frac{1}{3}, r \ \frac{2}{3}\right)$ that matches the empirical frequencies of each action. This is distinct from the unweighted average, $\left(l \ \frac{1}{2}, r \ \frac{1}{2}\right)$, because nature plays $\omega_3$ twice as often as $\omega_2$.\footnote{Using the unweighted average would result in a solution concept that varies if we split states into several identical states, dividing the probabilities between them.} It is also distinct from the \textit{ex ante} probability of each action $\left(l \ \frac{3}{5}, r \ \frac{2}{5}\right)$ , because we are conditioning on player $1$'s information set.\footnote{If we applied ABEE \citep{jehiel2005analogy} to \Cref{fig:running_1} using the coarse sets as analogy classes, player $1$ would best-respond to the \textit{ex ante} probability $\left(l \ \frac{3}{5}, r \ \frac{2}{5}\right)$.}

Cursed-plausibility implies that players learn from observed actions. For instance, consider the sequential trading game in \Cref{fig:trading_seq}, and let $\sigma$ be a strategy profile such that player $1$ accepts at information set $\{\omega_1,\omega_2\}$ and declines at information set $\{\omega_3\}$. Let $I_\lambda$ denote nature's information set and let $I_2 = \{\omega_2 a, \omega_3 a\}$ denote player $2$'s right information set. By cursed-plausibility, at $I_2$ player $2$ conjectures that nature has played $\omega_2$ for sure, since
    \begin{equation}
       \measure[\sigma](\{(h,\omega_2): h \in F(I_\lambda)\} \mid I_2 \cap F(I_\lambda)) = 1.
    \end{equation}

Cursed-plausibility assumes that players have, at each information set, correct beliefs about the distribution of other players' actions, but neglect how each player's actions depend on what other players (or nature) have done. One learning story might justify this unusual assumption, that we sketch informally: Suppose that many generations of players play a game. At each information set $I$, the active player consults statistics about past behavior, restricting attention to plays that passed through $I$. But rather than responding to the joint distribution of other players' actions, which might be high-dimensional\footnote{Consider a player keeping track of the play of $j$ other players; each has one coarse set with $k$ actions. The joint distribution of the other players' actions can occupy an array as large as $k^j$, whereas the marginal distributions require an array of size no more than than $kj$.}, he adopts the heuristic of treating behavior at each coarse set as independent of behavior at other coarse sets. Crucially, the player crosses each bridge as he comes to it; he consults only the statistics for the current information set.\footnote{As \cite{savage1972foundations} observes, in complex decision problems, ``there are innumerable bridges one cannot afford to cross, unless he happens to come to them."}


An assessment $(\sigma_N, (\overline{\sigma}^I)_{I \in \mathcal{I}_N})$ is a \textbf{weak perfect cursed equilibrium} (WPCE) if it is locally rational and cursed-plausible.

For example, there exists a WPCE of \Cref{fig:running_1} with the strategy profile in blue. At player $1$'s information set, playing $y$ is locally rational given the conjecture in \Cref{fig:running_2}; since player $1$ believes that $2$ plays $r$ with probability $\frac{2}{3}$ independently of $2$'s information, the perceived payoff from $y$ is $\frac{1}{3} \cdot \frac{2}{3} \cdot 6$, which strictly exceeds the payoff of $1$ from $x$. The cursed-plausible conjectures for player $2$'s information sets are trivial, and $2$'s strategy is locally rational.

WPCE extends to games with infinite action sets; the details are in \Cref{sec:WPCE_infinite}.

Cursed-plausibility does not restrict conjectures conditional on zero-probability events. Returning to the running example, consider the strategy profile in \Cref{fig:running_3}; because player $1$ plays $x$ for sure, Clause $2$ of \Cref{def:cursed_plausible} does not pin down what player $1$ believes about player $2$. Consequently, the conjecture depicted in \Cref{fig:running_4} is cursed-plausible. Facing that conjecture, it is locally rational for player $1$ to play $x$, so \Cref{fig:running_join_3_4}, together with the trivial conjectures for player $2$, constitutes a WPCE.

\begin{figure}[h]
\centering
\begin{subfigure}[T]{.48\textwidth}
  \centering
\begin{tikzpicture}[font=\footnotesize,edge from parent/.style={draw,thick}]
    \tikzstyle{solid}=[circle,draw,inner sep=1.2,fill=black];
    \tikzstyle{hollow}=[circle,draw,inner sep=1.2]
    \tikzstyle{solid_red}=[circle,draw=red,inner sep=1.7,fill=red];
    \tikzstyle{solid_mag}=[circle,draw=magenta,inner sep=1.7,fill=red];
    \tikzstyle{level 1}=[level distance=18mm,sibling distance=24mm]
    \tikzstyle{level 2}=[level distance=12mm,sibling distance=12mm]
    \tikzstyle{level 3}=[level distance=12mm,sibling distance=12mm]
     \node(0)[hollow]{}
     child[level distance=30mm,sibling distance=20mm]{node[solid]{}
       child{ node[below]{}
        edge from parent node[left]{$l$}
        }
       child{node[below]{} 
        edge from parent node[right]{$r$}
        }
     }
     child{node[solid]{}
       child{node[below]{} 
        edge from parent node[left]{$x$}
        }
        child{node[solid]{}
        child{node[below]{} edge from parent node[left]{$l$}}
        child{node[below]{} edge from parent node[right]{$r$}}
        edge from parent node[right]{$y$}
        }
    edge from parent node[xshift=.5mm, name=omega2]{$\omega_2 \ {\color{blue} [.2]}$}
     }
     child{node[solid]{}
       child{node[below]{} 
        edge from parent node[left]{$x$}
        }
       child{node[solid]{} 
        child{node[below]{} edge from parent node[left]{$l$}}
        child{node[below]{} edge from parent node[right]{$r$}}
        edge from parent node[right]{$y$}
        }
    edge from parent node[right, xshift=2mm]{$\omega_3{\color{blue} [.4]}$}
     }
     ;
     \node[left=.5mm of omega2] {$\omega_1 {\color{blue} [.4]}$};
      \draw[dashed](0-2)to(0-3);
     \node[above,yshift=2]at(0){nature}; \node[above,yshift=2]at(0-1){2};
     \node[above right,yshift=2]at(0-2){1};
     \node[above,yshift=2]at(0-3){1};
     \foreach \x in {2,3} \node[above right,yshift=2]at(0-\x-2){2};
    \node[below]at(0-1-1){\stackon{$1$}{$0$}};
    \node[below]at(0-1-2){\stackon{$0$}{$0$}};
    \node[below]at(0-2-1){\stackon{$0$}{$1$}};
    \node[below]at(0-3-1){\stackon{$0$}{$1$}};
    \node[below]at(0-2-2-1){\stackon{$1$}{$0$}};
    \node[below]at(0-2-2-2){\stackon{$0$}{$6$}};
    \node[below]at(0-3-2-1){\stackon{$0$}{$0$}};
    \node[below]at(0-3-2-2){\stackon{$1$}{$0$}};
    \draw[blue, ->, line width=1.5pt, -{Triangle[length=8pt]},] (0-1) -- ($(0-1)!.97!(0-1-1)$);
    \draw[blue, ->, line width=1.5pt, -{Triangle[length=8pt]},] (0-2) -- ($(0-2)!.97!(0-2-1)$);
    \draw[blue, ->, line width=1.5pt, -{Triangle[length=8pt]},] (0-3) -- ($(0-3)!.97!(0-3-1)$);
    \draw[blue, ->, line width=1.5pt, -{Triangle[length=8pt]},] (0-2-2) -- ($(0-2-2)!.97!(0-2-2-1)$);
    \draw[blue, ->, line width=1.5pt, -{Triangle[length=8pt]},] (0-3-2) -- ($(0-3-2)!.97!(0-3-2-2)$);
\end{tikzpicture}
  \caption{Extensive game, with strategy profile in blue.}
  \label{fig:running_3}
\end{subfigure}
\begin{subfigure}[T]{.48\textwidth}
  \centering
\begin{tikzpicture}[font=\footnotesize,edge from parent/.style={draw,thick}]
    \tikzstyle{solid}=[circle,draw,inner sep=1.2,fill=black];
    \tikzstyle{hollow}=[circle,draw,inner sep=1.2]
    \tikzstyle{solid_red}=[circle,draw=red,inner sep=1.7,fill=red];
    \tikzstyle{solid_mag}=[circle,draw=magenta,inner sep=1.7,fill=red];
    \tikzstyle{level 1}=[level distance=18mm,sibling distance=24mm]
    \tikzstyle{level 2}=[level distance=12mm,sibling distance=12mm]
    \tikzstyle{level 3}=[level distance=12mm,sibling distance=12mm]
     \node(0)[hollow]{}
     child[level distance=30mm,sibling distance=20mm]{node[solid]{}
       child{ node[below]{}
        edge from parent node[left]{$l$}
        }
       child{node[below]{} 
        edge from parent node[right]{$r$}
        }
     }
     child{node[solid_red]{}
       child{node[below]{} 
        edge from parent node[left]{$x$}
        }
        child{node[solid]{}
        child{node[below]{} edge from parent node[left]{$l$}}
        child{node[below]{} edge from parent node[right]{$r$}}
        edge from parent node[right]{$y$}
        }
    edge from parent node[xshift=.7mm, name=omega2]{$\omega_2 \ {\color{blue} \left[\frac{1}{3}\right]}$}
     }
     child{node[solid_red]{}
       child{node[below]{} 
        edge from parent node[left]{$x$}
        }
       child{node[solid]{} 
        child{node[below]{} edge from parent node[left]{$l$}}
        child{node[below]{} edge from parent node[right]{$r$}}
        edge from parent node[right]{$y$}
        }
    edge from parent node[right, xshift=2mm]{$\omega_3 {\color{blue} \left[\frac{2}{3}\right]}$}
     }
     ;
     \node[left=.5mm of omega2] {$\omega_1 {\color{blue} [0]}$};
      \draw[dashed, red, line width=1.5](0-2)to(0-3);
     \node[above,yshift=2]at(0){nature}; \node[above,yshift=2]at(0-1){2};
     \node[above right,yshift=2]at(0-2){1};
     \node[above,yshift=2]at(0-3){1};
     \foreach \x in {2,3} \node[above right,yshift=2]at(0-\x-2){2};
     \node[below]at(0-1-1){\stackon{${\color{white}1}$}{${\color{white}0}$}};
     \node[below]at(0-1-2){\stackon{${\color{white}0}$}{${\color{white}0}$}};
     \node[below]at(0-2-1){\stackon{${\color{white}0}$}{${\color{white}1}$}};
     \node[below]at(0-3-1){\stackon{${\color{white}0}$}{${\color{white}1}$}};
     \node[below]at(0-2-2-1){\stackon{${\color{white}1}$}{${\color{white}0}$}};
     \node[below]at(0-2-2-2){\stackon{${\color{white}0}$}{${\color{white}6}$}};
     \node[below]at(0-3-2-1){\stackon{${\color{white}0}$}{${\color{white}0}$}};
     \node[below]at(0-3-2-2){\stackon{${\color{white}1}$}{${\color{white}0}$}};
    \draw[blue, ->, line width=1.5pt, -{Triangle[length=8pt]},] (0-2) -- ($(0-2)!.97!(0-2-1)$);
    \draw[blue, ->, line width=1.5pt, -{Triangle[length=8pt]},] (0-3) -- ($(0-3)!.97!(0-3-1)$);
    \draw[blue, ->, line width=1.5pt, -{Triangle[length=8pt]},] (0-2-2) -- ($(0-2-2)!.97!(0-2-2-1)$);
    \draw[blue, ->, line width=1.5pt, -{Triangle[length=8pt]},] (0-3-2) -- ($(0-3-2)!.97!(0-3-2-1)$);
\end{tikzpicture}
  \caption{Cursed-plausible conjecture at player $1$'s information set.}
  \label{fig:running_4}
\end{subfigure}%
\caption{Another WPCE of the running example.}\label{fig:running_join_3_4}
\end{figure}
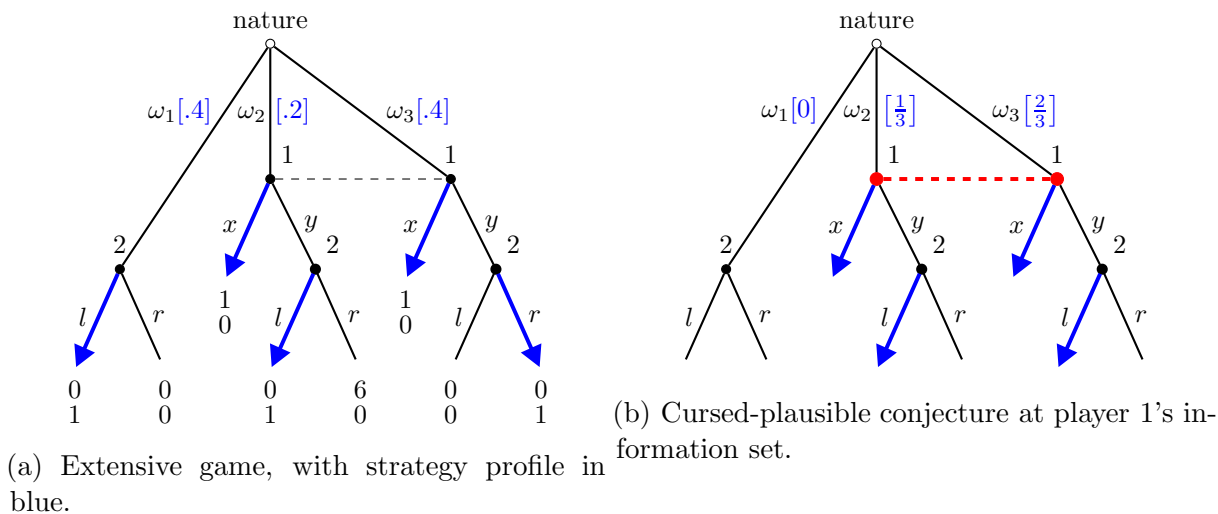

The conjecture in \Cref{fig:running_4} seems farfetched, because player $1$'s conjecture about player $2$ is not anchored by player $2$'s actual behavior. The problem is not just that cursed-plausibility does not discipline off-path beliefs; rather, it is that cursed-plausibility does not discipline on-path forecasts about off-path actions.\footnote{Weak perfect Bayesian equilibrium implicitly assumes that player $1$ correctly forecasts player $2$'s off-path actions. But at the strategy profile in \Cref{fig:running_3} the correct forecast is not a coarse strategy. Our problem, therefore, is to construct a conjecture about player $2$'s behavior that is coarse but not arbitrary.}

For some applications, such as those studied in \Cref{sec:applications}, these subtleties about beliefs do not matter for behavior, and WPCE is enough for practical purposes.

To deal with harder cases, we propose a requirement that strengthens cursed-plausibility to restrict conjectures involving zero-probability events. The idea is to use trembles to specify conditional probabilities, as in \cite{kreps1982sequential}.  Observe that if the strategy profile $\sigma_N$ is fully mixed, then for any player $n$ and any compatible information sets $I, I' \in \mathcal{I}_n$, we have $\measure[(\sigma_N,\sigma_\nature)](I \cap F(I')) > 0$, so cursed-plausibility pins down player $n$'s conjecture about other players and nature by Equation \eqref{eq:CPS_update}.

An asssessment $(\sigma_N, (\overline{\sigma}^I)_{I \in \mathcal{I}_N})$ is \textbf{cursed-consistent} if there exists a sequence of cursed-plausible assessments $\left(\sigma^k_N, (\overline{\sigma}^{I,k})_{I \in \mathcal{I}_N}\right)_{k \in \mathbb{N}}$ converging to $(\sigma_N, (\overline{\sigma}^I)_{I \in \mathcal{I}_N})$  such that for all $k$, the strategy profile $\sigma^k_N$ is fully mixed. For example, given the strategy profile in \Cref{fig:running_1}, cursed-consistency allows the conjecture in \Cref{fig:running_2}. But given the strategy profile in \Cref{fig:running_3}, cursed-consistency rules out the conjecture in \Cref{fig:running_4}.

\begin{theorem}\label{thm:cursed_consistent_implies_plausible}
    For any finite game, every cursed-consistent assessment is cursed-plausible.
\end{theorem}

An assessment $(\sigma_N, (\overline{\sigma}^I)_{I \in \mathcal{I}_N})$ is a \textbf{sequential cursed equilibrium} (SCE) if it is locally rational and cursed-consistent.

Every SCE is a WPCE, by \Cref{thm:cursed_consistent_implies_plausible}. But not every WPCE is a SCE; for instance, the WPCE in \Cref{fig:running_join_3_4} is not a SCE. The running example has a unique SCE, as in \Cref{fig:running_join_1_2}.

Next we state an existence theorem.

\begin{theorem}\label{thm:SCE_existence}
    For any finite game, there exists a sequential cursed equilibrium.
\end{theorem}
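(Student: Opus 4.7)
The plan follows Kreps and Wilson's existence proof for sequential equilibrium, with cursed-plausible conjectures replacing Bayesian posteriors. Fix a sequence of positive perturbations $\epsilon^k \to 0$, and let $\Sigma^k$ be the (nonempty, compact, convex) set of strategy profiles $\sigma_N$ satisfying $\sigma_n(a,I) \geq \epsilon^k$ for every $n \in N$, $I \in \mathcal{I}_n$, and $a \in A(I)$, with $\epsilon^k$ small enough that this is feasible. Because each such $\sigma_N$ is fully mixed, the footnote following the definition of cursed-consistency guarantees a unique cursed-plausible conjecture $C((\sigma_N,\sigma_\nature),I)$ at each $I$; inspection of \eqref{eq:CPS_update} shows that the map $\sigma_N \mapsto C((\sigma_N,\sigma_\nature),I)$ is continuous on $\Sigma^k$, since on this set every conditioning event $I \cap F(I')$ has probability bounded away from zero.

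I would then define a perturbed best-response correspondence $BR^k : \Sigma^k \rightrightarrows \Sigma^k$. Given $\sigma_N \in \Sigma^k$, form the system $\overline{\sigma}^I := C((\sigma_N,\sigma_\nature),I)$. For each $n \in N$, $I \in \mathcal{I}_n$, and $a \in A(I)$, define the optimal continuation value after playing $a$ at $I$,
\[
V_n(a,I,\overline{\sigma}^I) := \max_{\hat{\sigma}^I_n \,:\, \hat{\sigma}^I_n(I) = \delta_a} \sum_{h \in I} \rho(h,\overline{\sigma}^I,I)\, u_n\bigl(h,\hat{\sigma}^I_n,\overline{\sigma}^I_{-n}\bigr),
\]
and let $A^*(I) \subseteq A(I)$ be the set of maximizers. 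Put $\sigma'_N \in BR^k(\sigma_N)$ iff for every $n$ and $I \in \mathcal{I}_n$ we have $\sigma'_n(a,I) = \epsilon^k$ whenever $a \notin A^*(I)$, with the remaining mass distributed arbitrarily over $A^*(I)$. This set is a product of simplices over the $A^*(I)$'s, hence nonempty, compact, and convex. Finiteness of the game makes $V_n$ a continuous function of $\overline{\sigma}^I$, which combined with continuity of $\sigma_N \mapsto \overline{\sigma}^I$ and Berge's maximum theorem yields that $BR^k$ is upper hemicontinuous.

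Kakutani's theorem supplies a fixed point $\sigma_N^k \in BR^k(\sigma_N^k)$. Let $\overline{\sigma}^{I,k} := C((\sigma_N^k,\sigma_\nature),I)$ for every $I \in \mathcal{I}_N$. By passing to a subsequence (using compactness of the joint space of strategy profiles and conjecture systems), obtain limits $\sigma_N^* = \lim_k \sigma_N^k$ and $\overline{\sigma}^{I,*} = \lim_k \overline{\sigma}^{I,k}$ for every $I$. The same sequence $(\sigma_N^k)_k$ simultaneously witnesses cursed-consistency of the entire system $(\overline{\sigma}^{I,*})_I$ with $\sigma_N^*$. Local best-response passes to the limit: if $\sigma^*_n(a,I) > 0$ then $\sigma^k_n(a,I) > \epsilon^k$ for all large $k$, so $a \in A^*(I)$ under $\overline{\sigma}^{I,k}$, and continuity of $V_n$ then forces $a$ to remain a maximizer under $\overline{\sigma}^{I,*}$. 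Hence $(\sigma_N^*, (\overline{\sigma}^{I,*})_I)$ is a sequential cursed equilibrium.

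The main obstacle is establishing continuity of $\sigma_N \mapsto \overline{\sigma}^I$ on $\Sigma^k$ and the resulting upper hemicontinuity of the maximizer correspondence $A^*(I)$; both rely on the perturbation keeping all relevant conditioning events strictly positive. A subtler point is ensuring that a \emph{single} approximating sequence certifies cursed-consistency of the whole system of conjectures rather than each conjecture separately—this falls out of the fixed-point construction, which produces one $\sigma_N^k$ at each stage $k$ and thereby couples the approximations for all $I$ simultaneously.
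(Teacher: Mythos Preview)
Your approach mirrors the paper's: perturb, apply Kakutani, pass to a limit along a subsequence. The structure is right, and your discussion of the ``subtler point'' about a single approximating sequence is exactly what the construction delivers.

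There is, however, one genuine gap. You never verify that the limit object $(\overline{\sigma}^{I,*})_{I}$ is a \emph{system of conjectures} in the paper's sense, i.e.\ that $\measure[\overline{\sigma}^{I,*}](I)>0$ for every $I$. This is needed so that $\rho(\overline{\sigma}^{I,*},I)$ is well-defined via Bayes' rule, and hence so that the local best-response condition can even be stated at the limit. The concern is real: $\overline{\sigma}^{I,*}$ is not the true strategy profile but a profile of \emph{coarse} partial strategies for opponents and nature, obtained by averaging over coarse sets; there is no immediate reason the limit of these averages should still reach $I$. The paper handles this with an explicit argument: along the path $(h_l,a_l)_{l=1}^{L}$ to some $h\in I$ with $\lim_k \measure[\sigma^k](h\mid I)>0$, one shows
\[
\overline{\sigma}^{I,k}_{P(I_l)}(a_l,I_l)\;\geq\;\measure[\sigma^k]\bigl((h_l,a_l)\mid I\bigr)\;\geq\;\measure[\sigma^k](h\mid I),
\]
and hence $\measure[\overline{\sigma}^{I,k}](I)\geq \measure[\sigma^k](h\mid I)^{L}$, which stays bounded away from zero in the limit. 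Without this step (or an equivalent), your limiting assessment may fail to be an SCE simply because it fails to be an assessment.
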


We sketch the proof, deferring details to the appendix. Consider the perturbed game that constrains each action to be played with at least probability $\epsilon$, for parameter $\epsilon > 0$. By Kakutani's theorem, each $\epsilon$-perturbed game has a cursed-plausible assessment that is constrained locally rational. We then take $\epsilon$ to $0$; by compactness we pass to a convergent subsequence, with limit assessment $(\sigma_N, (\overline{\sigma}^I)_{I \in \mathcal{I}_N})$. We prove that the resulting limit has a well-defined system of conjectures, that is $\measure[\overline{\sigma}^I](I) > 0$ for all $I \in \mathcal{I}_N$. The limit assessment $(\sigma_N, (\overline{\sigma}^I)_{I \in \mathcal{I}_N})$ is cursed-consistent by construction, and is locally rational by upper hemicontinuity of the best-response correspondence.

\begin{corollary}
    For any finite game, there exists a weak perfect cursed equilibrium.
\end{corollary}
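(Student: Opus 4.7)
The plan is to derive this corollary directly from the two results immediately preceding it. By \Cref{thm:SCE_existence}, there exists a sequential cursed equilibrium $(\sigma_N, (\overline{\sigma}^I)_{I \in \mathcal{I}_N})$. By definition of SCE, the system of conjectures $(\overline{\sigma}^I)_{I \in \mathcal{I}_N}$ is cursed-consistent with $\sigma_N$, and $\sigma_N$ is a local best-response to $(\overline{\sigma}^I)_{I \in \mathcal{I}_N}$.

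The only remaining ingredient is to upgrade cursed-consistency to cursed-plausibility, which is exactly the content of \Cref{thm:cursed_consistent_implies_plausible}: since $(\overline{\sigma}^I)_{I \in \mathcal{I}_N}$ is cursed-consistent with $\sigma_N$, it is also cursed-plausible with $\sigma_N$. Combined with the local best-response condition inherited from the SCE, this shows that $(\sigma_N, (\overline{\sigma}^I)_{I \in \mathcal{I}_N})$ satisfies both defining requirements of a weak perfect cursed equilibrium.

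There is no real obstacle here: the corollary is a one-line consequence of the observation (made explicitly in the main text just before the existence theorem) that every SCE is a WPCE. The substantive work was done in establishing \Cref{thm:SCE_existence} via the Kakutani fixed-point argument on $\epsilon$-perturbed games, and in \Cref{thm:cursed_consistent_implies_plausible}; the corollary simply records that WPCE is a weaker concept, so nonemptiness transfers for free.
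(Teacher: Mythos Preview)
Your proposal is correct and matches the paper's approach exactly: the corollary is stated without separate proof because, as the paper notes just before \Cref{thm:SCE_existence}, every SCE is a WPCE by \Cref{thm:cursed_consistent_implies_plausible}, so existence of SCE immediately yields existence of WPCE.
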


Next we state that SCE has the same continuity properties as sequential equilibrium.\footnote{See Proposition 2 of \cite{kreps1982sequential}.}

\begin{proposition}\label{prop:SCE_uhc}
Fixing an extensive form, the correspondence from pairs $((u_n)_{n \in N},\sigma_\nature)$ of utilities and fully mixed nature strategies to the set of SCE for the game so defined is upper hemicontinuous.
\end{proposition}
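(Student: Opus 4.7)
The plan is to establish upper hemicontinuity via a closed-graph argument; since strategy profiles and systems of conjectures lie in compact products of simplices, it suffices to show: if $((u_n)^k, \sigma_\nature^k) \to ((u_n), \sigma_\nature)$ with $\sigma_\nature$ fully mixed and $(\sigma_N^k, (\overline{\sigma}^{I,k})_{I \in \mathcal{I}_N})$ is a sequence of SCE converging to an assessment $(\sigma_N, (\overline{\sigma}^I)_{I \in \mathcal{I}_N})$, then the limit is an SCE of the limit game. The two conditions to verify are cursed-consistency and local best-response.

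To establish cursed-consistency, I would use a diagonal argument. For each $k$, let $(\sigma_N^{k,j})_{j=1}^\infty$ be a sequence of fully mixed profiles witnessing cursed-consistency of $(\overline{\sigma}^{I,k})_{I \in \mathcal{I}_N}$ with $\sigma_N^k$ under $\sigma_\nature^k$. Choose $j(k)$ so that $\tau^k \equiv \sigma_N^{k,j(k)}$ satisfies $\|\tau^k - \sigma_N^k\| < 1/k$ and $\|C((\tau^k, \sigma_\nature^k), I) - \overline{\sigma}^{I,k}\| < 1/k$ for each $I$. Then $\tau^k \to \sigma_N$, and by the triangle inequality, establishing $C((\tau^k, \sigma_\nature), I) \to \overline{\sigma}^I$ reduces to showing $C((\tau^k, \sigma_\nature), I) - C((\tau^k, \sigma_\nature^k), I) \to 0$ uniformly in $\tau^k$.

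The key calculation is that perturbing nature's strategy from $\sigma_\nature^k$ to $\sigma_\nature$ causes a uniformly vanishing change in the cursed-plausible conjecture, \emph{even when} $\tau^k$ approaches a non-fully-mixed limit. Write $C((\tau^k, \sigma_\nature'), I)(a, I') = \sum_{z \in M} \hat{\pi}^k(z) \hat{\pi}_\nature(z, \sigma_\nature') / \sum_{z \in D} \hat{\pi}^k(z) \hat{\pi}_\nature(z, \sigma_\nature')$, where $M$ and $D$ are the terminal-history successors of $\{(h,a): h \in F(I')\} \cap I \cap F(I')$ and $I \cap F(I')$ respectively, $\hat{\pi}^k(z)$ is the product of $\tau^k$-probabilities on the path to $z$, and $\hat{\pi}_\nature(\cdot, \sigma_\nature')$ is the nature part. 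Define $r_z^k = \hat{\pi}_\nature(z, \sigma_\nature^k)/\hat{\pi}_\nature(z, \sigma_\nature)$; since the game is finite and both nature strategies are fully mixed, $r_z^k \to 1$ uniformly in $z$. Factoring yields $C((\tau^k, \sigma_\nature), I) - C((\tau^k, \sigma_\nature^k), I) = \sum_{z \in M} w_z (1 - r_z^k/R^k)$, where the weights $w_z$ depend on $\tau^k$ but sum to at most one, and $R^k = \sum_{z \in D} w_z r_z^k \in [\min_z r_z^k, \max_z r_z^k]$. Hence the difference is bounded by $\max_z |1 - r_z^k/R^k|$, which vanishes uniformly in $\tau^k$. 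Validity of each limit conjecture, namely $\measure[\overline{\sigma}^I](I) > 0$, then follows by the same structural argument used in the existence proof of \Cref{thm:SCE_existence}, applied to the diagonal sequence $(\tau^k)$.

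For local best-response, the objective $\sum_{h \in I} \rho(h, \overline{\sigma}^I, I) u_n(h, \hat{\sigma}_n^I, \overline{\sigma}_{-n}^I)$ is jointly continuous in $(\hat{\sigma}_n^I, \overline{\sigma}^I, u_n)$ wherever $\measure[\overline{\sigma}^I](I) > 0$, and the partial strategy space is a compact product of simplices. Berge's maximum theorem then delivers upper hemicontinuity of the argmax correspondence. For each $k$ and each $I \in \mathcal{I}_n$, the $k$-th SCE provides $\tilde{\sigma}_n^{I,k} \in \argmax$ satisfying $\tilde{\sigma}_n^{I,k}(I) = \sigma_n^k(I)$; passing to a convergent subsequence gives $\tilde{\sigma}_n^I \in \argmax$ with $\tilde{\sigma}_n^I(I) = \sigma_n(I)$, so $\sigma_N$ is a local best-response to $(\overline{\sigma}^I)_{I \in \mathcal{I}_N}$. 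The main obstacle is the uniform bound in the diagonal step, since degenerating $\tau^k$ can make individual denominators in the conjecture formula tend to zero; the $r_z^k/R^k$ factorization resolves this by cancelling the potentially degenerate player factors between numerator and denominator.
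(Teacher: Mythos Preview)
The paper omits the proof entirely, saying only that it ``follows from the usual arguments'' and citing Proposition~2 of \cite{kreps1982sequential}. Your proposal is a correct and careful implementation of precisely those usual arguments: a closed-graph argument, a diagonal extraction to produce a single fully mixed witnessing sequence for cursed-consistency of the limit, and Berge's theorem for the local best-response condition. The one place where extra care is genuinely needed beyond the Kreps--Wilson template is the varying nature strategy $\sigma_\nature^k$, and your $r_z^k/R^k$ factorization handles it cleanly: because the player-probability factors $\hat{\pi}^k(z)$ appear identically in numerator and denominator of the conjecture formula, only the nature ratios survive in the comparison, and those are uniformly close to~$1$ since both $\sigma_\nature^k$ and $\sigma_\nature$ are fully mixed on a finite tree. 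The appeal to the Step~III bound from the proof of \Cref{thm:SCE_existence} to verify $\measure[\overline{\sigma}^I](I)>0$ is also correct, possibly after passing to a further subsequence along which the conditional probabilities $\measure[(\tau^k,\sigma_\nature)](\{h\}\mid I)$ converge; since you are proving closed graph this costs nothing.
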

We omit the proof of \Cref{prop:SCE_uhc} because it follows from the usual arguments.

\begin{example}\normalfont SCE allows cursedness about endogenous information. The game in \Cref{fig:SCE_starter} has a unique sequential equilibrium, in which $1$ plays $L$, $2$ plays $l$ after $L$ and $r$ after $R$, and $3$ plays $a$. Let us denote the information sets $I_1$, $I_2^L$, $I_2^R$, and $I_3$ respectively. Observe that the coarse sets are $\mathcal{F} = \left\{I_1, I_2^L \cup I_2^R, I_3 \right\}$. We construct a SCE in which player $1$ mixes. Let $p$ be the probability that $1$ plays $L$. By local rationality, in every SCE player $2$ plays $l$ after $L$ and $r$ after $R$.  Thus, player $3$ conjectures that $1$ plays $L$ with probability $p$ and that $2$ adopts a coarse strategy that plays $l$ with probability $p$. Here, mixing by player $1$ causes player $3$ to make a mistake. From $3$'s perspective, player $1$ plays $L$ and $R$ both with positive probability, and player $2$ plays $l$ and $r$ both with positive probability. But player $3$ neglects the correlation between the actions of $1$ and $2$. In our notation, the conjecture $\overline{\sigma}^{I_3}$ specifies $\overline{\sigma}_1^{I_3}(L,I_1)=p$ and $\overline{\sigma}_2^{I_3}(l,I_2^L)=\overline{\sigma}_2^{I_3}(l,I_2^R)=p$. There exists a SCE such that $p$ solves $2p(1-p)3=1$ and player $3$ plays $a$ with probability $\frac{1}{2}$.
\end{example}

\begin{figure}[h]
    \centering
\begin{tikzpicture}[font=\footnotesize,edge from parent/.style={draw,thick}]
    \tikzstyle{solid}=[circle,draw,inner sep=1.2,fill=black];
    \tikzstyle{hollow}=[circle,draw,inner sep=1.2]
    \tikzstyle{solid_red}=[circle,draw=red,inner sep=1.7,fill=red];
    \tikzstyle{solid_mag}=[circle,draw=magenta,inner sep=1.7,fill=red];
    \tikzstyle{level 1}=[level distance=5mm,sibling distance=64mm]
    \tikzstyle{level 2}=[level distance=10mm,sibling distance=32mm]
    \tikzstyle{level 3}=[level distance=10mm,sibling distance=16mm]
     \node(0)[hollow]{}
     child{node[solid]{}
        child{node[solid]{}
            child{node[below]{}
            edge from parent node[left]{$a$}
            }
            child{node[below]{}
            edge from parent node[right]{$b$}
            }
        edge from parent node[above left]{$l$}
        }
        child{node[solid]{}
            child{node[below]{}
            edge from parent node[left]{$a$}
            }
            child{node[below]{}
            edge from parent node[right]{$b$}
            }
        edge from parent node[above right]{$r$}
        }
     edge from parent node[above left]{$L$}
     }
     child{node[solid]{}
        child{node[solid]{}
            child{node[below]{}
            edge from parent node[left]{$a$}
            }
            child{node[below]{}
            edge from parent node[right]{$b$}
            }
        edge from parent node[above left]{$l$}
        }
        child{node[solid]{}
            child{node[below]{}
            edge from parent node[left]{$a$}
            }
            child{node[below]{}
            edge from parent node[right]{$b$}
            }
        edge from parent node[above right]{$r$}
        }
     edge from parent node[above right]{$R$}
     }
     ;

    \node[below]at(0-1-1-1){\stackon{\stackon{$1$}{$1$}}{$1$}};
    \node[below]at(0-1-1-2){\stackon{\stackon{$0$}{$1$}}{$1$}};
    \node[below]at(0-1-2-1){\stackon{\stackon{$1$}{$0$}}{$1$}};
    \node[below]at(0-1-2-2){\stackon{\stackon{$3$}{$0$}}{$1$}};
    \node[below]at(0-2-1-1){\stackon{\stackon{$1$}{$0$}}{$0$}};
    \node[below]at(0-2-1-2){\stackon{\stackon{$3$}{$0$}}{$2$}};
    \node[below]at(0-2-2-1){\stackon{\stackon{$1$}{$1$}}{$0$}};
    \node[below]at(0-2-2-2){\stackon{\stackon{$0$}{$1$}}{$2$}};
   \draw[dashed](0-1-1)to(0-2-2);
     \node[above,yshift=2]at(0){1};
     \node[above,yshift=2]at(0-1){2}; 
     \node[above,yshift=2]at(0-2){2}; 
     \node[above,yshift=2]at(0-1-1){3};
     \node[above,yshift=2]at(0-1-2){3};
     \node[above,yshift=2]at(0-2-1){3};
     \node[above,yshift=2]at(0-2-2){3};
     \node[below]at($(0-1-1)!.5!(0-2-2)$){$I_3$};
\end{tikzpicture}
    \caption{}
    \label{fig:SCE_starter}
\end{figure}

\begin{example}\normalfont \label{ex:groucho}
SCE players can be mistaken about their own future actions, and these mistakes can be costly. For instance, \Cref{fig:trading_confirm} depicts a game in which Groucho can join a club. If Groucho and the club both accept, then Groucho can confirm membership or resign at a cost. There are five information sets, and four coarse sets: the singleton set with the root and the sets $\{\omega_1,\omega_2\}$, $\{\omega_1a,\omega_2a\}$, and $\{\omega_1 a a, \omega_2 a a\}$.
\end{example}
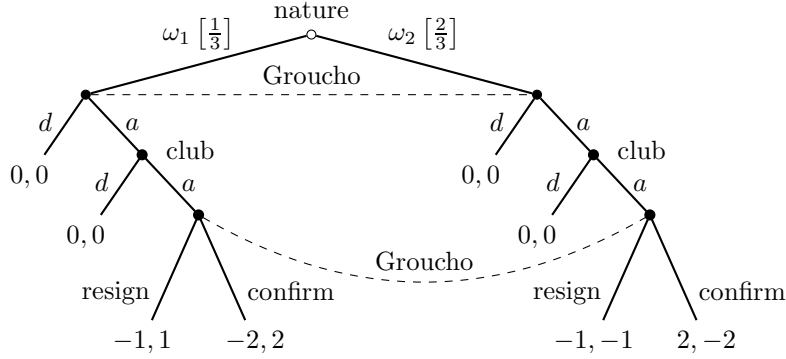
\begin{figure}[h]
    \centering
\begin{tikzpicture}[font=\footnotesize,edge from parent/.style={draw,thick}]
    \tikzstyle{solid}=[circle,draw,inner sep=1.2,fill=black];
    \tikzstyle{hollow}=[circle,draw,inner sep=1.2]
    \tikzstyle{level 1}=[level distance=8mm,sibling distance=60mm]
    \tikzstyle{level 2}=[level distance=8mm,sibling distance=15mm]
    \tikzstyle{level 3}=[level distance=8mm,sibling distance=15mm]
    \tikzstyle{level 4}=[level distance=14mm,sibling distance=15mm]
     \node(0)[hollow]{}
     child{node[solid]{}
        child{node[below]{$0,0$}
        edge from parent node[left, yshift=2]{$d$}
        }
       child{node[solid]{}
       child{node[below]{$0,0$} edge from parent node[left, yshift=2]{$d$}}
        child{node[solid]{}
            child{node[below]{$-1,1$}
                edge from parent node[below left, xshift=-5]{resign}
            }
            child{node[below]{$-2,2$}
                edge from parent node[below right, xshift=5]{confirm}
            }
        edge from parent node[right]{$a$}}
        edge from parent node[right]{$a$}
        }
    edge from parent node[above, yshift = .5mm]{$\omega_1 \left[\frac{1}{3}\right]$}
     }
child{node[solid]{}
        child{node[below]{$0,0$}
        edge from parent node[left, yshift=2]{$d$}
        }
       child{node[solid]{}
       child{node[below]{$0,0$} edge from parent node[left, yshift=2]{$d$}}
        child{node[solid]{}
            child{node[below]{$-1,-1$}
                edge from parent node[below left, xshift=-5]{resign}
            }
            child{node[below]{$2,-2$}
                edge from parent node[below right, xshift=5]{confirm}
            }
        edge from parent node[right]{$a$}}
        edge from parent node[right]{$a$}
        }
    edge from parent node[above, yshift=.5mm]{$\omega_2 \left[\frac{2}{3}\right]$}
     }
     ;
    \draw[dashed](0-1)to(0-2);
    \draw[dashed,bend right=30](0-1-2-2)to(0-2-2-2);
     \node[above,yshift=2]at(0){nature};
     \node[right,yshift=2,xshift=5]at(0-1-2){club};
     \node[right,yshift=2,xshift=5]at(0-2-2){club};
     \node[above]at($(0-1)!.5!(0-2)$){Groucho};
     \node[above,yshift=-25]at($(0-1-2-2)!.5!(0-2-2-2)$){Groucho};
\end{tikzpicture}
    \caption{Payoffs are written ``$x,y$", where $x$ is Groucho's payoff and $y$ is the club's payoff.}
    \label{fig:trading_confirm}
\end{figure}

\Cref{ex:groucho} has a unique SCE. Local rationality implies that the club accepts at history $\omega_1 a$ and declines at $\omega_2 a$. Given this behavior, at Groucho's first information set any cursed-consistent\footnote{This argument relies on cursed-consistency; cursed-plausibility does not suffice to pin down Groucho's conjecture about the club if Groucho plays $d$. There exists a WPCE in which Groucho plays $d$, conjecturing that the club will always play $d$.} conjecture specifies that nature plays $\left(\omega_1\ \frac{1}{3}, \ \omega_2 \ \frac{2}{3}\right)$, and that the club plays a coarse strategy that accepts with probability $\frac{1}{3}$. It is locally rational for Groucho to accept, planning to play \textit{confirm}, which results in the expected payoff $\frac{1}{9}\cdot (-2)+\frac{2}{9}\cdot 2$. However, at Groucho's later information set, the cursed-consistent conjecture specifies that nature plays $\omega_1$ and that the club plays $a$, so it is locally rational to resign.\footnote{At the later information set, Equation \eqref{eq:CPS_update} conditions on the event $\{\omega_1 a a,\ \omega_2 a a\}$. By contrast, ABEE with analogy partition $\mathcal{F}$ would require Groucho to respond, at both information sets, as if the club was playing the coarse strategy that accepts with probability $\frac{1}{3}$. Thus the unique SCE strategy profile cannot be supported under ABEE, demonstrating that SCE is not a refinement of ABEE or \textit{vice versa}.} Groucho Marx reportedly wrote, ``Please accept my resignation. I don't want to belong to any club that would accept me as a member."

In practice, observing events might help players reason hypothetically about similar events. Suppose that we extend \Cref{fig:trading_confirm} so that after resigning, Groucho has the opportunity to join a second club, whose value is drawn independently of the first club. Plausibly, Groucho will realize \textit{ex post} that he made a mistake when reasoning hypothetically about the first club, and thus decline to join the second club. SCE does not capture this kind of learning. How much players learn, and what counts as a similar event, is an empirical question.\footnote{In their experiment, \cite{esponda2014hypothetical} find that providing 40 rounds of detailed feedback increases the share of strategic voters to 50\%, compared to 22\% in a no-feedback treatment. They also run a no-feedback sequential treatment that does not require hypothetical thinking, which results in 76\% of subjects voting strategically. However, when subjects with 90 rounds of experience with sequential voting are then asked to vote simultaneously, the proportion of strategic play falls by about 54 percentage points.}

One might conjecture that SCE coincides with sequential equilibrium if every coarse set contains exactly one information set, that is, if $\mathcal{F} = \mathcal{I}$. This conjecture is false; \Cref{sec:FI_non_equiv} states a counterexample. In essence, even if $\mathcal{F} = \mathcal{I}$, it is still possible that the cursed-plausible conjectures neglect to condition on strategically-relevant hypothetical events.

\subsection{Simultaneous Bayesian games}\label{sec:simultaneous_bayesian}

We shed light on the relationship between SCE and cursed equilibrium, by considering a special class of games. In a \textbf{simultaneous Bayesian game}, nature moves just once, at the initial history, selecting a state $\omega \in \Omega$. Then each player $n \in N$ observes their \textbf{type}, which is a cell of some partition $\mathcal{T}_n$ of $\Omega$. Then all players move once and at the same time. Formally, we require that for any player $n \in N$ and any histories $h, h' \in H_n$, $h$ and $h'$ are in the same information set if and only if there exists type $T_n \in \mathcal{T}_n$ and $\omega, \omega' \in T_n$ such that $\omega \preceq h$ and $\omega' \preceq h'$. 

We additionally assume that player $n$'s available actions $A_n$ are the same at all information sets in $\mathcal{I}_n$, so that $F(I) = \cup \mathcal{I}_n$ for all $I \in \mathcal{I}_n$. Abusing notation, we write a strategy as a function of types rather than a function of information sets, so a strategy for player $n$ is a function $\sigma_n: A_n \times \mathcal{T}_n \rightarrow [0,1]$, satisfying $\sum_{a_n \in A_n} \sigma_n(a_n,T_n) = 1$.  We define  $\sigma_n(a_n, \omega) \equiv \{\sigma_n(a_n, T_n) : \omega \in T_n\}$.  The probability that nature plays state $\omega$ is denoted $\sigma_\nature(\omega)$.

We now define cursed equilibrium for simultaneous Bayesian games. Let $\beta(a_{-n} \mid \sigma_{-n}, T_n)$ be the probability that opponent action profile $a_{-n}$ is played, under strategy profile $\sigma_{-n}$, when player $n$'s type is $T_n$. That is,
\begin{equation}\label{eq:static_CE_beliefs}
    \beta(a_{-n} \mid \sigma_{-n}, T_n) \equiv \frac{\sum_{\omega \in T_n} \sigma_\nature(\omega)\prod_{m \in N \setminus \{n\} } \sigma_m(a_m,\omega) }{\sum_{\omega \in T_n} \sigma_\nature(\omega)}.
\end{equation}
A strategy profile $\sigma_N$ is a \textbf{cursed equilibrium (CE)} if for each player $n$ and type $T_n$, we have
\begin{equation}\label{eq:static_CE_BR}
    \sigma_n(T_n) \in \argmax_{d_n \in \Delta (A_n)} \left\{ \sum_{\omega \in T_n} \sigma_\nature(\omega) \sum_{a_{-n}} \beta(a_{-n} \mid \sigma_{-n}, T_n) u_n(d_n, a_{-n},\omega)\right\}.
\end{equation}
Under CE, each player responds as if the opponent action profiles are independent of the state, since the function $\beta$ in expression \eqref{eq:static_CE_BR} does not depend on $\omega$. However, each player may believe that their opponents' actions are correlated with each other, because expression \eqref{eq:static_CE_beliefs} captures the distribution over opponent action \textit{profiles} conditional on $T_n$.

Consider a variation of cursed equilibrium, such that each player believes that their opponents choose independently of each other. Let $\gamma_m(a_{m} \mid \sigma_m, T_n)$ be the probability that opponent $m$ plays action $a_m$, under strategy $\sigma_m$, when player $n$'s type is $T_n$, that is
\begin{equation}
    \gamma_m(a_{m} \mid \sigma_m, T_n) \equiv \frac{\sum_{\omega \in T_n} \sigma_\nature(\omega) \sigma_m(a_m,\omega) }{\sum_{\omega \in T_n} \sigma_\nature(\omega)}.
\end{equation}
A strategy profile $\sigma_N$ is an \textbf{independently cursed equilibrium (ICE)} if for each player $n$ and type $T_n$, we have
\begin{equation}
    \sigma_n(T_n) \in \argmax_{d_n \in \Delta (A_n)} \left\{ \sum_{\omega \in T_n} \sigma_\nature(\omega) \sum_{a_{-n}}  \prod_{m \in N \setminus \{n\}} \gamma_m(a_{m} \mid \sigma_m, T_n) u_n(d_n, a_{-n},\omega)\right\}.
\end{equation}

We contrast CE and ICE. CE posits that player $n$ understands the distribution over opponent \textit{action profiles} conditional on type $T_n$, but responds as though those action profiles are independent of the state. ICE posits that player $n$ understands the distribution over each opponent's \textit{actions} conditional on type $T_n$, but responds as though those actions are independent of the state \textit{and of each other}. CE and ICE are equivalent for two-player games.

Next we state that for simultaneous Bayesian games, ICE, WPCE, and SCE all predict the same behavior.
\begin{theorem}\label{thm:ICE_SCE}
    For any finite simultaneous Bayesian game, these statements are equivalent:
    \begin{enumerate}
        \item Strategy profile $\sigma_N$ is an ICE.
        \item There exists a system of conjectures $(\overline{\sigma}^I)_{I \in \mathcal{I}_N}$ such that $(\sigma_N, (\overline{\sigma}^I)_{I \in \mathcal{I}_N})$ is a WPCE.
        \item There exists a system of conjectures $(\overline{\sigma}^I)_{I \in \mathcal{I}_N}$ such that $(\sigma_N, (\overline{\sigma}^I)_{I \in \mathcal{I}_N})$ is a SCE.
    \end{enumerate}
\end{theorem}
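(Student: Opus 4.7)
The plan is to unpack what cursed-plausibility says in the simultaneous Bayesian game, observe that it pins down a unique conjecture system that matches the ICE formulas, and then argue that cursed-consistency is automatic.

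First I would analyze the structure of the extensive form. Fix a player $n$, a type $T_n \in \mathcal{T}_n$, and the corresponding information set $I \in \mathcal{I}_n$. Because every player moves exactly once and simultaneously, for any $m \neq n$ and any $I' \in \mathcal{I}_m$ we have $F(I') = \bigcup \mathcal{I}_m$, and every terminal history that passes through $I$ also passes through exactly one information set of $m$. Consequently, in the terminal-history sense used by $\mu$, the event $I \cap F(I')$ is the same as $I$, with $\mu_{\sigma}(I) = \sum_{\omega \in T_n} \sigma_\nature(\omega) > 0$ for every strategy profile $\sigma_N$ (since $\sigma_\nature$ is fully mixed). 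Moreover, the only information set of $n$ compatible with $I$ is $I$ itself, so the ``accords with $\sigma_n$'' clause reduces to $\overline{\sigma}^I_n(I) = \sigma_n(I)$.

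Second, I would compute the cursed-plausible conjecture. Substituting $I \cap F(I') = I$ into Equation \eqref{eq:CPS_update} and summing over the paths through each state $\omega \in T_n$,
\begin{equation}
\overline{\sigma}_m^I(a,I') \;=\; \frac{\sum_{\omega \in T_n} \sigma_\nature(\omega)\, \sigma_m(a,\omega)}{\sum_{\omega \in T_n} \sigma_\nature(\omega)} \;=\; \gamma_m(a \mid \sigma_m, T_n),
\end{equation}
for every $I' \in \mathcal{I}_m$. Thus the cursed-plausible conjecture assigns $m$ a coarse strategy with marginal $\gamma_m(\cdot \mid \sigma_m, T_n)$, and nature's partial strategy is the posterior over $T_n$. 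Because the conditioning event has positive probability, this conjecture is uniquely determined. The induced distribution over play at $I$ is exactly the product measure used in the ICE best-response, so $\sigma_n(T_n)$ maximizes $n$'s expected utility against $\overline{\sigma}^I$ if and only if $\sigma_n(T_n)$ satisfies the ICE optimality condition. This proves (1) $\Leftrightarrow$ (2).

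For (3) $\Rightarrow$ (2), I would cite \Cref{thm:cursed_consistent_implies_plausible}. For (2) $\Rightarrow$ (3), given a WPCE $(\sigma_N, (\overline{\sigma}^I)_{I \in \mathcal{I}_N})$, pick any sequence of fully mixed strategy profiles $\sigma_N^k \to \sigma_N$ (which exists in any finite game). Because the denominator $\sum_{\omega \in T_n} \sigma_\nature(\omega)$ in the expression above is a positive constant not depending on $\sigma_N^k$, the map $\sigma_N \mapsto C((\sigma_N,\sigma_\nature), I)$ is continuous for each $I$. Hence $C((\sigma_N^k,\sigma_\nature), I) \to C((\sigma_N,\sigma_\nature), I) = \overline{\sigma}^I$, so the same sequence $\sigma_N^k$ witnesses cursed-consistency simultaneously at all information sets, establishing the SCE.

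The main obstacle is the set-theoretic bookkeeping in the first step: one has to carefully verify that $I \cap F(I') = I$ under the $\mu$-convention that treats nonterminal sets as the terminal histories succeeding them, and that the only $n$-information set compatible with $I$ is $I$. Once these structural facts are pinned down, the rest is a direct computation and a continuity argument.
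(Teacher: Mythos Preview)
Your proposal is correct and follows essentially the same approach as the paper: establish that in a simultaneous Bayesian game the conditioning event $I \cap F(I')$ always has positive probability (the paper isolates this as a lemma), deduce that the cursed-plausible conjecture is unique and matches the ICE marginals $\gamma_m$, and conclude $(1)\Leftrightarrow(2)$; then obtain $(2)\Leftrightarrow(3)$ from the fact that the positive-probability conditioning event makes $C((\cdot,\sigma_\lambda),I)$ continuous. The paper's write-up of $(2)\Rightarrow(3)$ is terser than yours (it simply invokes the positive-probability lemma), but the content is the same continuity argument you spell out.
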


\begin{corollary}\label{corr:CE_SCE_equiv}
    For any two-player finite simultaneous Bayesian game, CE, WPCE, and SCE are equivalent.
\end{corollary}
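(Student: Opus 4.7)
The plan is to leverage \Cref{thm:ICE_SCE}, which already gives the equivalence of ICE, WPCE, and SCE for any finite simultaneous Bayesian game. Thus it remains only to show that, in the two-player case, ICE and CE coincide, and then chain the equivalences.

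First I would fix a two-player finite simultaneous Bayesian game with players $N = \{n, m\}$. The key observation is that, when there is a single opponent $m$, an opponent \emph{action profile} $a_{-n}$ is just a single action $a_m \in A_m$. Consequently the expressions defining $\beta$ and $\gamma_m$ collapse onto each other: for every type $T_n$ of player $n$ and every $a_m \in A_m$,
\begin{equation}
    \beta(a_m \mid \sigma_m, T_n) = \frac{\sum_{\omega \in T_n} \sigma_\nature(\omega)\sigma_m(a_m,\omega) }{\sum_{\omega \in T_n} \sigma_\nature(\omega)} = \gamma_m(a_m \mid \sigma_m, T_n),
\end{equation}
since the product $\prod_{m' \in N \setminus \{n\}} \sigma_{m'}(a_{m'},\omega)$ in \eqref{eq:static_CE_beliefs} has exactly one factor. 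Since there is only one opponent, the product $\prod_{m' \in N \setminus \{n\}} \gamma_{m'}(a_{m'} \mid \sigma_{m'}, T_n)$ in the definition of ICE is also a single factor, equal to $\gamma_m(a_m \mid \sigma_m, T_n)$.

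Next I would substitute into the CE and ICE best-response problems. In both concepts, player $n$'s type-$T_n$ expected utility from mixing over $d_n$ becomes
\begin{equation}
    \sum_{\omega \in T_n} \sigma_\nature(\omega) \sum_{a_m} \beta(a_m \mid \sigma_m, T_n)\, u_n(d_n, a_m,\omega),
\end{equation}
so the two argmax conditions are identical term by term. Hence $\sigma_N$ is a CE if and only if it is an ICE.

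Finally, I would invoke \Cref{thm:ICE_SCE} to conclude that, on two-player simultaneous Bayesian games, a strategy profile is a CE iff it is an ICE iff it is supported as a WPCE iff it is supported as a SCE, which is the desired equivalence. No real obstacle arises; the content is a direct specialization of \Cref{thm:ICE_SCE} plus the trivial fact that a product over a singleton index set equals its single factor.
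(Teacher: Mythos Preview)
Your proposal is correct and matches the paper's approach: the corollary is an immediate consequence of \Cref{thm:ICE_SCE} together with the observation (stated in the text just before the theorem) that CE and ICE coincide in two-player games because the product over opponents degenerates to a single factor. Your write-up simply makes that observation explicit.
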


\Cref{thm:ICE_SCE} indicates that CE and SCE substantially agree for simultaneous Bayesian games. They coincide entirely for games with two players, and for games with three or more players, they differ only in whether each player understands the correlation between other players' actions. 

\subsection{A generalization with partial cursedness}\label{sec:chi_psi_SCE}

We now generalize SCE to allow that each player {partially} understands how the other players' actions depend on their information, using a parameter $\chi \in [0,1]$ to scale how cursed the players are. This is analogous to $\chi$-cursed equilibrium in \cite{eyster2005cursed}.

Formally, a \textbf{sequential assessment} is a tuple $(\sigma_N, (\sigma_N^k)_{k=1}^\infty)$ consisting of a strategy profile $\sigma_N$ and a sequence of fully mixed strategy profiles $(\sigma_N^k)_{k=1}^\infty$ converging to $\sigma_N$. Recall that given conjecture $\overline{\sigma}^I$, the implied distribution over histories in $I$ is denoted $\rho(\overline{\sigma}^I,I)$. We say that the sequential assessment $(\sigma_N, (\sigma_N^k)_{k=1}^\infty)$ is a \textbf{$\chi$-sequential cursed equilibrium} ($\chi$-SCE) if for every player $n \in N$ and every information set $I \in \mathcal{I}_n$, there exists a partial strategy $\tilde{\sigma}^I_n$ such that $\tilde{\sigma}^I_n(I) = \sigma_n(I)$ and $\tilde{\sigma}^I_n$ maximizes $n$'s expected utility when
\begin{enumerate}
    \item with probability $\chi$, play is initialized by the distribution $\lim_{k \rightarrow \infty}\rho(\overline{\sigma}^{I,k},I)$ and proceeds according to $\lim_{k \rightarrow \infty} \overline{\sigma}^{I,k}_{-n}$, where the conjecture $\overline{\sigma}^{I,k}  $ is cursed-plausible with strategy profile $\sigma_N^k$,
    \item and with probability $1 - \chi$,  play is initialized by the distribution $\lim_{k \rightarrow \infty} \rho((\sigma_N^{k},\sigma_\lambda),I)$ and proceeds according to $\sigma_{-n} = (\sigma_{N\setminus n}, \sigma_\lambda)$, \textit{i.e.}\ the Bayesian conjecture,
\end{enumerate} 
and all the above limits exist.

The solution concept $\chi$-SCE provides a way to interpolate between sequential equilibrium ($\chi = 0$) and sequential cursed equilibrium ($\chi = 1$).

\begin{theorem}\label{thm:chi_psi_exists}
For any finite game and any $\chi \in [0,1]$, there exists a $\chi$-SCE.
\end{theorem}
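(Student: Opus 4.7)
The plan is to adapt the proof sketch of \Cref{thm:SCE_existence}, exploiting the fact that the definition of $\chi$-SCE already packages in the approximating sequence $(\sigma_N^k)_{k=1}^\infty$, so a Kakutani-plus-limits argument on $\epsilon$-perturbed games naturally produces both components of the sequential assessment.

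First, for each $\epsilon > 0$ consider the $\epsilon$-perturbed game in which every action at every information set must be played with probability at least $\epsilon$. In this game every strategy profile $\sigma_N^\epsilon$ is fully mixed, so the unique cursed conjecture $\overline{\sigma}^{I,\epsilon} \equiv C((\sigma_N^\epsilon,\sigma_\nature),I)$ is defined at every $I \in \mathcal{I}_N$, and the Bayesian conjecture $(\sigma_N^\epsilon,\sigma_\nature)$ induces a well-defined Bayes posterior $\rho((\sigma_N^\epsilon,\sigma_\nature),I)$ at every $I$. On the compact convex set of $\epsilon$-perturbed strategy profiles, define a correspondence that assigns to $\sigma_N^\epsilon$ the set of profiles $\sigma_N'$ such that, for each $n$ and each $I \in \mathcal{I}_n$, there is a partial strategy $\tilde{\sigma}_n^I$ accordant with $\sigma_n'$ that maximizes $n$'s expected utility under the $\chi$-mixture of the cursed and Bayesian conjectures induced by $\sigma_N^\epsilon$. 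Since the objective is continuous in $\sigma_N^\epsilon$ and linear in $\sigma_n'(I)$, this correspondence is nonempty-, convex-, and closed-valued with closed graph, and Kakutani's theorem yields a fixed point $\sigma_N^\epsilon$.

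Second, take $\epsilon_k \downarrow 0$. By compactness of the product of simplices at each information set, pass to a subsequence so that $\sigma_N^{\epsilon_k} \to \sigma_N$, and, along a further subsequence, so that $\overline{\sigma}^{I,\epsilon_k} \to \overline{\sigma}^I$, $\rho(\overline{\sigma}^{I,\epsilon_k},I) \to \rho^I_{\text{curse}}$, and $\rho((\sigma_N^{\epsilon_k},\sigma_\nature),I) \to \rho^I_{\text{Bayes}}$ all converge for every $I \in \mathcal{I}_N$ (finitely many information sets, each living in a compact space). The sequential assessment $(\sigma_N, (\sigma_N^{\epsilon_k})_{k=1}^\infty)$ is our candidate $\chi$-SCE; by construction the limits appearing in clauses (1) and (2) of the $\chi$-SCE definition exist.

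Third, verify the two defining requirements. The key lemma, inherited from the proof of \Cref{thm:SCE_existence}, is that $\measure[\overline{\sigma}^I](I) > 0$ at every $I$, so the limit cursed conjectures are legitimate conjectures; the Bayesian piece already satisfies this since along the subsequence the posteriors are well-defined. Then for each $n$ and each $I \in \mathcal{I}_n$, upper hemicontinuity of the local-best-response correspondence against the $\chi$-mixture of conjectures—which follows from continuity of expected utility in both the conjecture and the partial strategy, together with compactness—implies that the limit action $\sigma_n(I)$ arises from some partial strategy $\tilde{\sigma}_n^I$ maximizing $n$'s utility under the mixture of $(\rho^I_{\text{curse}}, \overline{\sigma}^I_{-n})$ and $(\rho^I_{\text{Bayes}}, \sigma_{-n})$. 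This is exactly the $\chi$-SCE condition.

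The main obstacle is the same one that arises in the $\chi = 1$ case: showing that $\measure[\overline{\sigma}^I](I) > 0$ survives to the limit, so the limiting cursed conjectures are actually conjectures in the sense required. This rests on the structure of the operator $C$, which updates nature's marginal to condition on $I$ being reached, and is exactly what the appendix proof of \Cref{thm:SCE_existence} establishes; once granted, everything else is a routine Kakutani-and-upper-hemicontinuity exercise, uniform in $\chi \in [0,1]$.
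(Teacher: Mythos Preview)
Your proposal is correct and follows exactly the approach the paper indicates: the paper omits the proof entirely, stating only that it is a straightforward adaptation of the proof of \Cref{thm:SCE_existence}, and your write-up is precisely that adaptation (Kakutani on $\epsilon$-perturbed games with the $\chi$-mixed objective, pass to a convergent subsequence, invoke the $\measure[\overline{\sigma}^I](I)>0$ lemma, and conclude by upper hemicontinuity).
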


We omit the proof, since it is a straightforward adaptation of the proof of \Cref{thm:SCE_existence}.

In our view, $\chi$-SCE buys tractability at the price of some psychological realism. Calculating best-responses under $\chi$-SCE requires us to solve for the Bayesian conjecture, yet players partially neglect the link between actions and information. Thus, we interpret $\chi$-SCE as an as-if model; the calculations required to solve the model do not literally describe the player's reasoning process.


\section{Evidence from experiments}\label{sec:applications}

We provide evidence for SCE and WPCE, drawing on data from laboratory experiments. In particular, we apply these solution concepts to the pivotal voting game of \cite{esponda2014hypothetical}, the learning-from-prices game of \cite{ngangoue2021learning}, and the common-value auctions of \cite{moser2019hypothetical}. Each of these experiments uses a manipulation, such as switching from simultaneous decisions to sequential decisions, that has no effect on the Bayesian Nash equilibrium (BNE) prediction and the CE prediction. Contrary to those predictions, the manipulations substantially change subject behavior.

We find that SCE and WPCE make predictions that broadly accord with the data. These solution concepts correctly predict some errors in contingent reasoning; namely, that people make better inferences from observed events than from hypothetical events.

Many lab experiments investigate strategic behavior by having a human subject interact with computer players. In such experiments, the instructions typically describe the computer player's information and actions separately, followed by the computer player's strategy. We extend our solution concepts  to computer players in the natural way.\footnote{Let $R \subseteq N$ be the set of computer players, with common-knowledge strategy profile $\sigma^*_R = (\sigma^*_n)_{n \in R}$. An assessment $(\sigma_N, (\overline{\sigma}^I)_{I \in \mathcal{I}_N})$ is a SCE if it is cursed-consistent, and is locally rational for the human players: for every human player $n\in  N\setminus R$ and every information set $I\in\mathcal{I}_n$, player $n$'s strategy $\sigma_n$ and their self-conjecture $\overline{\sigma}_n^I$ yield the same action distribution at $I$, and $\overline{\sigma}_n^I$ maximizes $n$'s expected utility when the play is initialized by $\rho(\overline{\sigma}_n^I,I)$ and proceeds according to $\overline{\sigma}_{-n}^I$. The computers play as prescribed $\sigma_R = \sigma^*_R$. WPCE extends similarly.}

\subsection{Pivotal voting}\label{sec:pivotal_voting}

\cite{esponda2014hypothetical} conduct a voting experiment to study hypothetical thinking. In their experiment, optimal play requires the subject to condition on their vote being pivotal. They find that most subjects fail to play optimally when they vote simultaneously, but play optimally when they vote after the other players.

The experiment proceeds as follows. A ball is selected at random from a jar with $10$ balls. The proportion of red balls is $p$, and the proportion of blue balls is $1-p$, for parameter $p\in\{.1,.2,\ldots,.9\}$. Three players vote, choosing either \textit{red} or \textit{blue}. If a majority vote for the correct color, the subject gets a payoff of $\$2$. Otherwise, the subject's payoff is $\$0$.

The three players consist of the lab subject and two computers. Both computers observe the color of the ball, and follow the same strategy: if the ball is red, vote \textit{red}; if the ball is blue, vote \textit{blue} with probability $q$ and \textit{red} with probability $1-q$, for parameter $q \in \{0.1,0.25,0.5,0.75,0.9\}$. The participant does not observe the color of the ball, but knows the computer strategies and the parameters $p$ and $q$.

The participant's vote only matters when it is pivotal---that is, when one computer votes \textit{red} and the other votes \textit{blue}. Conditional on that event, the ball is blue for sure, so the payoff-maximizing strategy is to vote \textit{blue} regardless of the parameters $p$ and $q$.

The experiment has two main treatments. In the \textit{simultaneous treatment}, all three players vote simultaneously. In the \textit{sequential treatment}, the subject observes the computers' votes before choosing their own vote. Each treatment proceeds for $45$ rounds without feedback, with varying parameters $p$ and $q$.

\cite{esponda2014hypothetical} label a subject `strategic' if they always vote \textit{blue} in the simultaneous treatment, and vote \textit{blue} whenever they are pivotal in the sequential treatment. They find that $22$ percent of subjects are strategic in the simultaneous treatment, compared to $76$ percent of subjects in the sequential treatment.

We say that an equilibrium has \textbf{strategic voting} if the subject votes \textit{blue} in the simultaneous treatment, and votes \textit{blue} whenever they are pivotal in the sequential treatment. We say that an equilibrium has \textbf{na\"ive voting} if the subject votes \textit{red} if $p > .5$ and \textit{blue} if $p < .5$ in the simultaneous treatment, and behaves in that way whenever they are pivotal in the sequential treatment.

Neither BNE nor CE explains the difference between the simultaneous treatment and sequential treatment. BNE predicts strategic voting in both treatments. In CE, the subject responds as though the computers' votes are independent of their private information, so CE predicts na\"ive voting in both treatments.\footnote{\cite{eyster2005cursed} define $\chi$-cursed equilibria, for parameter $\chi \in [0,1]$ that interpolates between BNE and CE. For any parameters $p$ and $q$, there is a unique threshold such that for all $\chi$ strictly above that threshold, $\chi$-CE predicts na\"ive voting in both treatments, and for $\chi$ strictly below that threshold, $\chi$-CE predicts strategic voting in both treatments.}

For the treatments in this experiment, SCE and WPCE make the same predictions.

WPCE predicts na\"ive voting in the simultaneous treatment. This treatment is a simultaneous Bayesian game, so by \Cref{thm:ICE_SCE} WPCE coincides with ICE.

WPCE predicts strategic voting in the sequential treatment. Consider the extensive game; at the initial history, nature plays $\omega_{\text{red}}$ with probability $p$ and $\omega_{\text{blue}}$ with probability $1-p$. Suppose that one computer votes \textit{red}, and the other computer votes \textit{blue}, and consider the corresponding information set of the subject. This information set is reached with positive probability, so the cursed-plausible conjecture is pinned down by the strategy profile and Bayes' rule. Conditional on that information set, nature has played $\omega_{\text{blue}}$ for sure. In the resulting conjecture, nature plays $\omega_{\text{blue}}$ for sure, one computer adopts the coarse strategy that plays \textit{blue} for sure, and the other computer adopts the coarse strategy that plays \textit{red} for sure. (Of course, the computers are not actually playing coarse strategies; but cursed-plausibility leads the subject to effectively infer the information content of being pivotal.) Local rationality implies that the subject votes \textit{blue}, resulting in strategic voting.

\subsection{Learning from prices}\label{sec:learning_from_prices}

\cite{ngangoue2021learning} conduct an experiment to study how traders learn from prices about a common-value risky asset. They compare a simultaneous treatment, in which the trader submits a limit order, to a sequential treatment, in which the trader observes the price before deciding whether to trade. Both treatments are equivalent in theory, but \cite{ngangoue2021learning} find that subjects make better inferences when observing a realized price than when thinking about a hypothetical price. 

The experiment studies a game with two players, trader $1$ and trader $2$. They can each buy or sell one unit of a common-value asset; its value $V$ is a random variable, with support on $\{\underline{v},\overline{v}\}$, for $\underline{v} < \overline{v}$, each with probability $.5$. Trader $n$'s payoff from buying at price $p_n$ is $V - p_n$ and their payoff from selling at price $p_n$ is $p_n - V$. Each trader has a type $t_n$, a random variable on $[0,1]$ with density
\begin{equation}
    f(t \mid V)=
        \begin{cases}
            2(1-t) & \text{if } V=\underline{v}\\
            2t & \text{if } V=\overline{v}
        \end{cases}.
\end{equation}
The private signals are independent conditional on $V$. Note that $E[V \mid t_n] = \underline{v} + t_n(\overline{v} - \underline{v})$. To ease notation, we normalize $\underline{v}$ to 0 and $\overline{v}$ to 1, so that $E[V \mid t_n] = t_n$.

In the \textbf{simultaneous learning treatment}, the order of play is as follows:
\begin{enumerate}
    \item Price $p_1$ is drawn $U[0, 1]$, independently of the common value and the signals.
    \item Trader $1$ observes their price $p_1$ and their type $t_1$ and either buys or sells.
    \item If trader $1$ buys, the price rises, and if trader $1$ sells, the price falls, according to the known formula
\begin{equation}\label{eq:price_update}
        p_2 = \begin{cases}
        \frac{1+p_1}{2} &  \text{ if $1$ bought}\\
        \frac{p_1}{2} &  \text{ if $1$ sold}
        \end{cases}.
\end{equation}
    \item Trader $2$ observes trader $1$'s price $p_1$ and their own type $t_2$, but does not observe trader $1$'s choice or the resulting price $p_2$.
    \item Trader $2$ chooses a limit order $b_2$, buying if $p_2 \leq b_2$ and selling otherwise.\footnote{\cite{ngangoue2021learning} also permit reverse limit orders, allowing the trader to sell at prices below $b_2$ and buy at prices above $b_2$. Such orders are sub-optimal in theory and rare in the data, so we omit them for ease of exposition.}
\end{enumerate}
The \textbf{sequential learning treatment} is the same, except for the last two steps:
\begin{itemize}
    \item [4*.] Trader $2$ observes trader $1$'s price $p_1$, their own type $t_2$, {and their own price $p_2$}.
    \item [5*.] Trader $2$ chooses to buy or sell at price $p_2$.
\end{itemize}

We now study the BNE of these games. In both treatments, trader $2$'s choices do not affect trader $1$'s payoffs. Upon observing $t_1$, trader $1$'s best response is to buy if $p_1 \leq t_1 = E[V \mid t_1]$ and to sell otherwise.\footnote{Trader $1$'s best response is generically unique, up to indifference at $p_1 = t_1$.}

The price updating rule \eqref{eq:price_update} reflects a no-arbitrage condition; it sets $p_2$ to be equal to the posterior expectation of the common value $\theta$, conditional on $p_1$ and trader $1$'s choice. Thus, if trader $2$ lacked a private signal, they would be indifferent between buying and selling at \textit{every} realization of $p_2$. Trader $2$'s private signal breaks that indifference. In every BNE of the simultaneous trading game, trader $2$ submits a limit order that buys for sure if $t_2 > .5$ and submits a limit order that sells for sure if $t_2 < .5$.\footnote{Such an order must satisfy $b_2 \in \left(\frac{1 + p_1}{2},1\right]$ if $t_2 > .5$, and $b_2 \in \left[0, \frac{p_1}{2}\right)$ if $t_2 < .5$.} In every BNE of the sequential trading game, trader $2$ buys if $t_2 > .5$ and sells if $t_2 < .5$.

In cursed equilibrium, trader $1$ behaves as in BNE, but trader $2$ ignores the information content of trader $1$'s actions, and hence ignores the information content of the price $p_2$. We have $E[V \mid t_2] = t_2$, so a cursed trader $2$ thus strictly prefers to buy if $t_2 > p_2$ and strictly prefers to sell if $t_2 < p_2$. In every CE of the simultaneous trading game, trader $2$ submits a limit order that ensures this happens; one such order is $b_2 = t_2$.\footnote{More generally, such a limit order satisfies $b_2 < \frac{p_1}{2}$ if $t_2 < \frac{p_1}{2}$, $b_2 \in \left(\frac{p_1}{2}, \frac{1 + p_1}{2}\right)$ if $t_2 \in \left(\frac{p_1}{2}, \frac{1 + p_1}{2}\right)$, and $b_2 > \frac{1 + p_1}{2}$ if $t_2 > \frac{1 + p_1}{2}$.} In every CE of the sequential trading game, trader $2$ buys if $t_2 > p_2$ and sells if $t_2 < p_2$. Let us call such trader $2$ behavior \textbf{na\"ive trading}.

\cite{ngangoue2021learning} adopt a conservative measure of na\"ive trading, looking only at realizations of $p_1$, $p_2$, and $t_2$ for which the na\"ive response differs from the empirical best-response. Based on this measure, they find that $37$ percent of such trades are consistent with na\"ive trading in the simultaneous treatment, compared to $19$ percent in the sequential treatment. That is, subjects are better at making inferences from realized prices, compared to making inferences from hypothetical prices.

These data are at odds with both the BNE prediction and the CE prediction. BNE does not predict na\"ive trading in either treatment, and CE predicts na\"ive trading in both treatments.

Since nature and trader $2$ have continuum action sets, we apply WPCE rather than SCE to the trading games.\footnote{If we discretize the action sets so that SCE is applicable, it would yield similar results to WPCE.} In any WPCE, trader $1$ behaves as in BNE, buying if $t_1 > p_1$ and selling if $t_1 < p_1$.

Consider the simultaneous treatment.  Trader $2$ observes $p_1$ and $t_2$. In every cursed-plausible conjecture at that information set, nature plays $\overline{v}$ with probability $t_2$, and trader $1$ plays a coarse strategy that is independent of $t_1$. Thus, WPCE coincides with CE in the simultaneous treatment, predicting na\"ive trading.

Consider the sequential treatment. Trader $2$ observes $p_1$, $p_2$, and $t_2$. In every cursed-plausible conjecture at that information set, nature plays $\overline{v}$ with probability equal to the Bayesian posterior probability, conditional on $p_1$, $p_2$, and $t_2$. Thus, WPCE coincides with BNE in the sequential treatment, predicting no na\"ive trading. 

WPCE formalizes the explanation offered by \cite{ngangoue2021learning}, who write,
\begin{quote}
    [\ldots] we find that the degree of ‘cursedness of beliefs’ is higher when the information contained in the price is less accessible: with price not yet realized, traders behave as if they tend to ignore the connection between other traders’ information and the price.
\end{quote}

\subsection{Common-value auctions with \textit{ex post} learning.}\label{sec:auctions_with_learning}

The experiments in \Cref{sec:pivotal_voting} and \Cref{sec:learning_from_prices} concerned games in which one player perfectly observes the other players' moves. But SCE makes novel predictions even when one player only partially observes the other players' moves. We now study such an application.

\cite{moser2019hypothetical} conducts an experiment in which subjects bid in a second-price auction for a common-value object, then learn whether their bid is higher than their opponent's bid, and then can revise their bid. Subjects substantially revise their bids in response to this information, contrary to the predictions of BNE and CE.\footnote{The experiment in \cite{moser2019hypothetical} does not have a treatment with feedback. Arguably, experiments without feedback are less apt for testing theories of equilibrium, such as those we consider here.} 

First, we describe the initial auction (stage $1$). There are two players $\{1,2\}$, each with type $t_n$ independently and uniformly distributed on the set $\mathbb{T} = \{0,1,\ldots , 10\} \cup \{50, 51, \ldots , 60\}$. The common value of the object is $V \equiv t_1 + t_2$.

Both players participate in a second-price auction for the object, simultaneously choosing integer bids $b_n$ between $0$ and $120$. The higher bidder wins, and pays the second-highest bid, for payoff $V - \min\{b_1,b_2\}$. Ties are broken in favor of the player with the higher type. If the types are equal, then both players have a payoff of $0$.

Considered by itself, stage $1$ has a unique symmetric BNE. (There are asymmetric BNE that involve weakly dominated bids.) In that BNE, player $n$ bids $b_n(t_n) = 2t_n$. Moreover, this BNE is an \textit{ex post} equilibrium; bidding $b_n(t_n) = 2t_n$ is a best-response to any realization of the other player's type.\footnote{The payoff to player $1$ from winning the object is $t_1 + t_2 - b_2(t_2) = t_1 - t_2$ and the payoff from losing is $0$. Hence, it is optimal to bid $b_1 = 2t_1$, which ensures that player $1$ wins whenever $t_1 > t_2$ and loses whenever $t_1 < t_2$.}

Considered by itself, stage $1$ also has a symmetric CE. A cursed player $n$ responds as though the other player's bid $b_{-n}$ is independent of their type $t_{-n}$. Given this conjecture, it is optimal to bid $b_n(t_n) = t_n + E[t_{-n} \mid t_n] = t_n + 30$. From the perspective of the cursed player, this bid is always a best-response.\footnote{That is, if we restrict the opponent to strategies that do not depend on their type, then the bid $t_n + 30$ does at least as well as any other bid.} In this setting, CE predicts higher bids than BNE for $t_n \in \{0,1,\ldots,10\}$, and lower bids than BNE for $t_n \in \{50,51,\ldots,60\}$.

We now describe stage $2$. Player $1$ is informed whether $b_1 \leq b_2$ or $b_1 > b_2$, and then can revise their bid to $\hat{b}_1$. Player $1$ is compensated for stage $1$ and stage $2$ separately. Player $2$ is only compensated for the results of stage $1$.

The symmetric stage $1$ bidding functions that we described above remain equilibrium behavior even when stage $2$ is appended.\footnote{In the experiment of \cite{moser2019hypothetical}, subjects learn the details of stage $2$ only after stage $1$ is completed, but it makes no difference to the BNE prediction and the CE prediction.}

Given symmetric BNE bidding in stage $1$, repeating player $1$'s original bid in stage $2$ is a best response regardless of whether $b_1 \leq b_2$ or $b_1 > b_2$. Thus, the two-stage game has a BNE in which the stage $2$ bidding function is the same as in stage $1$, that is $\hat{b}_1(t_1) = 2t_1 = b_1(t_1)$.

If $b_2$ is independent of $t_2$, then it is optimal for player $1$ to bid $t_1 + 30$, regardless of the realization of $b_2$. Thus, the two-stage game has a CE in which the stage $2$ bidding function is the same as in stage $1$, that is $\hat{b}_1(t_1) = t_1 + 30 = b_1(t_1)$.\footnote{Here we apply CE by reducing the game to its Bayesian normal form. For each type $t_1$, player $1$ chooses a stage $1$ bid, a stage $2$ bid conditional on learning that $b_1 \leq b_2$, and a stage $2$ bid conditional on learning $b_1 > b_2$.}

In this sense, both BNE and CE predict that the new information does not cause any change in player $1$'s bids. However, \cite{moser2019hypothetical} reports that subjects react substantially to the information, raising their bids when informed that $b_1 \leq b_2$ and lowering their bids when informed that $b_1 > b_2$. Upon learning that $b_1 \leq b_2$, $79\%$ to $83\%$ of bids are revised upwards, and upon learning that $b_1 > b_2$, $42\%$ to $90\%$ of bids are revised downwards.\footnote{The range reflects variation that depends on player $1$'s type $t_1$. For a detailed breakdown, see Figure 6 of \cite{moser2019hypothetical}.}

SCE provides an explanation for this pattern of behavior. In particular, at any stage $1$ information set, both players' SCE conjectures are the same as in cursed equilibrium, so there exists a SCE in which both players bid $b_n(t_n) = t_n + 30$ in stage $1$, with player $1$ expecting to repeat that same bid in stage $2$. Take any on-path stage-$2$ information set at which player $1$ with type $t_1 = t$ learns that $b_1 \leq b_2$. Then cursed-consistency implies that player $1$ conjectures that $t_2$ is independent of player $2$'s bid and uniformly distributed on $\{t' \in \mathbb{T}: t \leq t'\}$. (This is different from the actual distribution; $2$'s bid is exactly $t_2 + 30$, for $t_2$ distributed as above.) It is a locally rational for player $1$ to revise their bid to 
\begin{equation}\label{eq:SCE_revised_bid}
    E[t_1 + t_2 \mid t_1 = t \leq t_2],
\end{equation}
if necessary rounded to one of the two nearest integers.\footnote{Recall that bids are restricted to integers between $0$ and $120$.} Note that \eqref{eq:SCE_revised_bid} weakly exceeds the original bid $b_n(t) = E[t_1 + t_2 \mid t_1 = t]$, with strict inequality for $t > 0$. Thus, player $1$ revises their bid upwards upon learning that $b_1 \leq b_2$. A symmetric argument establishes that player $1$ revises their bid downwards upon learning that $b_1 > b_2$.

\section{Continuous-time auctions}\label{sec:dynamic_auctions}

Cursed-plausibility formalizes the idea that players condition on all of their information, and not just their type. This idea has implications for common-value auctions, the primary motivating application of \cite{eyster2005cursed}. In particular, it might explain the different bidding behavior in dynamic auctions compared to static auctions, documented by \cite{levin1996revenue} and \cite{levin2016separating}.

WPCE and SCE are extensive-form solution concepts, whereas the standard models of dynamic auctions use continuous time. Discretized versions of those models are intractable except for special cases. To ease the comparison with standard results, we define solution concepts for Dutch and English auctions that formalize the idea that each bidder understands the marginal distribution of the other bidders' signals, conditional on all the information revealed so far, but neglects the link between the other bidders' signals and their future bids.

We use the standard model of common values with affiliated signals \citep{milgrom1982theory}. There is one object and a set of bidders $N$. Let $X_N = (X_1,\ldots,X_{|N|})$ be a vector of real-valued random variables (`signals'), one for each bidder, each with support on the interval $[\underline{x}, \overline{x}]$. Let $f(x_N)$ denote the joint probability density of the signals. We assume that $f$ is symmetric. We assume that the signals are affiliated, that is for all $x_N, x_N' \in [\underline{x},\overline{x}]^N$, we have
\begin{equation}
    f(x_N \vee x_N') f(x_N \wedge x_N') \geq f(x_N) f(x_N').
\end{equation}

The common value of the object is a random variable $V$, and the bidders have quasilinear utility, with payoff $V - p$ if they win the object at price $p$ and $0$ otherwise.  We denote
\begin{equation}
    w(x_1,\ldots,x_{|N|}) \equiv \E\left[V \mid X_N = (x_1,\ldots,x_{|N|})\right],
\end{equation}
and assume that the function $w$ is symmetric, non-negative, continuous, and non-decreasing. We assume that $\E[V] < \infty$. We also define
\begin{equation}
    {v}(x) \equiv \E\left[V \mid X_1 = x \right].
\end{equation}

Next we define $Y_1 \equiv \max_{n \neq 1} X_n$. Let $f_{Y_1}(y \mid x)$ be the density of the random variable $Y_1$ conditional on $X_1 = x$, and $F_{Y_1}(y \mid x)$ the corresponding cumulative distribution.

As usual, we focus on symmetric pure-strategy equilibria.
\begin{theorem}\label{thm:first_price}
    It is a cursed equilibrium of the first-price auction for each bidder $n$ with signal $X_n = x$ to bid $b^{\mathrm{1P}}(x)$, defined by the linear differential equation    \begin{equation}\label{eq:first_price_ODE}
        \frac{d}{d x}b^{\mathrm{1P}}(x) = [{v}(x) - b^{\mathrm{1P}}(x)]\frac{f_{Y_1}(x \mid x)}{F_{Y_1}(x \mid x)},
    \end{equation}    \begin{equation}\label{eq:first_price_boundary}
        b^{\mathrm{1P}}(\underline{x}) = {v}(\underline{x}).
    \end{equation}
\end{theorem}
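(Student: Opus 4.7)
The plan is to verify that the symmetric profile in which every bidder uses $b^{\mathrm{1P}}$ is a cursed equilibrium by writing down the cursed payoff, deriving Equation \eqref{eq:first_price_ODE} as a first-order condition, justifying the boundary condition \eqref{eq:first_price_boundary}, and then promoting the first-order condition to a global best-response statement. Fix the hypothesis that every opponent adopts a strictly increasing differentiable strategy $b$. Under the cursed conjecture, a bidder with signal $x$ submitting bid $\beta$ evaluates the marginal winning probability $F_{Y_1}\bigl(b^{-1}(\beta)\mid x\bigr)$ and the marginal posterior value $v(x) = \E[V\mid X_1 = x]$ as though these distributions were independent, producing cursed payoff
\begin{equation}
\Pi(\beta, x) \;=\; F_{Y_1}\bigl(b^{-1}(\beta)\mid x\bigr)\bigl(v(x) - \beta\bigr).
\end{equation}
Differentiating in $\beta$, setting the derivative to zero, and imposing symmetry $\beta = b(x)$ recovers exactly \eqref{eq:first_price_ODE}.

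The boundary condition $b^{\mathrm{1P}}(\underline{x}) = v(\underline{x})$ would be pinned down by ruling out the alternatives. If $b^{\mathrm{1P}}(\underline{x}) > v(\underline{x})$, then types near $\underline{x}$ face $v(x) - b^{\mathrm{1P}}(x) < 0$ upon winning and would strictly prefer to bid below the support. If $b^{\mathrm{1P}}(\underline{x}) < v(\underline{x})$, the lowest type can deviate to a bid slightly above $b^{\mathrm{1P}}(\underline{x})$ and, because $F_{Y_1}(y\mid\underline{x}) > 0$ for $y > \underline{x}$, obtain strictly positive cursed profit. The value $v(\underline{x})$ is thus the unique admissible initial condition; the sign of the right-hand side of \eqref{eq:first_price_ODE} then yields $b^{\mathrm{1P}}(x) < v(x)$ throughout $(\underline{x}, \overline{x}]$, making $b^{\mathrm{1P}}$ strictly increasing and guaranteeing its inverse is well-defined.

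The main obstacle is promoting the local first-order condition to a global best response. The plan is the standard auction-theoretic route: write the payoff from ``pretending'' to have signal $y$ as $\Phi(y, x) \equiv F_{Y_1}(y\mid x)\bigl(v(x) - b^{\mathrm{1P}}(y)\bigr)$, differentiate in $y$, and substitute $(b^{\mathrm{1P}})'(y)$ out using \eqref{eq:first_price_ODE} to obtain
\begin{equation}
\partial_y \Phi(y, x) \;=\; f_{Y_1}(y\mid x)\bigl(v(x) - v(y)\bigr) + \bigl(v(y) - b^{\mathrm{1P}}(y)\bigr) F_{Y_1}(y\mid x)\left[\frac{f_{Y_1}(y\mid x)}{F_{Y_1}(y\mid x)} - \frac{f_{Y_1}(y\mid y)}{F_{Y_1}(y\mid y)}\right].
\end{equation}
The first summand has the sign of $x - y$ because affiliation together with monotonicity of $w$ makes $v$ non-decreasing, while the bracket shares that sign because affiliation (via the monotone likelihood ratio property) implies the reverse hazard rate $f_{Y_1}(y\mid x)/F_{Y_1}(y\mid x)$ is non-decreasing in $x$. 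Combined with $v(y) > b^{\mathrm{1P}}(y)$ established above, this shows $\Phi(\cdot, x)$ is single-peaked at $y = x$. Finally, deviations to bids above $b^{\mathrm{1P}}(\overline{x})$ win at a price that exceeds $v(x)$, and those below $b^{\mathrm{1P}}(\underline{x})$ win with probability zero, so both are dominated by $b^{\mathrm{1P}}(x)$.
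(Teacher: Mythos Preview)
Your proof is correct and takes essentially the same approach as the paper, which simply writes down the cursed objective $\int_{\underline{x}}^{b^{-1}(\beta)}(v(x)-\beta)f_{Y_1}(\alpha\mid x)\,d\alpha$, derives the ODE from its first-order condition, and defers the sufficiency argument to Theorem~14 of \cite{milgrom1982theory}; you have spelled out exactly those Milgrom--Weber details (the reverse-hazard-rate monotonicity and the single-peakedness of $\Phi(\cdot,x)$). One small slip: your final sentence claims that a bid $\beta > b^{\mathrm{1P}}(\overline{x})$ ``wins at a price that exceeds $v(x)$,'' but since $b^{\mathrm{1P}}(\overline{x}) < v(\overline{x})$ this fails for $x$ near $\overline{x}$; the correct (and standard) argument is that any such bid is dominated by $b^{\mathrm{1P}}(\overline{x})$ itself, which wins with probability one at a lower price and is already covered by your single-peaked analysis at $y=\overline{x}$.
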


In other words, in the first-price auction with common values, each bidder behaves as if they have a private value equal to $v(x)$, which is the expected common value conditional on their own signal.

In a Dutch auction, the price starts high (above $w(\overline{x},\ldots,\overline{x})$) and descends continuously until some bidder claims the object for that price, which ends the auction. The Dutch auction is strategically equivalent to the first-price auction, so cursed equilibrium makes the same prediction for both formats.

We now define a solution concept that modifies cursed equilibrium, to capture the idea that bidders in a Dutch auction understand the information content of realized events. Suppose that all bidders bid according to increasing function ${b}$. When the clock price is $p$, bidder $1$ infers that $Y_1 \leq {b}^{-1}(p)$. Consequently, conditional on $X_1 = x$ and facing a price of ${b}(y)$, bidder $1$ believes that upon winning at a price $\beta \leq b(y)$, the expected common value of the object is
\begin{equation}\label{eq:CCE_dutch}
    \overline{v}(x,y) \equiv \E\left[ V \mid X_1 = x, Y_1 \leq y \right].
\end{equation}
By contrast, the correct calculation would yield
\begin{equation}\label{eq:rational_dutch}
    \E\left[ V \mid X_1 = x, Y_1 \leq b^{-1}(\beta) \right].
\end{equation}
At price $\beta = b(y)$, Expression \eqref{eq:CCE_dutch} coincides with the correct calculation \eqref{eq:rational_dutch}, but contingent on the price falling further to $\beta < b(y)$, Expression \eqref{eq:CCE_dutch} overestimates the common value.

Thus, we define a (symmetric) \textbf{clock cursed equilibrium of the Dutch auction} as a bidding function ${b}$ such that ${b}(x)$ is in
\begin{equation}\label{eq:CCE_Dutch_obj}
    \argmax_\beta \int_{\underline{x}}^{{b}^{-1}(\beta)}(\overline{v}(x,x) - \beta) f_{Y_1}(\alpha \mid x) d\alpha .
\end{equation}
Expression \eqref{eq:CCE_Dutch_obj} implies that the bidder with signal $X_n = x$ is willing to claim the object when the price hits ${b}(x)$. We additionally require that the bidder is willing to wait at all prices that exceed ${b}(x)$, that is, for all $x' > x$, there exists $\beta < b(x')$ that maximizes
\begin{equation} \label{eq:CCE_Dutch_wait}   \int_{\underline{x}}^{{b}^{-1}(\beta)}(\overline{v}(x,x') - \beta) f_{Y_1}(\alpha \mid x) d\alpha.
\end{equation}

\begin{theorem}\label{thm:dutch} Assume that $\overline{v}(x,y)$ is increasing in $x$. It is a clock cursed equilibrium of the Dutch auction for each bidder $n$ with signal $X_n$ to bid $b^{\mathrm{Dutch}}(X_n)$, defined by the linear differential equation
    \begin{equation}\label{eq:dutch_ODE}
        \frac{d}{d x}b^{\mathrm{Dutch}}(x) = [\overline{v}(x,x) - b^{\mathrm{Dutch}}(x)]\frac{f_{Y_1}(x \mid x)}{F_{Y_1}(x \mid x)},
    \end{equation}
    \begin{equation}\label{eq:dutch_boundary}
        b^{\mathrm{Dutch}}(\underline{x}) = \overline{v}(\underline{x},\underline{x}).
    \end{equation}
\end{theorem}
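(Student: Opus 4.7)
The plan is to derive $b^{\mathrm{Dutch}}$ from the first-order condition of the bidder's local cursed optimization problem, then verify the two equilibrium requirements in the definition (the claim condition (\ref{eq:CCE_Dutch_obj}) and the waiting condition (\ref{eq:CCE_Dutch_wait})) using affiliation. I would first compute the payoff when bidder $n$ with signal $x$ considers claiming at price $\beta$, assuming the opponents follow a common increasing strategy $b$. By symmetry the bidder wins if and only if $Y_1 \leq b^{-1}(\beta)$, and the cursed conjecture treats the expected value upon winning as the constant $\overline{v}(x,x)$, so the objective becomes $(\overline{v}(x,x)-\beta)\, F_{Y_1}(b^{-1}(\beta)\mid x)$. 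Substituting $y = b^{-1}(\beta)$, differentiating with respect to $y$, and imposing the fixed-point condition $y = x$ produces exactly the ODE (\ref{eq:dutch_ODE}). The boundary condition (\ref{eq:dutch_boundary}) is forced in the familiar Milgrom--Weber way: at $\underline{x}$ the winning probability vanishes, so any bid below $\overline{v}(\underline{x},\underline{x})$ loses in the limit to a slightly higher bid, while any bid above it yields strictly negative expected profit for signals just above $\underline{x}$. Standard ODE theory then delivers a unique, strictly increasing $b^{\mathrm{Dutch}}$ on $(\underline{x},\overline{x}]$ with $b^{\mathrm{Dutch}}(x) < \overline{v}(x,x)$ for $x > \underline{x}$.

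For the claim condition I would verify that $y = x$ globally maximizes $\tilde U(y;x) \equiv (\overline{v}(x,x) - b^{\mathrm{Dutch}}(y))\, F_{Y_1}(y \mid x)$. Differentiating and substituting the ODE to eliminate $(b^{\mathrm{Dutch}})'(y)$ gives
\begin{equation*}
\frac{\partial \tilde U}{\partial y} = f_{Y_1}(y\mid x)\,(\overline{v}(x,x) - b^{\mathrm{Dutch}}(y)) - (\overline{v}(y,y) - b^{\mathrm{Dutch}}(y))\,\frac{f_{Y_1}(y\mid y)}{F_{Y_1}(y\mid y)}\,F_{Y_1}(y \mid x).
\end{equation*}
Affiliation implies that $f_{Y_1}(y\mid \cdot)/F_{Y_1}(y\mid\cdot)$ is nondecreasing (monotone reverse hazard rate, a standard consequence of MLRP), and the hypothesis that $\overline{v}(x,x)$ is increasing in $x$ gives the sign: for $y < x$ both factors on the right-hand side work in the same direction, yielding $\partial \tilde U/\partial y > 0$, and symmetrically $\partial \tilde U/\partial y < 0$ for $y > x$, so $y = x$ is the unique global maximizer. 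Translating back from $y$ to $\beta = b^{\mathrm{Dutch}}(y)$ delivers (\ref{eq:CCE_Dutch_obj}).

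For the waiting condition, I would fix $x' > x$ and study the objective $(\overline{v}(x,x') - b^{\mathrm{Dutch}}(y))\, F_{Y_1}(y\mid x)$ over $y \in [\underline{x},\overline{x}]$. Repeating the substitution and evaluating the derivative at $y = x'$ reduces the sign of $\partial \tilde U/\partial y\mid_{y=x'}$ to a comparison between $\overline{v}(x,x')$ and $\overline{v}(x',x')$; since the first argument is strictly smaller and $\overline{v}$ is strictly increasing in its first argument, the same affiliation inequality used above guarantees that this derivative is strictly negative. Hence the maximum is attained at some $y < x'$, so the corresponding $\beta = b^{\mathrm{Dutch}}(y) < b^{\mathrm{Dutch}}(x')$ satisfies (\ref{eq:CCE_Dutch_wait}).

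The main obstacle is the second-order step verifying global optimality: it requires a careful combination of affiliation (through monotone reverse hazard rate) with the assumed monotonicity of $\overline{v}(x,x)$, since the cursed objective is not log-concave in the usual way that the Milgrom--Weber Bayesian argument exploits. A subsidiary technicality is the boundary behavior of the ODE at $\underline{x}$, where $F_{Y_1}(\underline{x}\mid\underline{x}) = 0$ makes the right-hand side singular; this is handled by solving the ODE on $(\underline{x},\overline{x}]$ and extending continuously using the boundary value, as in the standard affiliated-values literature.
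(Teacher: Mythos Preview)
Your proposal is correct and follows essentially the same route as the paper: derive the ODE from the first-order condition of \eqref{eq:CCE_Dutch_obj}, then verify the waiting condition \eqref{eq:CCE_Dutch_wait} by combining the monotone reverse-hazard inequality from affiliation with $\overline{v}(x,x')<\overline{v}(x',x')$ to show the derivative of the objective at $y=x'$ is strictly negative. The paper is terser---it defers the global-optimality step for \eqref{eq:CCE_Dutch_obj} and the boundary analysis entirely to ``the same arguments as Theorem~14 of \cite{milgrom1982theory}''---whereas you spell these out; but the underlying mechanics are identical. One small remark: where you invoke ``the hypothesis that $\overline{v}(x,x)$ is increasing in $x$,'' the stated hypothesis is only monotonicity in the \emph{first} argument; diagonal monotonicity then requires a separate (standard) affiliation step to get nondecreasingness in the second argument, which you should make explicit.
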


The characterization of $b^{\mathrm{Dutch}}$ differs from $b^{\mathrm{1P}}$ only by replacing $v(x)$ with $\overline{v}(x,x)$. The expectation $\overline{v}(x,x)$ conditions not only on $X_1 = x$, but also on $Y_1 \leq x$; by affiliation, it follows that $\overline{v}(x,x) \leq v(x)$. Thus, the clock cursed equilibrium of the Dutch auction has lower bids than the cursed equilibrium of the first-price auction.
\begin{theorem}\label{thm:dutch_vs_1P}
    $b^{\mathrm{Dutch}}(x) \leq b^{\mathrm{1P}}(x)$ for all $x$.
\end{theorem}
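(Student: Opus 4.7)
The plan is to reduce the comparison to two ingredients: (i) the pointwise inequality $\overline{v}(x,x) \leq v(x)$, noted in the text just before the theorem statement, and (ii) a monotone comparison principle for the common linear ODE structure shared by $b^{\mathrm{1P}}$ and $b^{\mathrm{Dutch}}$.

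For (i), I would argue as follows. By the tower property,
\begin{equation}
    v(x) = \E\bigl[\,\E[V \mid X_1 = x, Y_1]\ \bigm|\ X_1 = x\bigr], \qquad \overline{v}(x,x) = \E\bigl[\,\E[V \mid X_1 = x, Y_1]\ \bigm|\ X_1 = x,\ Y_1 \leq x\bigr].
\end{equation}
The inner function $y \mapsto \E[V \mid X_1 = x, Y_1 = y]$ is nondecreasing: affiliation of $X_N$ implies that the conditional distribution of $(X_2,\ldots,X_{|N|})$ given $X_1 = x$ and $Y_1 = y$ shifts upward in $y$, and $w$ is nondecreasing and symmetric. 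The conditional law of $Y_1$ given $\{X_1 = x,\ Y_1 \leq x\}$ is first-order stochastically dominated by the conditional law of $Y_1$ given $X_1 = x$ (a routine truncation calculation: the truncated CDF $F(\cdot)/F(x)$ pointwise majorizes $F(\cdot)$). Integrating a nondecreasing function against the smaller distribution yields a smaller value, so $\overline{v}(x,x) \leq v(x)$. The boundary case $\overline{v}(\underline{x},\underline{x}) \leq v(\underline{x})$ follows from monotonicity of $w$ in each argument.

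For (ii), set $g(x) \equiv f_{Y_1}(x \mid x)/F_{Y_1}(x \mid x)$ and $G(x) \equiv \int_{\underline{x}}^x g(s)\,ds$, and define $\phi(x) \equiv b^{\mathrm{1P}}(x) - b^{\mathrm{Dutch}}(x)$. Subtracting \eqref{eq:dutch_ODE} from \eqref{eq:first_price_ODE} gives
\begin{equation}
    \phi'(x) + \phi(x)\,g(x) = \bigl[v(x) - \overline{v}(x,x)\bigr]\,g(x),
\end{equation}
with $\phi(\underline{x}) = v(\underline{x}) - \overline{v}(\underline{x},\underline{x}) \geq 0$. Multiplying by the integrating factor $e^{G(x)}$,
\begin{equation}
    \frac{d}{dx}\bigl[\phi(x)\,e^{G(x)}\bigr] \;=\; \bigl[v(x) - \overline{v}(x,x)\bigr]\,g(x)\,e^{G(x)} \;\geq\; 0.
\end{equation}
Hence $\phi(x)\,e^{G(x)}$ is nondecreasing, so $\phi(x)\,e^{G(x)} \geq \phi(\underline{x}) \geq 0$, yielding $\phi(x) \geq 0$, i.e., $b^{\mathrm{Dutch}}(x) \leq b^{\mathrm{1P}}(x)$.

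The main obstacle is step (i): rigorously invoking that affiliation passes from $X_N$ to the pair $(X_1, Y_1)$ jointly with $V$, so that $y \mapsto \E[V \mid X_1 = x, Y_1 = y]$ is nondecreasing. This is a standard consequence of the closure properties of affiliation (Theorem 5 of Milgrom and Weber 1982), but demands a careful citation. The ODE step is then routine; minor care is needed near $x = \underline{x}$, where $F_{Y_1}(x \mid x)$ may vanish and $g$ may be singular, which can be handled by integrating on $[\underline{x} + \epsilon, x]$ and passing to the limit $\epsilon \downarrow 0$ using continuity of both bidding functions at the boundary.
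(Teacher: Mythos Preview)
Your proof is correct and follows essentially the same two-step architecture as the paper: first establish $\overline{v}(x,x) \leq v(x)$ via affiliation, then compare the two linear ODEs. The paper delegates both steps to citations from \cite{milgrom1982theory} (Theorems~2, 4, 5 for the inequality, Lemma~2 for the ODE comparison), whereas you unpack each: the tower-property-plus-FOSD argument for (i) and the explicit integrating-factor computation for (ii) are precisely what those cited results encapsulate.
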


In a Dutch auction, the descending price reveals at each point an upper bound on the other bidders' signals. Under clock cursed equilibrium, this realized uncertainty causes the bidder to update negatively about the common value, lowering bids relative to the first-price auction. This is consistent with the results of the experiment reported in \cite{levin2016separating}.

Next, we describe behavior in the second-price auction.
\begin{theorem}[\citet{eyster2005cursed}]
    It is a cursed equilibrium of the second-price auction for each bidder $n$ with signal $X_n = x$ to bid $b^{\mathrm{2P}}(x) = {v}(x)$.
\end{theorem}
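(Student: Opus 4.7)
My plan is to exploit the defining property of cursed equilibrium: at each type, the bidder treats opponents' bids as independent of the state. In the second-price auction this collapses the bidder's problem to a private-values auction with value $v(x) \equiv \E[V \mid X_1 = x]$, where truthful bidding is weakly dominant. Once this reduction is in place, the rest is the standard dominance argument and no differential equation is required, unlike in the first-price case.

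I first set up the cursed conjecture. Under the candidate symmetric profile in which every bidder $m$ with signal $X_m = y$ bids $v(y)$, let $M \equiv \max_{m \neq 1} v(X_m)$ and let $H(\cdot \mid x)$ denote its distribution conditional on $X_1 = x$. By the CE definition in equation~(2), bidder $1$ with signal $X_1 = x$ evaluates a bid $b$ using the marginal distribution $H(\cdot \mid x)$ of the opponents' highest bid, treated as independent of $V$. Because the winner's payoff $V - M$ depends on $V$ only through its cursed-independent conditional expectation given $X_1 = x$, the cursed expected payoff from bidding $b$ collapses to
\[
  U(b; x) \;=\; \int \bigl( v(x) - m \bigr)\,\mathbbm{1}\{b > m\}\; dH(m \mid x).
\]

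I then apply the textbook second-price dominance argument pointwise in $M$: for each realization $M = m$, the integrand $(v(x) - m)\,\mathbbm{1}\{b > m\}$ is non-negative exactly when $m < v(x)$, so the bidder weakly prefers to win on $\{m < v(x)\}$ and to lose on $\{m > v(x)\}$. The bid $b = v(x)$ achieves both simultaneously and therefore maximizes $U(\cdot; x)$. By symmetry of the joint density $f$ and of the payoff structure, the same computation holds for every bidder $n$, so the symmetric profile $b^{\mathrm{2P}}(X_n) = v(X_n)$ is indeed a best response to the cursed conjecture it induces, confirming the CE condition.

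The only delicate step is justifying the collapse of the cursed value of winning to $v(x)$ in the display above, which is where cursed equilibrium diverges from Bayesian Nash: under BNE the expectation would be $\E[V \mid X_1 = x,\, Y_1 < x]$, but the cursed independence of opponents' bids from $V$ wipes out the conditioning on $Y_1$. Everything else is routine, which is why the statement is markedly simpler than the first-price characterization in Theorem~\ref{thm:first_price}.
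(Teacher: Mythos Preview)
The paper does not supply its own proof of this theorem; it is stated with attribution to \citet{eyster2005cursed} and no argument is given in the text or the appendix. Your argument is correct and is the standard one: the cursed conjecture decouples the distribution of the highest opposing bid $M$ from $V$ given $X_1=x$, so the perceived value of winning collapses to $v(x)=\E[V\mid X_1=x]$, and the problem reduces to a private-values second-price auction in which bidding one's value is weakly dominant. The only cosmetic slip is the phrase ``the integrand is non-negative exactly when $m<v(x)$''---it is also zero when $b\le m$---but the intended pointwise dominance argument is clear and valid.
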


In other words, in a second-price auction with common values, each bidder places a bid equal to the expected common value conditional on their own signal.

We compare the second-price auction to the {silent English auction}. In this auction, the price rises continuously from $0$ and bidders make irrevocable decisions to quit. Each bidder does not observe when other bidders quit. When only one bidder is left, the auction ends and the remaining bidder wins at the going price.

This is a plausible model of the format used at major auction houses, such as Christie's and Sotheby's. In such auctions, most bidders do not bid audibly, but instead use secret gestures to communicate with the auctioneer.\footnote{These gestures “may be in the form of a wink, a nod, scratching an ear, lifting a pencil, tugging the coat of the auctioneer, or even staring into the auctioneer’s eyes---all of them perfectly legal” \citep{cassady1967}.}

Silent English auctions are a useful benchmark, because they are strategically equivalent to second-price auctions. In particular, cursed equilibrium makes the same predictions for both formats.

Suppose that all bidders in a silent English auction choose quitting points according to increasing function ${b}$. Essentially, we will require that the bidder correctly conditions on observed information, that is, at clock price $b(y)$, the bidder knows that $X_1 = x$ and $Y_1 \geq y$.  But the bidder neglects hypothetical events, treating the common value conditional on winning at price $\beta \geq b(y)$ as
\begin{equation}\label{eq:CCE_silent}
    \underline{v}(x,y) \equiv E[V \mid X_1 = x, Y_1 \geq y],
\end{equation}
whereas the correct calculation yields 
\begin{equation}\label{eq:rational_silent}
    E[V \mid X_1 = x, Y_1 = {b}^{-1}(\beta)].
\end{equation}

A \textbf{clock cursed equilibrium of the silent English auction} is a bidding function ${b}$ such that for all $x' < x$, there exists $\beta > b(x')$ that maximizes
\begin{equation}\label{eq:CCE_english_wait}
    \E \left[ (\underline{v}(x,x') - {b}(Y_1))1_{{b}(Y_1) \leq \beta} \mid X_1 = x, Y_1 \geq x' \right],
\end{equation}
and ${b}(x)$ is in
\begin{equation}\label{eq:CCE_english_obj}
    \argmax_{\beta} \E \left[ (\underline{v}(x,x) - {b}(Y_1))1_{{b}(Y_1) \leq \beta} \mid X_1 = x, Y_1 \geq x \right].
\end{equation}


Expression \eqref{eq:CCE_english_wait} requires that the bidder with signal $X_1 = x$ is willing to wait at all prices below $b(x)$, while expression \eqref{eq:CCE_english_obj} requires that the bidder quits at price $b(x)$.

\begin{theorem}\label{thm:silent_english} Assume that $ \underline{v}(x,y)$ is increasing in $x$. It is a clock cursed equilibrium of the silent English auction for each bidder $n$ with signal $X_n = x$ to quit at price $b^{\mathrm{silent}}(x) = \underline{v}(x,x)$. 
\end{theorem}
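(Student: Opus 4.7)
The plan is to verify both requirements of the clock-cursed-equilibrium definition directly, using the explicit formula $b^{\mathrm{silent}}(x)=\underline{v}(x,x)$. First I would establish that $b^{\mathrm{silent}}$ is strictly increasing in $x$: the theorem's hypothesis gives monotonicity of $\underline{v}(x,y)$ in its first argument, and affiliation of $(X_1,Y_1)$ implies that $\E[V\mid X_1=x,Y_1\ge y]$ is non-decreasing in $y$ \citep{milgrom1982theory}, so $y\mapsto\underline{v}(x,y)$ is non-decreasing as well. Both effects push $\underline{v}(x,x)$ upward in $x$, so $b^{\mathrm{silent}}$ is invertible on its range and the parametrization $\beta=b^{\mathrm{silent}}(y)$ used throughout is well-defined.

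Next I would verify the quit condition \eqref{eq:CCE_english_obj}. Fix $x$ and set $\beta^{*}=b^{\mathrm{silent}}(x)$. For $\beta<\beta^{*}$, monotonicity of $b^{\mathrm{silent}}$ gives $b^{\mathrm{silent}}(Y_1)\ge\beta^{*}>\beta$ almost surely on $\{Y_1\ge x\}$, so the indicator vanishes and the objective equals $0$. For $\beta\ge\beta^{*}$, write $\beta=b^{\mathrm{silent}}(y)$ with $y\ge x$; the objective becomes an integral of $(\underline{v}(x,x)-b^{\mathrm{silent}}(\alpha))$ over $\alpha\in[x,y]$ against the conditional density of $Y_1$ on $\{Y_1\ge x\}$, whose integrand is non-positive because $b^{\mathrm{silent}}(\alpha)=\underline{v}(\alpha,\alpha)\ge\underline{v}(x,x)$ by the preliminary step. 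Hence the objective is at most $0$ everywhere and attains $0$ at $\beta=\beta^{*}$, placing $\beta^{*}\in\argmax$.

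To verify the wait condition \eqref{eq:CCE_english_wait}, fix $x'<x$ and define
\begin{equation*}
J(y)\equiv \int_{x'}^{y}\bigl(\underline{v}(x,x')-b^{\mathrm{silent}}(\alpha)\bigr)\,f_{Y_1}(\alpha\mid X_1=x,Y_1\ge x')\,d\alpha \quad\text{for } y\ge x'.
\end{equation*}
The hypothesis that $\underline{v}(x,y)$ is increasing in $x$ gives $\underline{v}(x,x')>\underline{v}(x',x')=b^{\mathrm{silent}}(x')$, so the integrand is strictly positive at $\alpha=x'$ and hence $J(y)>0$ for $y$ slightly greater than $x'$. As $y$ rises further, the integrand switches sign the first time $b^{\mathrm{silent}}(y)$ crosses $\underline{v}(x,x')$; call this crossing $y^{*}$, which lies in $(x',x]$ because $b^{\mathrm{silent}}(x)=\underline{v}(x,x)\ge\underline{v}(x,x')$ by monotonicity of $\underline{v}$ in the second argument. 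Setting $\beta^{\dagger}=b^{\mathrm{silent}}(y^{*})>b^{\mathrm{silent}}(x')$ then yields a strictly positive $J(y^{*})$ that dominates the value $0$ attainable at any $\beta\le b^{\mathrm{silent}}(x')$, so $\beta^{\dagger}\in\argmax$ and lies strictly above $b^{\mathrm{silent}}(x')$ as required.

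I expect the preliminary monotonicity step to be the main obstacle: strict monotonicity of $b^{\mathrm{silent}}$ underwrites both the invertibility implicit in \eqref{eq:CCE_silent} and the strict inequality $y^{*}>x'$ in the wait step. Under the theorem's stated hypothesis together with the affiliation assumption built into the model, this monotonicity is clean, and the remainder reduces to elementary sign comparisons for the integrand of $J$.
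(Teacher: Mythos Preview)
Your proposal is correct and follows essentially the same approach as the paper: in both conditions, the maximizing $\beta$ is pinned down by where the marginal integrand $\underline{v}(x,\cdot)-\beta$ changes sign, yielding $\beta=\underline{v}(x,x')$ for the wait condition and $\beta=\underline{v}(x,x)=b^{\mathrm{silent}}(x)$ for the quit condition. The paper's proof is just two lines stating these maximizers directly in terms of $\beta$, whereas you reach the same conclusion more circuitously by parametrizing $\beta=b^{\mathrm{silent}}(y)$ and locating the crossing $y^{*}$; your preliminary monotonicity step is not needed if one argues directly in $\beta$, since the derivative of the objective in $\beta$ is proportional to $\underline{v}(x,x')-\beta$ regardless of whether $b^{\mathrm{silent}}$ is invertible.
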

\begin{proof}
    Observe that for ${b} = b^{\mathrm{silent}}$, \eqref{eq:CCE_english_wait} is maximized at $\beta = \underline{v}(x,x') > \underline{v}(x',x') = {b}(x')$. Moreover, \eqref{eq:CCE_english_obj} is maximized at $\beta = \underline{v}(x,x) = {b}(x)$.
\end{proof}

The expectation $\underline{v}(x,x)$ conditions not only on $X_1 = x$ but also on $Y_1 \geq x$; by affiliation, it follows that $\underline{v}(x,x) \geq v(x)$. Thus, in silent English auctions, the rising price reveals a lower bound on $Y_1$, and that raises bids compared to second-price auctions.

\begin{theorem}\label{thm:silent_vs_2P}
    $b^{\mathrm{silent}}(x) \geq b^{\mathrm{2P}}(x)$ for all $x$.
\end{theorem}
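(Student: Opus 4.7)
The plan is to unpack both sides into conditional expectations and then invoke affiliation. Recall that $b^{\mathrm{2P}}(x) = v(x) = \E[V \mid X_1 = x]$, and by construction $b^{\mathrm{silent}}(x) = \underline{v}(x,x) = \E[V \mid X_1 = x,\ Y_1 \geq x]$. Since $Y_1 \geq \underline{x}$ holds almost surely, we can also write $v(x) = \E[V \mid X_1 = x,\ Y_1 \geq \underline{x}] = \underline{v}(x,\underline{x})$. Hence the claim reduces to showing
\begin{equation}
\underline{v}(x,x) \;\geq\; \underline{v}(x,\underline{x}),
\end{equation}
which in turn would follow from the monotonicity statement: $\underline{v}(x,y)$ is non-decreasing in $y$.

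First I would express the common value in terms of the signals by conditioning on $X_N$: since $w$ is the regression of $V$ on $X_N$,
\begin{equation}
\underline{v}(x,y) \;=\; \E\!\left[w(X_1,\ldots,X_{|N|}) \,\middle|\, X_1 = x,\ Y_1 \geq y\right].
\end{equation}
Now $w$ is assumed non-decreasing and symmetric in its arguments, so $w(X_N)$ is a non-decreasing function of the affiliated vector $X_N$. The event $\{Y_1 \geq y\}$ is an up-set in the $Y_1$ coordinate, and replacing $y$ by a larger value tightens the conditioning set in the \emph{upward} direction. By the standard affiliation lemma of \citet{milgrom1982theory}, the conditional expectation of a non-decreasing function of affiliated random variables is non-decreasing when we condition on larger events of this type (equivalently, truncating the support of $Y_1$ from below at a higher threshold raises the conditional mean of any non-decreasing function of $X_N$). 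Thus $y \mapsto \underline{v}(x,y)$ is non-decreasing.

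Applied at $y = x$ versus $y = \underline{x}$, this yields $\underline{v}(x,x) \geq \underline{v}(x,\underline{x}) = v(x)$, which is exactly $b^{\mathrm{silent}}(x) \geq b^{\mathrm{2P}}(x)$. The only substantive step is the monotonicity of $\underline{v}(x,\cdot)$; everything else is a matter of rewriting the bids as conditional expectations and noting that $\{Y_1 \geq \underline{x}\}$ has full probability. I expect the main obstacle to be stating the affiliation lemma in exactly the form needed---conditioning jointly on the point event $\{X_1 = x\}$ and the upper-tail event $\{Y_1 \geq y\}$---but this is a direct consequence of the closure of affiliation under conditioning on a single coordinate together with the usual Milgrom--Weber monotonicity of conditional expectations in affiliated families, so the argument is short.
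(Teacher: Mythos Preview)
Your proposal is correct and follows essentially the same route as the paper: both rewrite $b^{\mathrm{silent}}(x)$ and $b^{\mathrm{2P}}(x)$ as $\E[V \mid X_1 = x,\ Y_1 \geq x]$ and $\E[V \mid X_1 = x,\ Y_1 \geq \underline{x}]$, then invoke the Milgrom--Weber affiliation monotonicity (their Theorem~5) to conclude that raising the lower truncation on $Y_1$ raises the conditional mean. The only cosmetic difference is that the paper first isolates the affiliated pair $(X_1,Y_1)$ via a separate lemma (using Theorems~2 and~4 of Milgrom--Weber), whereas you phrase the step in terms of the full vector $X_N$ and an up-set event; both routes land on the same one-line inequality.
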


The canonical model of English auctions assumes that each bidder observes when other bidders quit. \cite{levin1996revenue} find in a lab experiment that ``bidders use the ``public information" inherent in other bidders' drop-out prices to (largely) overcome the winner's curse in English auctions." We now formalize this idea.

A bidding strategy in a canonical English auction is a profile of functions $b_k(x \mid p_1,\ldots,p_k)$ for $k \in \{0,\ldots, |N|-2\}$, that specifies the quit point of a bidder with signal $X_n = x$ when $k$ bidders have quit at prices $p_1,\ldots,p_k$. We use $Y_{k}$ to denote the $k$th highest signal among bidder $1$'s opponents. As usual, we restrict attention to symmetric pure strategy equilibria that are increasing in each bidder's own signal, so that bidder $1$ can, upon observing the $k$th quit price $p_k$, infer that the corresponding signal is $b^{-1}_{k-1}(p_k \mid p_1, \ldots, p_{k-1})$. Thus, we abuse notation to write the bidding function as depending on the revealed signals $b_k(x \mid y_{|N|-1},\ldots,y_{|N|-k})$.

Suppose that bidder $1$ has signal $X_1 = x$, has observed $k$ quit prices with corresponding signal realizations $y_{|N|-1},\ldots,y_{|N|-k}$, and the current price is $b_k(x'\mid y_{|N|-1},\ldots,y_{|N|-k})$. We posit that bidder $1$ believes that the common value conditional on winning at price $\beta$ is
\begin{equation}
\begin{split}
    &v_k(x,y_{|N|-1},\ldots,y_{|N|-k},x') \\ & \equiv \E [V \mid X_1 = x, Y_{|N|-1} = y_{|N|-1},\ldots, Y_{|N|-k} = y_{|N|-k}, Y_{|N|-k-1} \geq x' ] .
\end{split}   
\end{equation}
Clock cursed equilibrium for the canonical English auction is defined analogously as for the silent English auction. A formal definition is in \Cref{sec:CCE_canon_def}.
\begin{theorem}\label{thm:canon_english}
    Assume that $v_k(x,y_{|N|-1},\ldots,y_{|N|-k},x')$ is increasing in $x$ for all $k$ and all $x \geq x' \geq y_{|N|-k}$. It is a clock cursed equilibrium of the canonical English auction for each bidder $n$ with signal $X_n = x$ who has observed quit points corresponding to signals $y_{|N|-1},\ldots,y_{|N|-k}$ to quit at price $b_k^{\mathrm{canon}}(x \mid y_{|N|-1},\ldots,y_{|N|-k}) = v_k(x,y_{|N|-1},\ldots,y_{|N|-k},x)$. 
\end{theorem}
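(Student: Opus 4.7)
The plan is to imitate the proof of \Cref{thm:silent_english} stage by stage. The formal definition of a clock cursed equilibrium for the canonical English auction (deferred to \Cref{sec:CCE_canon_def}) must say the natural thing: at each stage $k$, given the observed-signal history $y_{|N|-1},\ldots,y_{|N|-k}$ and own signal $x$, the bidder computes, for each hypothetical quit price $\beta$, her cursed payoff using the frozen believed value $v_k(x,y_{|N|-1},\ldots,y_{|N|-k},x')$ (where $x'$ is the signal corresponding to the current clock price) and the true density of $Y_{|N|-k-1}$ conditional on $X_1=x$, the observed signal realizations, and $Y_{|N|-k-1}\geq x'$. The equilibrium conditions then are the obvious analogs of \eqref{eq:CCE_english_wait}--\eqref{eq:CCE_english_obj}: willingness to wait for any $x'<x$, and an optimal quit point at $x'=x$.

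With this setup, the proof is essentially a one-line calculation iterated in $k$. First, I would observe that under the proposed strategy $b_k^{\mathrm{canon}}(x\mid y_{|N|-1},\ldots,y_{|N|-k})=v_k(x,y_{|N|-1},\ldots,y_{|N|-k},x)$, the monotonicity hypothesis on $v_k(\cdot,y_{|N|-1},\ldots,y_{|N|-k},\cdot)$ makes $b_k^{\mathrm{canon}}$ strictly increasing in its first argument, which is what makes the inverse-function decoding of past quit prices well defined. Next, to check the waiting condition for $x'<x$: because the believed value is $v_k(x,y_{|N|-1},\ldots,y_{|N|-k},x')$ and does not depend on $\beta$, the cursed objective is maximized by setting $\beta$ equal to that value; monotonicity in the first slot gives
\[
v_k(x,y_{|N|-1},\ldots,y_{|N|-k},x')\;>\;v_k(x',y_{|N|-1},\ldots,y_{|N|-k},x')\;=\;b_k^{\mathrm{canon}}(x'\mid y_{|N|-1},\ldots,y_{|N|-k}),
\]
so the optimum $\beta$ strictly exceeds the current price, verifying that the bidder strictly prefers to wait. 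Finally, to check the quitting condition at $x'=x$: the believed value equals $v_k(x,y_{|N|-1},\ldots,y_{|N|-k},x)=b_k^{\mathrm{canon}}(x\mid\cdots)$, so quitting exactly there is optimal by the same argument.

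The main obstacle is not a calculation but a bookkeeping point: writing down the definition of clock cursed equilibrium for the canonical English auction in a way that correctly separates (i) the observed signals $y_{|N|-1},\ldots,y_{|N|-k}$, which the bidder decodes from past quit prices using $b_{j-1}^{\mathrm{canon},-1}$ and uses to condition her value distribution, from (ii) the hypothetical quits of still-active opponents, whose signal-conditional quit times the cursed bidder neglects past the current clock price. Once that definition is pinned down---mirroring \eqref{eq:CCE_english_wait}--\eqref{eq:CCE_english_obj} with $\underline{v}(x,y)$ replaced by $v_k(x,y_{|N|-1},\ldots,y_{|N|-k},y)$ and $Y_1$ replaced by $Y_{|N|-k-1}$---the checking of the two optimality conditions is identical to \Cref{thm:silent_english}, and in particular requires no new differential equation because the stopping value $v_k(x,y_{|N|-1},\ldots,y_{|N|-k},x)$ already gives a closed form. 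Induction on $k$ (equivalently, on the number of bidders remaining) is therefore trivial, since the argument at each stage uses only the monotonicity of $v_k$ in its first argument, which is supplied by the hypothesis of the theorem.
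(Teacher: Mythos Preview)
Your proposal is correct and follows essentially the same approach as the paper: the paper's proof simply states that it ``follows the same steps as the proof of \Cref{thm:silent_english},'' and your verification of the waiting condition (via $v_k(x,\ldots,x')>v_k(x',\ldots,x')=b_k^{\mathrm{canon}}(x'\mid\cdots)$) and the quitting condition (via $v_k(x,\ldots,x)=b_k^{\mathrm{canon}}(x\mid\cdots)$) is exactly that. Your remarks about induction on $k$ are harmless but unnecessary, since the appendix definition already states the conditions independently for each $k$ and no inductive link is used.
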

\begin{proof}
    This follows the same steps as the proof of \Cref{thm:silent_english}.
\end{proof}

One implication of \Cref{thm:canon_english} is that the winner's curse vanishes under certain conditions, in the sense that the losses of the winning bidder are small when there are many bidders.  In particular, suppose we have an infinite sequence of bidders, each with signal $X_n$, and consider the sequence of auctions involving just the first $K$ bidders, each with corresponding functions $w^K$ and $v_k^K$. Let $\overline{X}_k^K$ denote the $k$th highest of the first $K$ signals. For the equilibrium described in \Cref{thm:canon_english}, the clearing price in the auction with $K$ bidders is
\begin{equation}\label{eq:clearing_price}
    v^K_{K-2}\left(\overline{X}_{2}^{K}, \overline{X}_{K}^{K},\ldots, \overline{X}_{3}^{K},\overline{X}_{2}^{K} \right).
\end{equation}
Suppose that the posterior on $V$ does not depend too much on the highest signal for large $K$; more formally, that \eqref{eq:clearing_price} converges in mean to 
\begin{equation}
    w^K\left(\overline{X}_{2}^{K}, \overline{X}_{K}^{K},\ldots, \overline{X}_{3}^{K},\overline{X}_{1}^{K} \right) = w^K\left(X_1,X_2,\ldots, X_K \right).\footnote{The equality is by symmetry of $w^K$. This convergence holds, for instance, for the signal structure studied by \cite{levin1996revenue}, and also for $w^K(X_1,X_2,\ldots,X_K) = \sum_{k=1}^K X_k / K$ with iid signals.}
\end{equation}
Then the winner's expected payoff under the clock cursed equilibrium converges to $0$. By contrast, \cite{eyster2005cursed} find that the losses under cursed equilibrium can be large even when there are many bidders.\footnote{Fully cursed equilibrium makes the same predictions in canonical English auctions and second-price auctions, so this follows from Proposition 6 of \cite{eyster2005cursed}.}

In summary, we studied a variation of cursed equilibrium for clock auctions, positing that each bidder's belief about the other bidders' signals conditions not just on their own signal, but on all the information at hand. This predicts that bids are reduced in Dutch auctions, compared to first-price, as reported by \cite{levin2016separating}. It also predicts that the winner's curse is small in canonical English auctions with many bidders, as reported by \cite{levin1996revenue}. One novel (and untested) prediction is that silent English auctions lead to higher bids than second-price auctions.\footnote{One can define a partially-cursed generalization of clock cursed equilibrium, that parallels \Cref{sec:chi_psi_SCE}. Such a solution concept would specify that each bidder assesses the common value with weight $\chi$ on the expected value conditional on all the realized information, and weight $1-\chi$ on the correct calculation.}

\section{Discussion}\label{sec:discussion}

We close by discussing some limitations of sequential cursed equilibrium.

The predictions of sequential cursed equilibrium depend on the action labels. Those labels affect both the coarsest valid partition $\mathcal{F}$, and the mapping from strategy profiles to coarse strategies. For instance, we can take the simultaneous trading game of \Cref{sec:introduction}, and define two new actions for player $1$, action $ad$ that accepts at type $\{\omega_1,\omega_2\}$ and declines at type $\{\omega_3\}$, and action $da$ that declines at type $\{\omega_1,\omega_2\}$ and accepts at type $\{\omega_3\}$. Player $1$'s dominant strategy is to play action $ad$ at both types, which changes the SCE prediction. This suggests that SCE is too permissive if the modeler can arbitrarily specify the action labels. 

Action labels matter even for Nash equilibrium. For example, consider matching pennies in the extensive form.  One player moves second, and their action labels imply that they cannot cause \textit{heads} if the first player chose \textit{heads} and yet cause \textit{tails} if the first player chose \textit{tails}. Swapping the labels at one history would change the Nash equilibrium prediction. 

Action labels are a modeling choice. Given what they know, the player can enact only some mappings from states of the world to consequences. When we translate the world to a model, we use the action labels to express which mappings are feasible.

In SCE, the action labels matter even across information sets. When we put two information sets in the same coarse set, we posit a counterfactual in which the player chooses without knowing which situation has obtained. For example, consider a poker player who bets without seeing their hand, or a bidder who bids without seeing their estimate of the object's value.\footnote{Such counterfactuals are not always coherent; for instance, the player presumably cannot be ignorant about whether they are choosing from two options or from three options.} In our view, the action labels should reflect what mappings are feasible in that counterfactual.

Motivated by the preceding discussion, we propose the following modeling convention: Imagine a person called to play in two situations $I$ and $I'$, who somehow does not know whether they are in situation $I$ or in situation $I'$. If anything they could do ({e.g.}\ any bet they could place, any ballot they could cast) that causes consequence $a$ at situation $I$ would also cause consequence $a'$ at situation $I'$, and \textit{vice versa}, then $a$ and $a'$ should have the same action label. This convention is not a formal resolution to the labeling critique, but it helps to tie the modeler's hands.

Another limitation of SCE is that player $1$ neglects the link between player $2$'s actions and player $2$'s information, even when $2$'s information solely concerns what player $1$ has done. For instance, in the game depicted in \Cref{fig:causal_illus}, the union of player 2's information sets is a coarse set, so there is a SCE in which player $1$ plays $L$, conjecturing that $2$ will play $l$ for sure regardless of player $1$'s action.
\begin{figure}[h]
    \centering
\begin{tikzpicture}[font=\footnotesize,edge from parent/.style={draw,thick}]
    \tikzstyle{solid}=[circle,draw,inner sep=1.2,fill=black];
    \tikzstyle{hollow}=[circle,draw,inner sep=1.2]
    \tikzstyle{level 1}=[level distance=5mm,sibling distance=30mm]
    \tikzstyle{level 2}=[level distance=12mm,sibling distance=12mm]
    \node(0)[hollow]{}
    child{node[solid]{}
        child{node{$1,1$}
        edge from parent node[left]{$l$}
        }
        child{node{$0,0$}
        edge from parent node[right]{$r$}
        }
    edge from parent node[above]{$L$}
    }
    child{node[solid]{}
        child{node{$0,0$}
        edge from parent node[left]{$l$}
        }
        child{node{$3,3$}
        edge from parent node[right]{$r$}
        }
    edge from parent node[above]{$R$}
    }
     ;

      \node[above,yshift=2]at(0){$1$};
      \node[above,yshift=2]at(0-1){$2$};
      \node[above,yshift=2]at(0-2){$2$};
\end{tikzpicture}
    \captionsetup{width=.68\textwidth}
    \caption{}
    \label{fig:causal_illus}
\end{figure}

One might find it plausible that each player should understand how their own actions affect other players' actions. SCE can be modified to allow this; for \textbf{causal sequential cursed equilibrium} (causal SCE), instead of having one conjecture at each information set, we have one conjecture for each \textit{action} at that information set. When player $1$ considers playing action $a$ at information set $I$, and thinks about player $2$'s play at coarse set $F$, player $1$ correctly understands the counterfactual distribution of $2$'s actions at $F$, were player $1$ to play $a$ at $I$. The game in \Cref{fig:causal_illus} has a unique causal SCE, in which $1$ plays $R$, and $2$ plays $l$ after $L$ and $r$ after $R$. A full definition of causal SCE is in \Cref{app:causal_SCE}.

Finally, a third limitation of SCE is that it does not distinguish between subtle hypothetical events and salient hypothetical events. \cite{esponda2021contingent} reframe several standard decision problems to focus on the contingencies at which subjects' choices can affect payoffs. They find that this causes subjects to behave more in line with the rational model. Framing effects such as these are not captured by SCE, since its predictions depend only on the extensive game.

\bibliographystyle{ecta}
\bibliography{references}

\appendix

\section{Proofs omitted from the main text}\label{app:proofs}

\subsection{Proof of Theorem \ref{thm:coarsest_valid}}\label{app:proof_coarsest_valid}

We start by constructing the valid partition $\mathcal{F}$ of $H \setminus Z$ and later show that $\mathcal{F}$ is coarser than any other valid $\mathcal{J}$.

First, we define a partial order on histories at which player $n$ is called to play, the set $H_n \equiv \cup \mathcal{I}_n$. Let the $n$-predecessor of a history be the partial history that ends at the last time $n$ is called to play. Formally, given $h\in H_n$, the $n$-predecessor of $h$, denoted $\pi_n(h)$, is $h' \in H_n$ such that $h' \prec h$ and there does not exist $h'' \in H_n$ such that $h' \prec h'' \prec h$. If no $n$-predecessor of $h$ exists, we define $\pi_n(h)\equiv\emptyset$.

We state the usual definition of perfect recall \cite[p.\ 224]{mas1995microeconomic}. A game has \textit{perfect recall} if players remember both their past information and their past actions. That is, for a game $G$, information set $I\in\mathcal{I}_n$ and histories $h,h'\in I$, 
\begin{enumerate}
    \item Either: $\pi_n(h) = \emptyset = \pi_n(h')$
    \item Or: $\pi_n(h) \neq \emptyset $ and $\pi_n(h') \neq \emptyset$, and
    \begin{enumerate}
        \item there exists $I'$ such that $\pi_n(h),\pi_n(h') \in I'$,
        \item and there exists an action $a\in A$, such that $(\pi_n(h),a)\preceq h$ and $(\pi_n(h'),a)\preceq h'$.
    \end{enumerate}
\end{enumerate}

For each player $n \in N$, we use induction to define the partition of $\mathcal{F}_n$ of $H_n$. Note that the partition of information sets for one player does not impact the validity of the information sets of the other players, hence we can manage the partition for each player separately. For player $n$, we use the following induction:
Start with the histories at which $n$ is called to play for the first time, $H_n^0\equiv\{h\in H_n : \pi_n(h)=\emptyset \}$. Let $\mathcal{F}_n^0$ be the partition of $H_n^0$ such that for each $h\in H_n^0$ with available actions $A(h)$, $\{h'\in H_n^0 : A(h')=A(h)\}\in \mathcal{F}^0_n$.

Inductively, let $H_n^{l}\equiv\{h\in H_n : \pi_n(h)\in H_n^{l-1}\}$ for $l \in \{1,2,\ldots\}$. We now define a partition $\mathcal{F}_n^l$ of $\cup_{k \leq l} H_n^k$. First, we require that for every $F \in \mathcal{F}_n^{l-1}$, we have $F \in \mathcal{F}_n^{l}$. Next, we specify $\mathcal{F}_n^l$ on the set $H_n^l$. In order for this partition to specify information sets for an extensive game of perfect recall, these three properties are necessary and sufficient:
\begin{enumerate}
    \item History $h \in H_n^l$ can only share a cell of the partition $\mathcal{F}_n^{l}$ with histories $h' \in H_n^l$ that have the same available actions.
    \item By perfect recall, the predecessors of $h$ and $h'$ must share a cell of the partition $\mathcal{F}_n^{l-1}$.
    \item By perfect recall, the same action must have been taken at the predecessors of $h$ and $h'$.
\end{enumerate}
Formally, this means that for each $h \in H_n^l$, we have
\begin{multline}
    \{ h'\in H_n^{l} : A(h)=A(h') \text{ and } \exists F\in \mathcal{F}^{l-1}_n:\{\pi_n(h),\pi_n(h') \}\subseteq F \\ \text{ and } \exists a\in A: (\pi_n(h'),a)\preceq h',(\pi_n(h),a)\preceq h \}\in \mathcal{F}_n.
\end{multline}
Since the game is well-behaved, this inductive procedure defines a partition of $\mathcal{F}_n$.

We define $\mathcal{F}$ to be the partition of $H$ that agrees with the partition $\mathcal{F}_n$ of $H_n$ for each $n \in N$, with singleton cells for the set of nature histories $H_\nature$.

Let $\mathcal{J}$ be a valid partition. We will prove that for all $h, h' \in H_n$, if $h$ and $h'$ are in the same cell of $\mathcal{J}$, then they are in the same cell of $\mathcal{F}$. Observe that since $\mathcal{J}$ is valid, $h$ and $h'$ are in the same cell of $\mathcal{J}$, and the game is well-behaved, it follows that $h, h' \in H_n^l$ for some player $n \in N$ and some integer $l$---otherwise, player $n$ has forgotten how many times they were called to play.

We argue by induction. Consider the claim: For all $h,h' \in H_n^k$, if $h$ and $h'$ are in the same cell of $\mathcal{J}$, then $h$ and $h'$ are in the same cell of $\mathcal{F}$.

Clearly the claim holds for $k = 0$. Suppose that the claim holds for all $k' < k$. We now prove that it holds for $k$. Take $h, h' \in H_n^{k}$ that are in the same cell of $\mathcal{J}$. Note that $\pi_n(h)$ and $\pi_n(h')$ are members of the same cell of $\mathcal{J}$ (by perfect recall), and hence (by the inductive claim) $\pi_n(h)$ and $\pi_n(h')$ are members of the same cell of $\mathcal{F}$.  By perfect recall, for some action $a\in A$, we have $(\pi_n(h),a)\preceq h$ and $(\pi_n(h'),a)\preceq h'$. In addition, $A(h)=A(h')$, otherwise $\mathcal{J}$ is not valid. Thus, there exists $F \in \mathcal{F}_n^k$ such that $h,h' \in F$, so $h$ and $h'$ are in the same cell of $\mathcal{F}$. Thus the claim holds for every integer $k$. The game is well-behaved, so $\bigcup_k H_n^k=H_n$, which completes the proof.

\subsection{Proof of Theorem \ref{thm:cursed_consistent_implies_plausible}}

Let $(\sigma_N^k, (\overline{\sigma}^{I,k})_{I \in \mathcal{I}_N})$ be a sequence of cursed-plausible assessments that converge to $(\sigma_N, (\overline{\sigma}^I)_{I \in \mathcal{I}_N})$, and define $\sigma^k \equiv (\sigma_N^k, \sigma_\nature)$ and $\sigma \equiv (\sigma_N,\sigma_\nature)$.

To show that the assessment $(\sigma_N, (\overline{\sigma}^I)_{I \in \mathcal{I}_N})$ is cursed-plausible, we need to show that for any player $n\in N$ and information set $I\in\mathcal{I}_n$, $\overline{\sigma}^I$ is cursed-plausible with $\sigma_N$. Take any such player and information set. The first and third requirements of cursed-plausibility follow by inspection. We now prove the second requirement holds.

Take any $m \neq n$, $I' \in \mathcal{I}_m$ such that $\measure[\sigma](I \cap F(I'))>0$. For any  $a \in A(I')$, we have that the right-hand side of \eqref{eq:CPS_update} is
    \begin{equation}
    \begin{split}
       & \measure[\sigma](\{(h,a): h \in F(I')\} \mid I \cap F(I'))  \\
       =  \lim_{k \rightarrow \infty} & \measure[\sigma^k](\{(h,a): h \in F(I')\} \mid I \cap F(I')) \\ =  \lim_{k\rightarrow\infty} & \overline{\sigma}_m^{I,k}(a,I') =  \overline{\sigma}_m^I(a,I'). 
    \end{split}
    \end{equation}   
The first equality follows by $\lim_{k \rightarrow \infty} \sigma^k = \sigma$ and $\measure[\sigma](I \cap F(I'))>0$. The second equality follows by definition of cursed-plausibility. The third equality follows by $\lim_{k \rightarrow \infty} \overline{\sigma}^{I,k} = \overline{\sigma}^I$. Thus, $(\overline{\sigma}^I)_{I \in \mathcal{I}_N}$ is cursed-plausible with $\sigma_N$.

\subsection{Proof of Theorem \ref{thm:SCE_existence}}

The proof proceeds in three steps.  In step I, we constrain players to play each action with probability at least $\epsilon > 0$, and define a best-response correspondence for the induced cursed-plausible conjectures, denoted $B^\epsilon$.  In step II, we show that $B^\epsilon$ has a fixed point.  In step III, we use a sequence of fixed points of $B^\epsilon$, taking $\epsilon$ to 0, to construct a sequential cursed equilibrium.

\subsubsection{Step I}

Let us fix some finite extensive game $G$. Given parameter $\epsilon > 0$, let $\Sigma_N^\epsilon$ be the set of strategy profiles that play each action with probability at least $\epsilon$, that is, for all players $n \in N$, all information sets $I \in \mathcal{I}_n$, and all actions $a \in A(I)$, we have $\sigma_n(a,I) \geq \epsilon$.     

Given some $\sigma_N \in \Sigma_N^\epsilon$, some player $n$, and some information set $I \in \mathcal{I}_n$, let $\overline{\sigma}^I$ be a conjecture that is cursed-plausible with $\sigma_N$. Since $\sigma_N$ is fully mixed, $\overline{\sigma}^I$ is uniquely pinned down at all compatible information sets except for $\{I' \in \mathcal{I}_n: I \prec I' \}$.  A distribution $d \in \Delta(A(I))$ is a \textbf{cursed $\epsilon$-best response} to $\sigma_N$ if there exists strategy $\tilde{\sigma}_n \in \Sigma_n^\epsilon$ such that $\tilde{\sigma}_n(I) = d$ and $\tilde{\sigma}_n$ maximizes $n$'s expected utility when play is initialized by $\rho(\overline{\sigma}^I,I)$ and proceeds according to $(\tilde{\sigma}_n,\overline{\sigma}^I_{-n})$, that is, if
\begin{equation}\label{eq:cursed_BR}
    \tilde{\sigma}_n \in \argmax_{\hat{\sigma}_n \in \Sigma_n^\epsilon} \left\{ \sum_{h \in I}  \rho(h, \overline{\sigma}^I,I) u_n(h, \hat{\sigma}_n, \overline{\sigma}_{-n}^I) \right\}.
\end{equation}
Since the cursed-plausible conjecture $\overline{\sigma}^I$ is unique up to behavior at $\{I' \in \mathcal{I}_n: I \prec I' \}$, it follows that the right-hand side of \eqref{eq:cursed_BR} is the same for all conjectures that are cursed-plausible with $\sigma_N$.
Similarly, we say that a strategy profile $\sigma'_N$ is a cursed $\epsilon$-best response to $\sigma_N$ if for all $n \in N$ and all $I \in \mathcal{I}_n$, the distribution $\sigma'_n(I)$ is a cursed $\epsilon$-best response to $\sigma_N$. 

Let $B^\epsilon(\sigma_N)$ denote the set of strategy profiles that are cursed $\epsilon$-best responses to $\sigma_N$, which defines a correspondence $B^\epsilon: \Sigma_N^\epsilon \rightrightarrows \Sigma_N^\epsilon$.

\subsubsection{Step II}
We fix $\epsilon > 0$ and show that $B^\epsilon$ has a fixed point by Kakutani's theorem. We now check the conditions of the theorem.

\begin{enumerate}
    \item $\Sigma_N^\epsilon$ is compact and convex. (Straightforward.)
    \item For every $\sigma_N \in \Sigma_N^\epsilon$, $B^\epsilon \left(\sigma_N \right)$ is nonempty and convex. (Straightforward.)
    \item $B^\epsilon$ has a closed graph.
\end{enumerate}

    We want to show that if $\left\{\sigma_N^{k}\right\}_{k \in \mathbb{N}}\subseteq \Sigma_N^\epsilon$ is such that $\sigma_N^{k}\rightarrow \sigma_N\in\Sigma^\epsilon_N$, and there is an image sequence $\hat{\sigma}_N^{k}$ where $\hat{\sigma}_N^{k}\in B^\epsilon\left(\sigma_N^{k}\right)$, and $\hat{\sigma}_N^{k}\rightarrow\hat{\sigma}_N$ for some $\hat{\sigma}_N\in\Sigma^\epsilon_N$, then $\hat{\sigma}_N\in B^{\epsilon}\left(\sigma_N\right)$.

    Fix a player $n$ and an information set $I\in \mathcal{I}_n$. Let $\overline{\sigma}^{I,k}$ be a conjecture that is cursed-plausible with $\sigma_N^{k}$ and similarly let $\overline{\sigma}^I$ be cursed-plausible with $\sigma^k$. Observe that $\overline{\sigma}_{-n}^{I,k} \rightarrow \overline{\sigma}_{-n}^I$ and $\rho(\overline{\sigma}^{I,k},I) \rightarrow \rho(\overline{\sigma}^{I},I)$.

    Let us define
    \begin{equation}
    \zeta^k_n \equiv \left\{\tilde{\sigma}_n \in \Sigma^\epsilon_n: \tilde{\sigma}_n(I) = \hat{\sigma}_n^{k}(I)\right\},
    \end{equation}
    \begin{equation}
    \zeta_n \equiv \left\{\tilde{\sigma}_n \in \Sigma^\epsilon_n: \tilde{\sigma}_n(I) = \hat{\sigma}_n(I)\right\}.
    \end{equation}
    Since $\hat{\sigma}_N^{k}\in B^\epsilon(\sigma_N^{k})$, it follows that for all $\tilde{\sigma}_n \in \Sigma_n^\epsilon$, we have
    \begin{equation}\label{eq:seq_LBR}
    \max_{\breve{\sigma}_n \in \zeta^k_n} \left\{ \sum_{h \in I}  \rho(h, \overline{\sigma}^{I,k},I) u_n(h, \breve{\sigma}_n, \overline{\sigma}_{-n}^{I,k}) \right\} \geq \sum_{h \in I} \rho(h, \overline{\sigma}^{I,k},I) u_n(h, \tilde{\sigma}_n, \overline{\sigma}_{-n}^{I,k}).
    \end{equation}
    Taking limits on both sides of \eqref{eq:seq_LBR} (the left side by Berge's maximum theorem), it follows that
    \begin{equation}
    \max_{\breve{\sigma}_n \in \zeta_n} \left\{ \sum_{h \in I}  \rho(h, \overline{\sigma}^{I},I) u_n(h, \breve{\sigma}_n, \overline{\sigma}_{-n}^{I}) \right\} \geq \sum_{h \in I} \rho(h, \overline{\sigma}^{I},I) u_n(h, \tilde{\sigma}_n, \overline{\sigma}_{-n}^{I}).
    \end{equation}
    Thus, we have $\hat{\sigma}_N \in B^\epsilon(\sigma_N)$, as desired.

By Kakutani's theorem, it follows that $B^\epsilon$ has a fixed point.

\subsubsection{Step III}

We now use a sequence of fixed points to prove existence of SCE. Let $\left(\sigma^k_N\right)_{k \in \mathbb{N}}$ be a sequence of fully mixed strategy profiles such that $\sigma^k_N \in B^{\epsilon^k}\left(\sigma^k_N\right)$, for $\epsilon^k \rightarrow 0$.  By definition of $B^\epsilon$, for each $k$, there exists a system of conjectures $(\overline{\sigma}^{I,k})_{I \in \mathcal{I}}$ such that the assessment $(\sigma^k_N, (\overline{\sigma}^{I,k})_{I \in \mathcal{I}})$ is cursed-plausible and moreover, for each player $n$ and each information set $I \in \mathcal{I}_n$, the self-conjecture $\overline{\sigma}_n^{I,k}$ satisfies $\sigma^k_n(I)=\overline{\sigma}_n^{I,k}(I)$ and attains the maximum in expression \eqref{eq:cursed_BR}, that is,
\begin{equation}\label{eq:self_conjecture_attains_max}
     \sum_{h \in I}  \rho(h, \overline{\sigma}^{I,k},I) u_n(h, \overline{\sigma}_n^{I,k}, \overline{\sigma}_{-n}^{I,k}) =  \max_{\hat{\sigma}_n \in \Sigma_n^{\epsilon^k}} \left\{ \sum_{h \in I}  \rho(h, \overline{\sigma}^{I,k},I) u_n(h, \hat{\sigma}_n, \overline{\sigma}_{-n}^{I,k}) \right\}.
\end{equation}

Let $(\sigma^k_N, (\overline{\sigma}^{I,k})_{I \in \mathcal{I}})$ be a sequence of such assessments. Let us define $\sigma^k \equiv (\sigma^k_N,\sigma_\nature)$. By compactness and $G$ finite, we can pass to a subsequence such that:
\begin{enumerate}
    \item The sequence of strategy profiles $\sigma^k_N$ converges.
    \item For all $I \in \mathcal{I}_N$, the sequence of conjectures $\overline{\sigma}^{I,k}$ converges.
    \item For all $I \in \mathcal{I}_N$, and all $h \in I$, the sequence of probabilities $\measure[\sigma^k](\{h\} \mid I)$ converges.
\end{enumerate}
Without loss of generality, let us suppose those properties hold for the original sequence, denoting the limits $\sigma_N \equiv \lim_{k \rightarrow \infty} \sigma_N^k$ and $\overline{\sigma}^{I} \equiv \lim_{k \rightarrow \infty} \overline{\sigma}^{I,k}$.

Next, we prove that $(\overline{\sigma}^{I})_{I \in \mathcal{I}_N}$ is a system of conjectures, that is, $\measure[\overline{\sigma}^I](I) > 0$ for all $I \in \mathcal{I}_N$. Take any information set $I \in \mathcal{I}_N$. If $I$ contains the root, then trivially $\measure[\overline{\sigma}^{I}](I) = 1 > 0$. Suppose otherwise. By convergence of $\measure[\sigma^k](\{h\} \mid I)$ for all $h \in I$ and finiteness of $I$, we can pick $h \in I$ such that $\lim_{k \rightarrow \infty} \measure[\sigma^k](\{h\} \mid I) > 0$. To reduce clutter, we suppress the braces $\{\}$ when measuring sets containing a single history.

Let us take the path of histories from the root $h_1$ to $h$; that is $(h_l)_{l = 1}^L$ with actions $(a_l)_{l = 1}^L$, such that $(h_l,a_l) = h_{l+1}$ and $(h_L,a_L) = h$. Let $(I_l)_{l=1}^L$ denote the corresponding information sets. Observe that
\begin{equation}\label{eq:chain_of_mixed}
    \measure[\overline{\sigma}^{I,k}](I  ) \geq \measure[\overline{\sigma}^{I,k}](h) = \prod_{l=1}^L \overline{\sigma}_{P(I_l)}^{I,k}(a_l, I_l).
\end{equation}

By construction of $\overline{\sigma}^{I,k}$ and then by $h_l \in I_l \subseteq F(I_l)$, we have 
\begin{equation}\label{eq:bound_identity}
    \overline{\sigma}_{P(I_l)}^{I,k}(a_l, I_l) \geq \measure[\sigma^k](\{(h,a_l): h \in F(I_l)\} \mid I \cap F(I_l)) \geq \measure[\sigma^k]((h_l,a_l) \mid I).
\end{equation}

Since $(h_l,a_l) \preceq h$, we have 
\begin{equation}\label{eq:bound_epsilon}
    \measure[\sigma^k]((h_l,a_l) \mid I) \geq \measure[\sigma^k](h \mid I).
\end{equation}

Combining \eqref{eq:chain_of_mixed}, \eqref{eq:bound_identity}, and \eqref{eq:bound_epsilon}, we have
\begin{equation}\label{eq:infoset_positive_prob}
    \measure[\overline{\sigma}^{I,k}](I)  \geq \prod_{l=1}^L \overline{\sigma}_{P(I_l)}^{I,k}(a_l, I_l) \geq \prod_{l=1}^L \measure[\sigma^k]((h_l,a_l) \mid I) \geq \measure[\sigma^k](h \mid I)^L.
\end{equation}
By continuity and then by \eqref{eq:infoset_positive_prob} we have 
\begin{equation}
    \measure[\overline{\sigma}^{I}](I) = \lim_{k \rightarrow \infty} \measure[\overline{\sigma}^{I,k}](I) \geq \lim_{k \rightarrow \infty} \measure[\sigma^k](h \mid I)^L > 0.
\end{equation}
This argument holds for any $I \in \mathcal{I}_N$, so $(\overline{\sigma}^{I})_{I \in \mathcal{I}_N}$ is a system of conjectures.

It follows that $(\sigma_N,(\overline{\sigma}^I)_{I \in \mathcal{I}_N})$ is an assessment. By construction, $(\sigma_N,(\overline{\sigma}^I)_{I \in \mathcal{I}_N})$ is cursed-consistent.

All that remains is to prove that $(\sigma_N,(\overline{\sigma}^I)_{I \in \mathcal{I}_N})$ is locally rational. Consider any player $n\in N$ and any information set $I\in\mathcal{I}_n$. For any $k$, by construction we have $\sigma^k_n(I)=\overline{\sigma}_n^{I,k}(I)$. It follows that $\sigma_n(I) = \lim_{k \rightarrow \infty} \sigma^k_n(I)= \lim_{k \rightarrow \infty} \overline{\sigma}_n^{I,k}(I) = \overline{\sigma}_n^{I}(I)$. We take limits with respect to $k$ on each side of \eqref{eq:self_conjecture_attains_max} and, by Berge's maximum theorem, conclude that
\begin{equation}
    \overline{\sigma}^I_n \in \argmax_{\hat{\sigma}^I_n} \left\{ \sum_{h \in I}  \rho(h, \overline{\sigma}^I,I) u_n(h, \hat{\sigma}^I_n, \overline{\sigma}_{-n}^I) \right\},
\end{equation}
and thus that $(\sigma_N,(\overline{\sigma}^I)_{I \in \mathcal{I}_N})$ is locally rational. It follows that $(\sigma_N,(\overline{\sigma}^I)_{I \in \mathcal{I}_N})$ is a sequential cursed equilibrium.

\subsection{Proof of Theorem \ref{thm:ICE_SCE}}

\begin{lemma}\label{lem:sim_bayes_pos}
  Given a finite simultaneous Bayesian game, any strategy profile $\sigma_N$, and any $I, I' \in \mathcal{I}_N$, if $I$ and $I'$ are compatible then we have
\begin{equation}\label{eq:pos_prob_compat}
    \measure[\sigma](I \cap F(I')) > 0.
\end{equation}
\end{lemma}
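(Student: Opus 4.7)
The plan is to reduce the claim to the simpler fact that $\measure[\sigma](I) > 0$, by showing that compatibility forces $\measure[\sigma](I \cap F(I')) = \measure[\sigma](I)$, and then reading off positivity from nature being fully mixed and types being non-empty.

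First, I would unpack what the ``simultaneous Bayesian game'' hypothesis contributes combinatorially. Nature moves once at the initial history; each player $n$ then moves exactly once along every path, with information sets $I \in \mathcal{I}_n$ in bijection with types $T_n \in \mathcal{T}_n$; and the constant-action-set assumption gives $F(I') = \cup \mathcal{I}_{n'}$. In particular, the set $F(I')$ is exactly the set of histories at which player $n'$ is called to play, and every terminal history passes through exactly one such history.

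Second, I would show that $I \cap F(I') = I$ in the $\measure$-notational sense, that is, $\{z \in Z : I \preceq \{z\}\} \subseteq \{z \in Z : F(I') \preceq \{z\}\}$. If $I$ and $I'$ belong to the same player ($n = n'$), then $I = I'$: two distinct cells of $\mathcal{I}_n$ cannot be ordered by $\preceq$, because a simultaneous game calls $n$ to play only once per path, so any compatibility between $I$ and $I'$ must be trivial. Thus $F(I') \supseteq I$ and the inclusion is immediate. If $n \neq n'$, compatibility orients $n$ and $n'$ temporally: when $n$ moves before $n'$, any path through $I$ must subsequently pass through $F(I')$ since $n'$ still has to move; when $n'$ moves before $n$, any path reaching $I$ has already passed through $F(I')$. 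Either way, every terminal successor of $I$ is also a successor of $F(I')$, yielding $\measure[\sigma](I \cap F(I')) = \measure[\sigma](I)$.

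Third, I would note that reaching $I$ depends only on nature's move, since information sets of $n$ are indexed by $\mathcal{T}_n$ and earlier-moving players' actions do not discriminate among them. Hence $\measure[\sigma](I) = \sum_{\omega \in T_n}\sigma_\nature(\omega) > 0$ because $\sigma_\nature$ is fully mixed and $T_n$ is non-empty. The only genuine obstacle is being precise about the structural consequences of the terse ``simultaneous Bayesian game'' definition, particularly the antichain property within each $H_n$ and the resulting claim that compatible same-player information sets must coincide; once those facts are in hand, the rest is a short set-theoretic argument.
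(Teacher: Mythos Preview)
Your proof is correct and follows the same two-step strategy as the paper: establish $\measure[\sigma](I \cap F(I')) = \measure[\sigma](I)$, then read off $\measure[\sigma](I) > 0$ from nature being fully mixed. The paper's argument for the equality is shorter than your case analysis: since every player is called to play exactly once on every path and $F(I') = \cup \mathcal{I}_{P(I')}$, \emph{every} terminal history succeeds $F(I')$, so the equality holds immediately without invoking compatibility or splitting on $n = n'$ versus $n \neq n'$---indeed your own first-step observation that ``every terminal history passes through exactly one such history'' already does the work.
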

\begin{proof}
    Let state $\omega$ be such that $\{\omega\} \preceq I$. We have
    \begin{equation}
        0 < \sigma_\lambda(\omega) \leq \measure[\sigma](I) = \measure[\sigma](I \cap F(I')).
    \end{equation}
    The strict inequality follows since nature's strategy is fully mixed and the game is finite. The weak inequality is by definition of information sets for a simultaneous Bayesian game. The equality follows since every player is called to play once, and $F(I') = \cup \mathcal{I}_n$ for $n = P(I')$.
\end{proof}

By \Cref{lem:sim_bayes_pos}, we have that the assessment $(\sigma_N,(\overline{\sigma}^I)_{I \in \mathcal{I}_N})$ is cursed-plausible if and only if it is cursed-consistent, which proves the equivalence of Clause 2 and Clause 3 of \Cref{thm:ICE_SCE}. 

Next we prove the equivalence of WPCE and ICE. By \Cref{lem:sim_bayes_pos}, for any $\sigma_N$ there exists a unique system of conjectures $(\overline{\sigma}^I)_{I \in \mathcal{I}_N}$ for which:
\begin{enumerate}
    \item $(\overline{\sigma}^I)_{I \in \mathcal{I}_N}$ is cursed-plausible with $\sigma_N$.
    \item For every $n\in N$ and $I\in\mathcal{I}_n$, $\overline{\sigma}_n^I(I)=\sigma_n(I)$.
\end{enumerate}
Cursed plausibility pins down each player's conjectures about other players and nature, and self-conjectures are set to equal the strategy. Note that conjectures about future play are irrelevant in a simultaneous game. Thus, the assessment $(\sigma_N, (\overline{\sigma}^I)_{I \in \mathcal{I}_N})$ is cursed-plausible. Then, Clause $2$ of \Cref{thm:ICE_SCE} holds if and only if $\sigma_N$ is a local best-response to $(\overline{\sigma}^I)_{I \in \mathcal{I}_N}$, making the assessment $(\sigma_N, (\overline{\sigma}^I)_{I \in \mathcal{I}_N})$ locally rational.

Next we prove that $\sigma_N$ is a local best-response to $(\overline{\sigma}^I)_{I \in \mathcal{I}_N}$ if and only if $\sigma_N$ is an ICE. Take any $I \in \mathcal{I}_n$, with associated type $T_n \in \mathcal{T}_n$. Take any compatible $I' \in \mathcal{I}_m$ for $m \in N \setminus \{n\}$, and any $a \in A(I')$. By definition of a simultaneous Bayesian game, we have $F(I') = \cup \mathcal{I}_m$, and $\measure[\sigma](F(I') \mid I) = 1$. Moreover, the distribution of $m$'s actions conditional on $I$ is equal to the distribution of $m$'s actions conditional on type $T_n$. It follows that 
\begin{equation}\label{eq:SCE_equiv_players}
     \overline{\sigma}_m^I(a,I') = \measure[\sigma](\{(h,a): h \in F(I')\} \mid I \cap F(I')) = \frac{\sum_{\omega \in T_n} \sigma_\nature(\omega)\sigma_m(a,\omega)}{\sum_{\omega \in T_n} \sigma_\nature(\omega)} = \gamma_m(a_{m} \mid \sigma_m, T_n).
\end{equation}
Moreover, chance moves at a single information set containing the root,  $I_{\nature} = \{h_0\}$, so
\begin{equation}\label{eq:SCE_equiv_chance}
    \overline{\sigma}_\nature^I(\omega,I_{\nature}) = \frac{\sigma_\nature(\omega)}{\sum_{\omega' \in T_n} \sigma_\nature(\omega')}.
\end{equation}
From \eqref{eq:SCE_equiv_players} and \eqref{eq:SCE_equiv_chance}, it follows that $\sigma_N$ is a local-best response to $(\overline{\sigma}^I)_{I \in \mathcal{I}_N}$ if and only if $\sigma_N$ is an ICE. This completes the proof of \Cref{thm:ICE_SCE}.

\subsection{Proof of Theorem \ref{thm:first_price}}

A necessary condition for cursed equilibrium is that $b^{\mathrm{1P}}(x)$ must be in the set \begin{equation}\label{eq:CE_1P_obj}
    \argmax_\beta \int_{\underline{x}}^{{b}^{-1}(\beta)}({v}(x) - \beta) f_{Y_1}(\alpha) d\alpha .
    \end{equation}
    \Cref{thm:first_price} follows by rearranging the first-order condition of \eqref{eq:CE_1P_obj} and then using the same arguments as for Theorem 14 of \cite{milgrom1982theory}.

\subsection{Proof of Theorem \ref{thm:dutch}}

    The first-order condition of \eqref{eq:CCE_Dutch_obj} yields
    \begin{equation}\label{eq:dutch_FOC}
        \frac{\overline{v}(x,x)-b(x)f_{Y_1}(x \mid x)}{b'(x)} - F_{Y_1}(x \mid x) = 0.
    \end{equation}
    The ODE follows by rearranging \eqref{eq:dutch_FOC} and then using the same arguments as for Theorem 14 of \cite{milgrom1982theory}.
    
    Next we prove that the bidder is willing to wait at all prices that exceed ${b}(x)$. Take any $x' > x$. By $\overline{v}$ increasing in its first argument, we have
    \begin{equation}\label{eq:wait1}
        \overline{v}(x,x') < \overline{v}(x',x').
    \end{equation}
    By Lemma 1 of \cite{milgrom1982theory}, we have
    \begin{equation}\label{eq:wait2}
        \frac{f_{Y_1}(x' \mid x)}{F_{Y_1}(x' \mid x)} \leq \frac{f_{Y_1}(x' \mid x')}{F_{Y_1}(x' \mid x')}.
    \end{equation}
    Combining \eqref{eq:dutch_ODE}, \eqref{eq:wait1}, and \eqref{eq:wait2} yields
    \begin{equation}\label{eq:dutch_deviate}
        \frac{d}{d x'}b^{\mathrm{Dutch}}(x') > [\overline{v}(x,x') - b^{\mathrm{Dutch}}(x')]\frac{f_{Y_1}(x' \mid x)}{F_{Y_1}(x' \mid x)},
    \end{equation}
    which implies that \eqref{eq:CCE_Dutch_wait} is maximized at $\beta < b(x')$.

\subsection{Proof of Theorem \ref{thm:dutch_vs_1P}}

Theorems 2 and 4 of \cite{milgrom1982theory} yield the following lemma.
\begin{lemma}\label{lem:pair_affil}
    The random variables $(Y_1,X_1)$ are affiliated.
\end{lemma}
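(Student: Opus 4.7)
The plan is to deduce the affiliation of $(Y_1, X_1)$ from the joint affiliation of the full signal vector $X_N = (X_1,\ldots,X_{|N|})$ by combining two standard closure properties of affiliated random variables: affiliation of order statistics, and affiliation of marginal subvectors. These are precisely the closure properties delivered by Theorems 2 and 4 of \cite{milgrom1982theory} cited in the hypothesis, so the proof will be short once the right objects are introduced.

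First, I will apply Theorem 4 of Milgrom-Weber, which asserts that the order statistics of an affiliated random vector are themselves affiliated (and, more, that they are jointly affiliated with any variables not reordered). Concretely, let $X_{(1)} \geq X_{(2)} \geq \cdots \geq X_{(|N|-1)}$ denote the order statistics of $(X_2,\ldots,X_{|N|})$ in decreasing order, so that $X_{(1)} = Y_1$ by definition of $Y_1$. Applying this theorem to the affiliated vector $(X_1, X_2, \ldots, X_{|N|})$ (keeping $X_1$ unreordered and reordering $X_2,\ldots,X_{|N|}$) yields that the augmented vector $(X_1, X_{(1)}, X_{(2)}, \ldots, X_{(|N|-1)})$ is affiliated. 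I will then invoke Theorem 2 of Milgrom-Weber, which asserts that affiliation is inherited by any subvector of an affiliated random vector: projecting onto the first two coordinates gives that $(X_1, X_{(1)}) = (X_1, Y_1)$ is affiliated, which is the desired conclusion.

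The main subtlety is verifying that the order-statistic step genuinely produces a log-supermodular (MTP2) joint density, since affiliation is a lattice condition rather than a coordinatewise one. Because $f$ is symmetric in its last $|N|-1$ arguments, the joint density of $(X_1, X_{(1)}, \ldots, X_{(|N|-1)})$ equals $(|N|-1)!$ times $f$ restricted to the sublattice $\{(x_1, x_{(1)}, \ldots, x_{(|N|-1)}) : x_{(1)} \geq \cdots \geq x_{(|N|-1)}\}$ of $[\underline{x},\overline{x}]^{|N|}$; since the indicator of this sublattice is log-supermodular and $f$ is log-supermodular by hypothesis, the product is log-supermodular, which is the content of Milgrom-Weber's Theorem 4. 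With this observation in hand the cited theorems deliver the lemma with no further computation.
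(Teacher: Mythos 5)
Your proposal is correct and follows exactly the route the paper takes: the paper's entire proof is the one-line citation of Theorems 2 and 4 of \cite{milgrom1982theory}, which are precisely the order-statistic and subvector closure properties you invoke (your final paragraph just unpacks why the order-statistic step preserves log-supermodularity, which is the content of the cited theorem rather than an extra gap to fill). No substantive difference.
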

    We have, for all $x$,
    \begin{equation}\label{eq:sign_inner}
        \overline{v}(x,x) = \E\left[ V \mid X_1 = x, Y_1 \leq x \right] \leq \E\left[ V \mid X_1 = x, Y_1 \leq \overline{x} \right]  = r(x).
    \end{equation}
    The inequality in \eqref{eq:sign_inner} follows by \Cref{lem:pair_affil}, $w$ non-decreasing, and Theorem 5 of \cite{milgrom1982theory}. By \eqref{eq:first_price_boundary}, \eqref{eq:dutch_boundary}, and \eqref{eq:sign_inner}, we have $b^{\mathrm{Dutch}}(\underline{x}) \leq b^{\mathrm{1P}}(\underline{x})$. By \eqref{eq:first_price_ODE}, and \eqref{eq:dutch_ODE}, and \eqref{eq:sign_inner}, we have that $b^{\mathrm{Dutch}}(x) > b^{\mathrm{1P}}(x)$ implies that $\frac{d}{d x}b^{\mathrm{Dutch}}(x) \leq \frac{d}{d x}b^{\mathrm{1P}}(x)$. By Lemma 2 of \cite{milgrom1982theory}, we have \Cref{thm:dutch_vs_1P}.

\subsection{Proof of Theorem \ref{thm:silent_vs_2P}}

Now we prove \Cref{thm:silent_vs_2P}. We have, for all $x$,
    \begin{equation}
        b^{\mathrm{silent}}(x) = E[V \mid X_1 = x, Y_1 \geq x] \geq E[V \mid X_1 = x, Y_1 \geq \underline{x}] = b^{\mathrm{2P}}(x). 
    \end{equation}
    The inequality follows by \Cref{lem:pair_affil}, $w$ non-decreasing, and Theorem 5 of \cite{milgrom1982theory}

\newpage

\section{Illustrating the difference between SCE and CSE}\label{app:SCE_CSE_diff}

In this section, we contrast SCE with the solution concept proposed by \cite{FPP2023}. Their $\chi$-CSE is defined for multi-stage games with observed actions. This restriction rules out the possibility that one player is privately informed about another player's actions. In multi-stage games with observed actions, each player $n$ observes their own type $\theta_n \in \Theta_n$, and then play proceeds in stages $t = 1, \ldots, T$. At each stage $t$, players observe all actions from previous stages $(a_{n,t'})_{n \in N, t' < t}$, and then choose actions (potentially simultaneously). The set of feasible actions can depend on the public history, but not on the players' private types.



\cite{FPP2023} represent player $n$'s beliefs as a function from public histories $\mathcal{H}$ and $n$'s types $\Theta_n$ to distributions over opponent types, $\phi_n: \mathcal{H} \times \Theta_n \rightarrow \Delta(\Theta_{-n})$. A \textbf{belief system} $\phi \equiv (\phi_n)_{n \in N}$ specifies beliefs for each player.  An \textbf{assessment} $(\phi,\sigma_N)$ consists of a belief system $\phi \equiv (\phi_n)_{n \in N}$ and a behavioral strategy profile $\sigma_N$. Given an assessment $(\phi,\sigma_N)$, and a player $n$, let us define the \textbf{average opponent behavioral strategy profile} as
\begin{equation}
    \overline{\sigma}_{-n}\left(a_{-n}^t \mid h^{t-1}, \theta_n\right) \equiv \sum_{\theta_{-n}} \phi_n\left(\theta_{-n} \mid h^{t-1}, \theta_n\right) \sigma_{-n}\left(a_{-n}^t \mid h^{t-1}, \theta_{-n} \right).
\end{equation}
The key idea is that, given an assessment $(\phi,\sigma_N)$, each player responds as if at each stage, with probability $(1-\chi)$ all opponents play according to the true strategy profile $\sigma_{-n}$, and with probability $\chi$ all opponents play according to $\overline{\sigma}_{-n}$ (which is independent of their types). Let the corresponding probability of opponent actions $a_{-n}^t$ be denoted ${\sigma}_{-n}^\chi\left(a_{-n}^t \mid h^{t-1}, \theta_n, \theta_{-n}\right)$. By assumption, the randomization between $\sigma_{-n}$ and $\overline{\sigma}_{-n}$ is independent across stages. 

Let $\Psi^\chi$ be the set of assessments $(\phi,\sigma_N)$ such that $\sigma_N$ is fully mixed and each player's beliefs $\phi_n$ are derived using Bayes' rule under the assumption that opponents play according to $\sigma^{\chi}_{-n}$. An assessment $(\phi,\sigma_N)$ is a \textbf{$\chi$-cursed sequential equilibrium} if:
\begin{enumerate}
    \item $(\phi,\sigma_N)$ is in the closure of $\Psi^\chi$, and
    \item for each public history $h^t$, each player $n$, and each type $\theta_n$, the action distribution $\sigma_n(h^t,\theta_n)$ maximizes type $\theta_n$'s conditional expected payoff given belief $\phi_n$, when play continues according to $(\sigma_n,\sigma_{-n}^\chi)$.
\end{enumerate}

We illustrate the difference between SCE and $\chi$-CSE with examples. Consider \Cref{fig:diff_1}. This is a two-stage game with observed actions, with two types for player $2$, denoted $\theta_2^L$ and $\theta_2^R$. Player $1$ moves in stage $1$, and then player $2$ observes $1$'s action and moves in stage $2$. It is a strictly dominant strategy for player $2$ to play $l$ after $L$ and $r$ after $R$. It is a Bayesian Nash equilibrium for player $1$ to respond by playing $b$, and this equilibrium is unique.  However, under SCE, player $1$ neglects how $2$'s actions depend on $2$'s private information; in the unique SCE, player $2$ plays their dominant strategy and player $1$ plays $a$.  This strategy profile is also a $\chi$-CSE, for any $\chi \geq 2/3$.

\begin{figure}[h]
\centering
\begin{subfigure}[T]{.48\textwidth}
  \centering
\begin{tikzpicture}[font=\footnotesize,edge from parent/.style={draw,thick}]
    \tikzstyle{solid}=[circle,draw,inner sep=1.2,fill=black];
    \tikzstyle{hollow}=[circle,draw,inner sep=1.2]
    \tikzstyle{solid_red}=[circle,draw=red,inner sep=1.7,fill=red];
    \tikzstyle{solid_mag}=[circle,draw=magenta,inner sep=1.7,fill=red];
    \tikzstyle{level 1}=[level distance=10mm,sibling distance=32mm]
    \tikzstyle{level 2}=[level distance=12mm,sibling distance=16mm]
    \tikzstyle{level 3}=[level distance=12mm,sibling distance=8mm]
     \node(0)[hollow]{}
     child{node[solid]{}
        child{node[solid]{}
            child{node[below]{}
            edge from parent node[left]{$l$}
            }
            child{node[below]{}
            edge from parent node[right]{$r$}
            }
        edge from parent node[above left]{$a$}
        }
        child{node[solid]{}
            child{node[below]{}
            edge from parent node[left]{$l$}
            }
            child{node[below]{}
            edge from parent node[right]{$r$}
            }
        edge from parent node[above right]{$b$}
        }
     edge from parent node[above left]{$L[.5]$}
     }
     child{node[solid]{}
        child{node[solid]{}
            child{node[below]{}
            edge from parent node[left]{$l$}
            }
            child{node[below]{}
            edge from parent node[right]{$r$}
            }
        edge from parent node[above left]{$a$}
        }
        child{node[solid]{}
            child{node[below]{}
            edge from parent node[left]{$l$}
            }
            child{node[below]{}
            edge from parent node[right]{$r$}
            }
        edge from parent node[above right]{$b$}
        }
     edge from parent node[above right]{$R[.5]$}
     }
     ;

    \node[below]at(0-1-1-1){\stackon{$1$}{$-1$}};
    \node[below]at(0-1-1-2){\stackon{$0$}{$0$}};
    \node[below]at(0-1-2-1){\stackon{$1$}{$0$}};
    \node[below]at(0-1-2-2){\stackon{$0$}{$0$}};
    \node[below]at(0-2-1-1){\stackon{$0$}{$2$}};
    \node[below]at(0-2-1-2){\stackon{$1$}{$0$}};
    \node[below]at(0-2-2-1){\stackon{$0$}{$0$}};
    \node[below]at(0-2-2-2){\stackon{$1$}{$0$}};
   \draw[dashed](0-1)to(0-2);
     \node[above,yshift=2]at(0){nature};
     \node[above,yshift=2]at(0-1){1}; 
     \node[above,yshift=2]at(0-2){1}; 
     \node[above,yshift=2]at(0-1-1){2};
     \node[above,yshift=2]at(0-1-2){2};
     \node[above,yshift=2]at(0-2-1){2};
     \node[above,yshift=2]at(0-2-2){2};
\end{tikzpicture}
  \caption{Player $1$ goes first}
  \label{fig:diff_1}
\end{subfigure}
\begin{subfigure}[T]{.48\textwidth}
  \centering
\begin{tikzpicture}[font=\footnotesize,edge from parent/.style={draw,thick}]
    \tikzstyle{solid}=[circle,draw,inner sep=1.2,fill=black];
    \tikzstyle{hollow}=[circle,draw,inner sep=1.2]
    \tikzstyle{solid_red}=[circle,draw=red,inner sep=1.7,fill=red];
    \tikzstyle{solid_mag}=[circle,draw=magenta,inner sep=1.7,fill=red];
    \tikzstyle{level 1}=[level distance=10mm,sibling distance=32mm]
    \tikzstyle{level 2}=[level distance=12mm,sibling distance=16mm]
    \tikzstyle{level 3}=[level distance=12mm,sibling distance=8mm]
     \node(0)[hollow]{}
     child{node[solid]{}
        child{node[solid]{}
            child{node[below]{}
            edge from parent node[left]{$a$}
            }
            child{node[below]{}
            edge from parent node[right]{$b$}
            }
        edge from parent node[above left]{$l$}
        }
        child{node[solid]{}
            child{node[below]{}
            edge from parent node[left]{$a$}
            }
            child{node[below]{}
            edge from parent node[right]{$b$}
            }
        edge from parent node[above right]{$r$}
        }
     edge from parent node[above left]{$L[.5]$}
     }
     child{node[solid]{}
        child{node[solid]{}
            child{node[below]{}
            edge from parent node[left]{$a$}
            }
            child{node[below]{}
            edge from parent node[right]{$b$}
            }
        edge from parent node[above left]{$l$}
        }
        child{node[solid]{}
            child{node[below]{}
            edge from parent node[left]{$a$}
            }
            child{node[below]{}
            edge from parent node[right]{$b$}
            }
        edge from parent node[above right]{$r$}
        }
     edge from parent node[above right]{$R[.5]$}
     }
     ;

    \node[below]at(0-1-1-1){\stackon{$1$}{$-1$}};
    \node[below]at(0-1-1-2){\stackon{$1$}{$0$}};
    \node[below]at(0-1-2-1){\stackon{$0$}{$0$}};
    \node[below]at(0-1-2-2){\stackon{$0$}{$0$}};
    \node[below]at(0-2-1-1){\stackon{$0$}{$2$}};
    \node[below]at(0-2-1-2){\stackon{$0$}{$0$}};
    \node[below]at(0-2-2-1){\stackon{$1$}{$0$}};
    \node[below]at(0-2-2-2){\stackon{$1$}{$0$}};
    \draw[dashed,bend left=30](0-1-1)to(0-2-1);
    \draw[dashed,bend right=30](0-1-2)to(0-2-2);
     \node[above,yshift=2]at(0){nature};
     \node[above,yshift=2]at(0-1){2}; 
     \node[above,yshift=2]at(0-2){2}; 
     \node[above,yshift=2]at(0-1-1){1};
     \node[above,yshift=2]at(0-1-2){1};
     \node[above,yshift=2]at(0-2-1){1};
     \node[above,yshift=2]at(0-2-2){1};
\end{tikzpicture}
  \caption{Player $2$ goes first}
  \label{fig:diff_2}
\end{subfigure}%
\caption{Examples that distinguish SCE and $\chi$-CSE}\label{fig:diff_join_1_2}
\end{figure}

SCE allows for cursedness about endogenous information, whereas $\chi$-CSE does not. To illustrate this, suppose that we alter the game in \Cref{fig:diff_1}, replacing nature with a fictitious player $0$, who is indifferent between all terminal histories. There is an equivalent SCE in which player $0$ plays $(L, .5; R, .5)$, player $1$ plays $a$, and player $2$ plays their dominant strategy---in this example, player $1$ is cursed about endogenous information.

By contrast, $\chi$-CSE does not allow player $2$ to make mistakes in the game with a fictitious player. (This is still a two-stage game with observed actions, so $\chi$-CSE is still well-defined; in stage $1$, players $0$ and $1$ choose simultaneously, and in stage $2$ player $2$ observes the earlier actions and then chooses.)  Now every player's type set is singleton, so $\chi$-CSE reduces to sequential equilibrium for any $\chi \in [0,1]$.  Thus, $\chi$-CSE predicts that if player $0$ plays $L$ with positive probability, then player $1$ plays $b$. This illustrates how $\chi$-CSE does not allow for cursedness about endogenous information.

SCE distinguishes between hypothetical and observed events, whereas $\chi$-CSE does not. To illustrate this, consider the game in \Cref{fig:diff_2}. This is also a two-stage game, but the stages are reversed compared to \Cref{fig:diff_1}: Player $2$ moves in stage $1$, and then player $1$ observes $2$'s action and moves in stage $2$. As before, it is strictly dominant for player $2$ to play $l$ after $L$ and $r$ after $R$. Given this behavior, at player $1$'s information set $\{Ll, Rl\}$, any cursed-plausible conjecture specifies that nature has played $L$ with probability $1$. Thus, in every SCE player $1$ plays $b$ after player $2$ plays $l$; observing player $2$'s action prevents player $1$ from making a mistake.

By contrast, $\chi$-CSE does not predict that observing player $2$'s action helps player $1$ avoid mistakes. Under $\chi$-CSE player $1$ responds as if with probability $(1-\chi)$, player $2$ plays $l$ given type $\theta_2^L$ and $r$ given type $\theta_2^R$, and with probability $\chi$, player 2 plays $(l, .5; r, .5)$ independently of their type. Thus, if assessment $(\phi,\sigma_n)$ is a $\chi$-CSE, then player $1$ believes that after the public history in which $2$ plays $l$, player $2$ has type $\theta_2^L$ with probability $1 - \chi/2$. Playing $a$ is a best-response to this belief for $\chi \geq 2/3$. Thus, $\chi$-CSE makes the same prediction regardless of whether player $1$ chooses hypothetically (as in \Cref{fig:diff_1}) or after observing $2$'s action (as in \Cref{fig:diff_2}).

SCE players can be dynamically inconsistent, because they behave differently when choosing prospectively (\Cref{fig:diff_1}) compared to choosing retrospectively (\Cref{fig:diff_2}). \Cref{ex:groucho} illustrates these inconsistencies: Under SCE, Groucho joins the club, but resigns after learning that the club accepted him.

This dynamic inconsistency does not arise under $\chi$-CSE. If $\chi$ is high enough that Groucho accepts, then $\chi$ is too high for Groucho to resign. In any $\chi$-CSE of \Cref{ex:groucho}, Groucho believes at the first information set that with probability $\chi$ the club plays $(d, \frac{2}{3}; a, \frac{1}{3})$ independently of its type, and with probability $(1-\chi)$ the club plays $a$ in state $\omega_1$ and $d$ in state $\omega_2$. It is a best-response for Groucho to play $a$ only if $\chi \geq 3/4$. But at the last information set Groucho believes that the state is $\omega_1$ with probability $\chi \frac{1}{3} + (1-\chi) 1$, a weighted average of the fully cursed belief and the Bayesian belief. At that information set, it is a best-response for Groucho to resign only if $\chi \leq 3/8$. Thus, under $\chi$-CSE (for any parameter $\chi$) if Groucho accepts at the first information set, then he confirms at the last information set.

In summary, we see the key distinctions between SCE and $\chi$-CSE as follows: First, $\chi$-CSE predicts only cursedness about types, whereas SCE also predicts cursedness about endogenous information. Second, SCE predicts that players make different inferences from hypothetical events and observed events, whereas $\chi$-CSE predicts that they do not. Third, these different inferences lead players to be dynamically inconsistent under SCE, whereas similar inconsistencies do not arise under $\chi$-CSE.

\section{WPCE for infinite action sets}\label{sec:WPCE_infinite}

WPCE extends to games with infinite action sets. For this purpose, we must now be more explicit about measure. Suppose we have some sigma-algebra $\mathcal{Z}$ on the set of terminal histories $Z$, and the strategy profile $\sigma$ induces a probability measure $\measure[\sigma]$ on $\mathcal{Z}$.\footnote{We require that each utility function is measurable with respect to that sigma-algebra, and that the sigma-algebra includes the sets $\{z \in Z: I \prec \{z\}\}$ for each $I \in \mathcal{I}$ and the sets $\{z \in Z: F \prec \{z\}\}$ for each $F \in \mathcal{F}$.}  As before, we abuse notation to treat each set of histories $Q$ as equivalent to the set of terminal histories that succeed $Q$. Take two information sets $I$ and $I'$, such that $I \cap F(I') \in \mathcal{Z}$ and the conditional distribution over $I \cap F(I')$ exists, which we denote $\measure[\sigma, I \cap F(I')]$. Suppose we have some sigma-algebra $\mathcal{A}$ on the set of actions $A(I')$. Consider the function $\alpha: I \cap F(I') \rightarrow A(I')$, defined by $\alpha(z) \equiv \{a \in A(I'): \exists h \in F(I'): (h,a) \preceq z$\}. If $\alpha$ is a measurable function, then we can define the image measure of $\measure[\sigma, I \cap F(I')]$ under $\alpha$; that is, the probability measure $\tilde{\measure}_{A(I')}[\sigma, I \cap F(I')]$ on $\mathcal{A}$ such that
\begin{equation}\label{eq:conditional_measure}
    \tilde{\measure}_{A(I')}[\sigma, I \cap F(I')](Q) \equiv \measure[\sigma, I \cap F(I')](\alpha^{-1}(Q)) \text{ for each $Q \in \mathcal{A}$}.
\end{equation}

The definition of WPCE offered in \Cref{sec:def_existence} extends to infinite action sets, with just two modifications.

First, instead of requiring that the conjecture $\overline{\sigma}^I$ reaches information set $I$ with positive probability, we require that $\overline{\sigma}^I$ is such that the conditional distribution over $I$ exists. 

Second, when defining cursed-plausibility, we replace Clause 2 of \Cref{def:cursed_plausible} with the following requirement: For all $m \neq n$ and all $I' \in \mathcal{I}_m$, for $\sigma \equiv (\sigma_N,\sigma_\nature)$, we have
    \begin{equation}\label{eq:WPCE_other}
        \overline{\sigma}_m^I(I') = \tilde{\measure}_{A(I')}[\sigma, I \cap F(I')]
    \end{equation}
    if the conditional distribution defined by \eqref{eq:conditional_measure} exists.

There are many obstacles to extending sequential equilibrium to games with infinite action sets.\footnote{For a discussion of these obstacles and a proposed solution, see \cite{myerson2020perfect}.} SCE is built on the sequential equilibrium machinery, so it seems unlikely that SCE has a simple extension to such games.

\section{An example with $\mathcal{F} = \mathcal{I}$}\label{sec:FI_non_equiv}

Sequential equilibrium and sequential cursed equilibrium may not coincide, even if every coarse set contains exactly one information set (that is, $\mathcal{F} = \mathcal{I}$). Even in these cases, the cursed-plausible conjectures can neglect to condition on strategically-relevant hypothetical events. For an example, consider the extensive game in \Cref{fig:IF_example}. Players $1$ and $2$ are playing matching pennies against each other. This game has a unique Nash equilibrium, in which $1$ plays $\left(H\ .5,T\ .5\right)$, $2$ plays $\left(h\ .5,t\ .5\right)$ and $3$ plays $a$.

\begin{figure}[h]
    \centering
\begin{tikzpicture}[font=\footnotesize,edge from parent/.style={draw,thick}]
    \tikzstyle{solid}=[circle,draw,inner sep=1.2,fill=black];
    \tikzstyle{hollow}=[circle,draw,inner sep=1.2]
    \tikzstyle{solid_red}=[circle,draw=red,inner sep=1.7,fill=red];
    \tikzstyle{solid_mag}=[circle,draw=magenta,inner sep=1.7,fill=red];
    \tikzstyle{level 1}=[level distance=7mm,sibling distance=40mm]
    \tikzstyle{level 2}=[level distance=14mm,sibling distance=20mm]
    \tikzstyle{level 3}=[level distance=10mm,sibling distance=10mm]
     \node(0)[hollow]{}
     child{node[solid]{}
        child{node[below]{}
        edge from parent node[above left,yshift=-.5mm]{$h{\color{blue} [.5]}$}
        }
        child{node[solid]{}
            child{node[below]{}
            edge from parent node[left]{$a$}
            }
            child{node[below]{}
            edge from parent node[right]{$b$}
            }
        edge from parent node[above right,yshift=-.5mm]{$t{\color{blue} [.5]}$}
        }
     edge from parent node[above left]{$H{\color{blue} [.5]}$}
     }
     child{node[solid]{}
        child{node[solid]{}
            child{node[below]{}
            edge from parent node[left]{$a$}
            }
            child{node[below]{}
            edge from parent node[right]{$b$}
            }
        edge from parent node[above left,yshift=-.5mm]{$h{\color{blue} [.5]}$}
        }
        child{node[solid]{}
            child{node[below]{}
            edge from parent node[left]{$a$}
            }
            child{node[below]{}
            edge from parent node[right]{$b$}
            }
        edge from parent node[above right,yshift=-.5mm]{$t{\color{blue} [.5]}$}
        }
     edge from parent node[above right]{$T{\color{blue} [.5]}$}
     }
     ;
    \draw[blue, ->, line width=1.5pt, -{Triangle[length=8pt]},] (0-1-2) -- ($(0-1-2)!.97!(0-1-2-2)$);
    \draw[blue, ->, line width=1.5pt, -{Triangle[length=8pt]},] (0-2-1) -- ($(0-2-1)!.97!(0-2-1-2)$);
    \draw[blue, ->, line width=1.5pt, -{Triangle[length=8pt]},] (0-2-2) -- ($(0-2-2)!.97!(0-2-2-2)$);
    \node[below]at(0-1-1){\stackon{\stackon{$0$}{$0$}}{$1$}};
    \node[below]at(0-1-2-1){\stackon{\stackon{$0$}{$1$}}{$0$}};
    \node[below]at(0-1-2-2){\stackon{\stackon{$0$}{$1$}}{$0$}};
    \node[below]at(0-2-1-1){\stackon{\stackon{$12$}{$1$}}{$0$}};
    \node[below]at(0-2-1-2){\stackon{\stackon{$0$}{$1$}}{$0$}};
    \node[below]at(0-2-2-1){\stackon{\stackon{$0$}{$0$}}{$1$}};
    \node[below]at(0-2-2-2){\stackon{\stackon{$7$}{$0$}}{$1$}};
    \draw[dashed](0-1)to(0-2);
    \draw[dashed](0-1-2)to(0-2-2);
     \node[above,yshift=2]at(0){1};
     \node[above,yshift=2]at(0-1){2}; 
     \node[above,yshift=2]at(0-2){2}; 
     \node[above,yshift=2]at(0-1-2){3};
     \node[above,yshift=2]at(0-2-1){3};
     \node[above,yshift=2]at(0-2-2){3};
\end{tikzpicture}
    \caption{}
    \label{fig:IF_example}
\end{figure}

The strategy profile in \Cref{fig:IF_example} is not a Nash equilibrium, and hence not a sequential equilibrium; player $3$ can profitably deviate to play $a$. However, this strategy profile is a SCE. Conditional on player $3$'s information set $I_3$ and player $1$'s coarse set, $1$ plays $T$ with probability $\frac{2}{3}$, so the cursed-plausible conjecture at $I_3$ specifies that $1$ plays $\left(H\ \frac{1}{3},T\ \frac{2}{3}\right)$. Conditional on $I_3$ and player $2$'s coarse set, player $2$ plays $t$ with probability $\frac{2}{3}$, so the cursed-plausible conjecture at $I_3$ specifies that $2$ plays $\left(h\ \frac{1}{3},t\ \frac{2}{3}\right)$. Facing this conjecture, it is a local best response for $3$ to play $b$.

In this example, even though player $1$ and player $2$'s behavior are uncorrelated \textit{ex ante}, they are correlated conditional on $I_3$. At $I_3$, player $3$ is told ``either $1$ played $T$ or $2$ played $t$". Player $3$ correctly infers that player $2$ played $t$ with probability $\frac{2}{3}$. But what matters is that conditional on player $1$ playing $T$, player $2$ played $t$ with probability $\frac{1}{2}$, and player $3$ fails to condition on this hypothetical event. If we altered this example so that player $3$ observes that $1$ played $T$ (deleting the leftmost history in $I_3$), then player $3$ would respond optimally, and SCE and sequential equilibrium would predict the same behavior.

\section{Clock cursed equilibrium for the canonical English auction}\label{sec:CCE_canon_def}

A \textbf{clock cursed equilibrium of the canonical English auction} is a profile of bidding functions $({b}_k)_{k \in \{0,\ldots, |N|-2\}}$ such that for all $(y_{|N|-1},\ldots,y_{|N|-k})$ and for all $x > x' \geq y_{|N|-k}$, there exists $\beta > b_k(x' \mid y_{|N|-1},\ldots,y_{|N|-k})$ that maximizes
\begin{equation}\label{eq:CCE_canon_wait}
    \E \left[ (v_k(x,y_{|N|-1},\ldots,y_{|N|-k},x') - B)1_{B \leq \beta} \mid \xi, Y_{|N|-k-1} \geq x' \right],
\end{equation}
and for all $x \geq y_{|N|-k}$, the bid $b_k(x \mid y_{|N|-1},\ldots,y_{|N|-k})$ is in
\begin{equation}\label{eq:CCE_canon_obj}
    \argmax_{\beta} \E \left[ (v_k(x,y_{|N|-1},\ldots,y_{|N|-k},x) - B)1_{B \leq \beta} \mid \xi, Y_{|N|-k-1} \geq x \right],
\end{equation}
for the random variable $B \equiv b_{|N|-2}\left(Y_1 \mid Y_{|N|-1}, \ldots , Y_2\right)$, where $\xi$ denotes the event $X_1 = x, Y_{|N|-1} = y_{|N|-1},\ldots, Y_{|N|-k} = y_{|N|-k}$.

\section{Causal Sequential Cursed Equilibrium}\label{app:causal_SCE}

We define causal sequential cursed equilibrium. Essentially, at each information set, we keep track of one conjecture for each available action, enabling the active player to understand how choosing different actions affects the other players' behavior.

Given information set $I \in \mathcal{I}_n$, let $\sigma_n / (I,a)$ denote the strategy that is identical to $\sigma_n$ except that it plays $a$ for certain at $I$. Let $\mathcal{Q} \equiv \{(I,a): I \in \mathcal{I}_N, a \in A(I)\}$. Given $(I,a) \in \mathcal{Q}$, an \textbf{action-conjecture} $\sigma^{I,a}$ specifies play at all $I'$ such that $\{({h},{a}): {h} \in I\} \preceq I'$ or \textit{vice versa}. 

\begin{definition}\label{def:causal_cursed_plausible}
For player $n \in N$, information set $I \in \mathcal{I}_n$, and action $a \in A(I)$, the action-conjecture $\overline{\sigma}^{I,a}$ is \textbf{causal cursed-plausible} with strategy profile $\sigma_N$ if
\begin{enumerate}
    \item For every $I' \in \mathcal{I}_n$ that precedes $I$, the distribution $\overline{\sigma}^{I,a}_n(I')$ places probability $1$ on the action that does not rule out reaching $I$.
    \item For every player $m \neq n$, every information set $I' \in \mathcal{I}_m$, and every action $a' \in A(I')$, for $\sigma \equiv (\sigma_N,\sigma_\nature)$, we have
    \begin{equation}
        \overline{\sigma}_m^{I,a}(a',I') = \measure[\left(\sigma_{n}/(I,a),\sigma_{-n}\right)](\{(h,a'): h \in F(I')\} \mid I \cap F(I'))
    \end{equation}
    if $\measure[\sigma](I \cap F(I')) > 0$.
    \item For all $m \neq n$, the partial strategy  $\overline{\sigma}_m^{I,a}$ is coarse.
    \item $\overline{\sigma}_n^{I,a}$ plays action $a$ for certain at information set $I$.
\end{enumerate}
\end{definition}

\Cref{def:causal_cursed_plausible} modifies \Cref{def:cursed_plausible} in two ways. First, Clause 2 of \Cref{def:causal_cursed_plausible} uses the probability measure induced by $\left(\sigma_{n}/(I,a),\sigma_{-n}\right)$ instead of that induced by $\sigma$. Second, Clause 4 of \Cref{def:causal_cursed_plausible} requires the action-conjecture to place probability $1$ on the associated action. 

A \textbf{causal system of conjectures} is a collection $(\sigma^{I,a})_{(I,a) \in \mathcal{Q}}$, and a \textbf{causal assessment} is a tuple $(\sigma_N, (\overline{\sigma}^{I,a})_{(I,a) \in \mathcal{Q}})$ consisting of a strategy profile and a causal system of conjectures. A causal assessment is causal cursed-plausible if the requirements of \Cref{def:causal_cursed_plausible} hold for every $n \in N$, $I \in \mathcal{I}_n$, and $a \in A(I)$. A causal assessment is \textbf{causal cursed-consistent} if it is the limit of a sequence of causal cursed-plausible assessments, each with a fully mixed strategy profile.

A causal assessment $(\sigma_N, (\overline{\sigma}^{I,a})_{(I,a) \in \mathcal{Q}})$ is \textbf{locally rational} if for all $n \in N$, all $I \in \mathcal{I}_n$, and all $a \in A(I)$, if $\sigma_n(I,a) > 0$ then
\begin{equation}
    \sum_{h \in I}  \rho\left(h, \overline{\sigma}^{I,{a}},I\right) u_n\left(h, \overline{\sigma}^{I,a}\right)  = \max_{\hat{a} \in A(I)}\max_{\hat{\sigma}_n} \left\{ \sum_{h \in I}  \rho\left(h, \overline{\sigma}^{I,\hat{a}},I\right) u_n\left(h, \hat{\sigma}_n / (I,\hat{a}), \overline{\sigma}_{-n}^{I,\hat{a}}\right) \right\}.
\end{equation}

A \textbf{causal sequential cursed equilibrium} is a causal assessment that is causal cursed-consistent and locally rational. We omit the existence proof, since it is a straightforward adaptation of the proof of \Cref{thm:SCE_existence}.

\end{document}